\theoremstyle{plain}
\newtheorem{theorem}{Theorem}
\newtheorem{lemma}{Lemma}
\newtheorem{corollary}{Corollary}
\newtheorem{proposition}{Proposition}
\newtheorem{question}{Question} 
\theoremstyle{definition}
\newenvironment{cpf}{\begin{trivlist} \item[] {\em Proof of Claim.}}{\hspace*{\stretch{1}} \end{trivlist}}
\newcommand{\quotes}[1]{``#1''}
\def\dist{\mathop{\rm dist}}
\def\conv{\mathop{\rm conv}}
\DeclareMathOperator{\ar}{\rho}
\DeclareMathOperator{\disk}{Disk}
\def\Bscr{{\cal B}}  \def\Pscr{{\cal P}} 
\def\Cscr{{\cal C}}   
  \def\Sscr{{\cal S}} 
\def\Fscr{{\cal F}}
\newcommand{\Bcirc}{\overline{B}_2}
\title{\MakeUppercase{Packing, Hitting, and Colouring Squares}}
\author{
Marco Caoduro %
  \thanks{\affil{Sauder School of Business, The University of British Columbia, Vancouver, Canada}, 
          \email{marco.caoduro@ubc.ca}}\,
and Andr\'as Seb\H{o} %
  \thanks{\affil{CNRS, Laboratoire G-SCOP, Univ.~Grenoble Alpes, Grenoble, France},
          \email{andras.sebo@cnrs.fr}}\
}
\begin{document}
\maketitle

\pagestyle{plain}

\begin{abstract}
Given a finite family of squares in the plane,  the \emph{packing problem} asks for the maximum number $\nu$ of pairwise disjoint squares among them, while the \emph{hitting problem} for the minimum number $\tau$  of points hitting all of them. Clearly, $\tau \ge \nu$. Both problems are known to be NP-hard, even for families of axis-parallel unit squares.

The main results of this work provide the first non-trivial bounds for the $\tau / \nu$ ratio for not necessarily axis-parallel squares.
We establish an upper bound of $6$ for unit squares and $10$ for squares of varying sizes. The worst ratios we can provide with examples are $3$ and $4$, respectively.
For comparison, in the axis-parallel case, the supremum of the considered ratio is in the interval $[\frac{3}{2},2]$ for unit squares and $[\frac{3}{2},4]$ for squares of varying sizes.
The methods we introduced for the $\tau/\nu$ ratio can also be used to relate the chromatic number $\chi$ and clique number $\omega$ of squares by bounding the $\chi/\omega$  ratio by $6$ for unit squares and $9$ for squares of varying sizes.

The $\tau / \nu$ and $\chi/\omega$ ratios have already been bounded before by a constant for ``fat'' objects,  the fattest and simplest of which are disks and squares.
However, while disks have received significant attention, specific bounds for squares have remained essentially unexplored. This work intends to fill this gap.
\end{abstract}


\section{Introduction}\label{sec:intro}
Let $\Fscr$ be a finite family of convex sets in the plane. A {\em packing} in $\Fscr$ is a subfamily of pairwise disjoint sets in $\Fscr$, a {\em hitting set} of $\Fscr$ is a set of points which has a non-empty intersection with each $F\in\Fscr$,  a {\em colouring} of $\Fscr$ is a partition of $\Fscr$ into packings, and a {\em clique} in $\Fscr$ is a pairwise intersecting subfamily in  $\Fscr$.
The maximum size of a packing in $\Fscr$, the {\em packing number},  is denoted by $\nu(\Fscr)$ and the minimum size of a hitting set of $\Fscr$, the {\em hitting number},  by $\tau (\Fscr)$. 
The maximum size of a clique in $\Fscr$, the {\em clique number}, is denoted by $\omega(\Fscr)$ and the minimum size of a colouring of $\Fscr$, the {\em chromatic number}, by $\chi(\Fscr)$. 
For $p\in \mathbb{R}^2$ the number of $F\in\Fscr$ containing $p$ is the \emph{degree} of $p$ in $\Fscr$ and $\Delta(\Fscr)$ denotes the maximum degree.  
Clearly, $\nu(\Fscr) \le \tau(\Fscr)$ and  $\Delta(\Fscr) \leq \omega(\Fscr) \le \chi(\Fscr)$.
A family of convex sets is said to have the {\em Helly property} if, for any clique, there exists a point contained in all of them. If a family $\Fscr$ has the Helly property, then $\omega(\Fscr)=\Delta(\Fscr)$.  Axis-parallel rectangles (and boxes in arbitrary dimensions) have the Helly property. On the other hand, the property does not extend to general rectangles, not even to unit squares.  Figure \ref{fig:nu1tau2} shows three unit squares with $\Delta=2$ and $\omega=3$.

\begin{figure}
    \centering
    \includegraphics[scale=0.5]{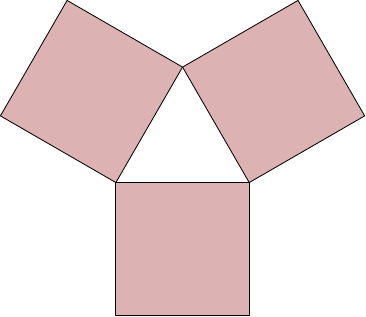}
    \caption{Three pairwise intersecting unit squares.}
    \label{fig:nu1tau2}
\end{figure}
The problems of determining the packing and hitting numbers of a given family of subsets in the plane are proved to be NP-hard already in the particular case of axis-parallel unit squares \cite{1981_Fowler}.
However, there are \emph{polynomial-time approximation schemes} (PTAS) 
to compute these parameters for arbitrary families of squares, even without the constraint of being axis-parallel \cite{2003_Chan}.

Imai and Asano \cite{1981_Imai} noticed that the clique number of a family of axis-parallel rectangles (equal to the maximum degree of the family) can be computed in linear time, but determining the chromatic number for such a family is NP-hard. Even though their hardness reduction does not seem to apply to axis-parallel squares directly, a simple modification of a result on unit disks of Clark, Colbourn, and Johnson's \cite[Theorem 2.1]{1990_Clark},  shows that this problem is NP-hard even for the special case of axis-parallel unit squares 
(see Appendix~\ref{sub:unitsquares}).
In this paper, our foci are the $\tau/\nu$ and $\chi/\omega$ ratios and not these complexity issues, which, however, are important since they make the existence of min-max theorems and exact algorithms unlikely unless P=NP.

It is surprising that the most natural bounds relating the packing and hitting numbers are wide open even for some of the simplest geometric objects. Our goal is to decrease this gap for squares.
We bound $\tau$ from above with a linear function of $\nu$ and deduce similar bounds for $\omega$ and $\chi$ as well.
These bounds are realized by using novel arguments that build on elementary geometric tools. The phenomenon is reminiscent to the \quotes{rounding} idea of integer programming. Indeed, it consists in ``filling holes for free'' in reformulations as ``covering'' problems, but instead of integrality,  the ``discrete jump" is due to the forcing rules of Euclidean geometry.
Using less refined and more common techniques, we also extend the linear bounds to a family of {\em similar convex sets}:  two convex sets are called {\em similar} if they arise from one another by translations, rotations, and homotheties. 
Similarity is an equivalence relation, and squares form an equivalence class. Better results can be proved if only translations and rotations are allowed, including the special case of unit squares. Excluding rotations, we get another special case that includes axis-parallel squares, for which even better results hold (see Table~\ref{tab:tau_bounds}).

Note that while linear bounds for various convex sets are known -- either through Pach's work~\cite{1980_Pach} on the $\chi/\nu$ ratio or via an extension of his method (Theorem~\ref{thm:hit_convexsets}) for the $\tau/\nu$ ratio -- our main achievement lies in providing the first specific tools and non-trivial bounds for squares.
This fills a major gap in the subject: although considerable efforts have been made to improve the upper and lower bounds for these ratios for disks~\cite{1990_Clark, 2004_Kim,1955_Hadwiger}, there seem to be no results specific to squares besides the statement of the relatively easy bounds for axis-parallel squares~\cite{2006_Ahlswede}.

\begin{table}
\centering
\begin{tabular}{c|c|c|c|c}
 & \textit{convex set} & \textit{centrally symmetric} & \textit{disk} & \textit{square} \\ \hline
\textit{translation} & $\tau \leq 6\nu$   &  $\tau \leq 6\nu$  & $\tau \leq 4\nu - 1$ & $\tau \leq 2\nu - 1$  \\ 
& \cite{2011_Dumitrescu}  &   \cite{2011_Dumitrescu}  &  \cite{2011_Dumitrescu} &  \cite{2006_Ahlswede} \\
\hline
\textit{translation +  } & $\tau \leq 16\nu$  &  $\tau \leq 7\nu$  & $\tau \leq 7\nu - 3$  & $\tau \leq 4\nu - 3$  \\
\textit{homothety} & \cite{2006_Kim} &  \cite{2011_Dumitrescu} & \cite{2011_Dumitrescu} & \cite{2006_Ahlswede} \\ \hline
\textit{translation +} & $\tau \leq 18 \ar ^2 \nu$ & $\tau \leq 4 \lceil  \ar    \rceil^2 \nu$ & $\tau \leq 4\nu - 1$ &  $\tau \leq 6 \nu$ \\ 
\textit{rotation} & Cor. \ref{cor:trans_rot_homothetic} & Cor. \ref{cor:trans_rot_homothetic}  & \cite{2011_Dumitrescu} & Thm. \ref{thm:hitting}  \\ \hline
\textit{translation +} & $\tau \leq 18 \ar ^2 \nu$ &  $\tau \leq 8\lceil  \ar    \rceil^2 \nu$ & $\tau \leq 7\nu - 3$ & $\tau \leq 10 \nu$ \\ 
\textit{homothety +} & Cor. \ref{cor:trans_rot_homothetic} &  Cor. \ref{cor:trans_rot_homothetic} & \cite{2011_Dumitrescu} & Thm. \ref{thm:hitting}  \\ 
\textit{rotation} &  &  &  &  \\
\end{tabular}
\caption{$\tau/\nu$ bounds for a family obtained by translations, rotations, or homotheties of a convex set $A$. The {\em slimness} $\ar(A)$ of  $A$  is defined here as  $R/r$ where $R$ is the smallest radius of a disk containing $A$ and $r$ is the largest radius of a disk contained in $A$ (see Section~\ref{sec:cover}).}
\label{tab:tau_bounds}
\end{table}

The numbers $\tau$, $\nu$, $\chi$, $\omega$ can be often interpreted and used as parameters of the intersection graph:

The {\em intersection graph} of a family $\Fscr$ of sets, denoted by $G(\Fscr)$, is the graph having $\Fscr$ as vertex set and an edge between two vertices if and only if the corresponding sets intersect.
Our notations  and terminology for $\Fscr$  follow the usual graph theory notations for  $G(\Fscr)$, that is, $\chi(\Fscr)=\chi(G(\Fscr))$ and $\omega(\Fscr)= \omega(G(\Fscr))$. Of course, the maximum degree $\Delta(G(\Fscr))$ of the \emph{graph} $G(\Fscr)$ is not the same as the maximum degree $\Delta(\Fscr)$ of the \emph{family} $\Fscr$.

A {\em neighbour} of  $F\in\Fscr$  is a set $F'\in\ \setminus \{F\}$ such that $F\cap F'\ne\emptyset$. 
The \emph{neighbourhood} $N(F)$ of $F$ consists of all its neighbours, while the {\em closed neighbourhood} is $N[F]:=N(F) \cup \{F\}$.
Given a positive integer $D\in \mathbb{N}$, a graph is called {\em $D$-degenerate} if each of its subgraphs has a vertex of degree at most $D$.  
We call a family $\Fscr$ of sets {\em $D$-degenerate} if every non-empty subfamily $\Fscr'\subseteq \Fscr$  contains a set $F'\in \Fscr'$ which intersects at most $D$ other sets of $\Fscr'$, that is, if its intersection graph is $D$-degenerate.
All vertices of a $D$-degenerate graph can be deleted by sequentially deleting a vertex of degree at most $D$; colouring the vertices in the reverse order, we see that the graph is $D+1$ colourable. 
Consequently,{\em a $D$-degenerate family has a $(D+1)$-colouring}.
Although this is quite a rough, greedy way of colouring, degeneracy is a frequently used method, often providing the best-known way to colour geometric intersection graphs \cite{1980_Pach, 2003_Perepelitsa, 2004_Kim}.

Clearly, \emph{deleting $N[F]$ $(F\in\Fscr)$,  the maximum size of a packing in $\Fscr$  decreases by at least one, and the hitting number decreases by at most  $\tau(N[F])$.}
This provides an inductive argument-based bounding of the $\tau/\nu$ ratio, leading to a linear upper bound whenever $\tau(N[F])$ can be upper bounded by a constant (see Lemma~\ref{lem::induction_idea}).
In close analogy with  $D$-degeneracy for colouring, we say that a family $\Fscr$ is {\em hitting-$k$-degenerate} ($k \in \mathbb{N}$) if for every $\Fscr' \subseteq \Fscr$  there exists $F'\in \Fscr'$ such that $\tau(N[F']) \leq k$. 
Hitting-degeneracy was successfully employed in Kim, Nakprasit, Pelsmajer, and Skokan~\cite{2006_Kim} and Dumitrescu and Jiang \cite{2011_Dumitrescu} when dealing with translates of convex bodies (referred to as \quotes{greedy decomposition}) and will also be our main framework for bounding the $\tau/\nu$ ratio. This framework is far from being optimal, but it is the unique one for most geometric intersection problems.
The following lemma gives a bound on $\tau(N[F])$ for squares. Its proof, presented in Section~\ref{sec:proofs}, is based on novel geometric ideas:

\begin{lemma} \label{lem:hitting_neighbours}
    Let $\Cscr$ be a family of unit squares.
    The neighbours of any square $C \in \Cscr$ can be hit by $10$ points.
    Moreover, if the centre of $C$ is left-most among all centres in $\Cscr$, $6$ points suffice.
\end{lemma}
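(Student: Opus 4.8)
The plan is to normalise and recast the claim as a covering problem, then use the rigidity of squares to keep the covering small. After a rotation and translation we may assume $C=[-\tfrac12,\tfrac12]^2$, so a neighbour of $C$ is a unit square $C'$ of \emph{arbitrary} orientation with $C'\cap C\neq\emptyset$; write $c'$ for its centre. Two elementary facts are the starting point. First, $C'$ contains the disk of radius $\tfrac12$ centred at $c'$; hence any point $q$ with $|q-c'|\le\tfrac12$ lies in $C'$. Second, since $C'$ has circumradius $\tfrac{\sqrt2}{2}$ and meets $C$, its centre lies in the rounded square $\Zscr:=C\oplus\{z:|z|\le\tfrac{\sqrt2}{2}\}$, of area $1+2\sqrt2+\tfrac\pi2\approx5.4$. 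It therefore suffices to produce $10$ points whose radius-$\tfrac12$ disks cover $\Zscr$ (and $6$ points covering the relevant part of $\Zscr$ in the left-most case): the centre of each disk lies in every neighbour whose centre falls in it.

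Using only these two facts is wasteful, because it ignores the shape of $C'$. The extra geometric input --- the ``filling holes for free'' phenomenon mentioned in the introduction --- is that the orientation of $C'$ is \emph{forced} by the position of $c'$: the closer $c'$ is to the boundary of $\Zscr$, the more nearly a side of $C'$ must face $C$, and on $\partial\Zscr$ a vertex of $C'$ is pinned at the nearest point of $C$, so $C'$ then contains that point (or a point of $C$ close to it). Accordingly I would partition $\Zscr$ into a constant number of cells, each small enough that one well-chosen point lies in every neighbour with centre in it: for cells near $C$ the point comes from the inscribed disk, while for cells near $\partial\Zscr$ it is placed on or just outside $\partial C$, using the forced orientation. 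Neighbours that meet $C$ only in a thin sliver through one side are the delicate sub-case and are exactly the ones handled by the points just outside $\partial C$. The dihedral symmetry of $C$ leaves only one ``side'' region and one ``corner'' region to examine, and tallying the points over these should give $10$.

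In the left-most case every neighbour has $c'_x\ge0$, so only $\Zscr\cap\{x\ge0\}$ --- of area $\approx2.7$ --- can contain a centre, and, more importantly, no neighbour pokes into $C$ from the left; this deletes an entire block of the points used above, and an $x$-axis reflection argument on the remainder brings the count down to $6$.

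The step I expect to be the main obstacle is making the forcing quantitative: turning ``$c'$ near $\partial\Zscr$ $\Rightarrow$ the orientation of $C'$ is almost determined $\Rightarrow$ $C'$ contains a prescribed point of $C$'' into explicit inequalities sharp enough that the partition of $\Zscr$ needs only $10$ (resp.\ $6$) cells --- the area estimate shows the margin is small. Treating the thin slivers near $\partial C$ uniformly, and verifying that the radius-$\tfrac12$ disks chosen for the cells near $C$ actually cover them, are the remaining technical points.
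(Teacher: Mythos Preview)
Your sketch for the $10$-point bound is in the right spirit and close to the paper's approach. The paper also places one point at $p_0=c(C)$ and nine more on the circle $\Bcirc(p_0,1)$, and the ``filling holes for free'' phenomenon you allude to is made precise by two lemmas: a \emph{triangular patch} (any unit square whose centre lies in a triangle of diameter $\le 1$ contains one of its vertices) handles the uncovered pockets of $\conv\{p_1,\dots,p_9\}\setminus\bigcup_i B_2(p_i,\tfrac12)$, and a Thales-based lemma handles squares with centre outside the $9$-gon (if the line through a side $[p_i,p_{i+1}]$ separates $c(S)$ from a point of $S\cap C$, then $S$ contains $p_i$ or $p_{i+1}$, provided the half-side length is at most the distance from the midpoint to $C$). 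Your ``forced orientation near $\partial\Zscr$'' heuristic is a rougher version of this second lemma; making it sharp is exactly where Thales enters. One difference worth noting: the paper does \emph{not} exploit the $D_4$ symmetry of $C$ (nine outer points cannot respect it), so your plan of reducing to one side cell and one corner cell would lead to a different, not obviously smaller, point count.

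For the $6$-point bound there is a genuine gap in your normalisation. You rotate so that $C=[-\tfrac12,\tfrac12]^2$ and then assert that ``left-most'' means $c'_x\ge 0$; these two normalisations are incompatible unless $C$ happened to be axis-parallel to begin with. The left-most direction is fixed by the ambient family, not by $C$, so after your rotation the constraint is only $\langle c',v\rangle\ge 0$ for some unit vector $v$ determined by the original orientation of $C$, and your half-$\Zscr$ picture collapses. The paper confronts this directly: it does \emph{not} rotate $C$, keeps the half-plane $c'_x\ge 0$, and treats $\angle(C)\in[0,\tfrac\pi2)$ as a parameter. The six points are $p_0=(t,0)$, $p_1=(t,1)$, $p_3=(t+1,0)$, $p_5=(t,-1)$ for a carefully computed $t=\tfrac14\sqrt{4\sqrt2-5}$, together with two further points $p_2,p_4\in\Bcirc(p_0,1)$ whose positions \emph{depend on} $\angle(C)$ (two regimes, split at $\angle(C)=\tfrac\pi4$). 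A separate ``circular patch'' lemma, beyond the triangular and Thales patches, is needed to cover the thin strip $0\le x\le t$. This angle-dependence and the extra patch are the substantive complications your plan does not anticipate; ``delete the left block and reflect'' is not enough.
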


While the induction by degeneracy is a kind of simple greedy framework,  bounding $\tau(N[F])$ is a real challenge. 
The first part of Lemma~\ref{lem:hitting_neighbours} can be extended to squares of arbitrary size by selecting a square with minimal size and applying homothety to each of its neighbours (\quotes{local homothety,} see Section \ref{sec:cover}),  allowing  us to conclude: 

 \begin{theorem}\label{thm:hitting}
 If $\Cscr$ is a family of squares, $\tau(\Cscr)\le 10\nu(\Cscr)$.
 Moreover, if the squares have equal size, $\tau(\Cscr)\le 6\nu(\Cscr)$. 
\end{theorem}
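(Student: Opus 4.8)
The plan is to combine the hitting-degeneracy framework described above with Lemma~\ref{lem:hitting_neighbours} via an inductive argument on $\nu(\Cscr)$ (equivalently, on $|\Cscr|$). The base case $\nu(\Cscr)=0$ forces $\Cscr=\emptyset$ and there is nothing to prove. For the inductive step, first consider the equal-size case. Pick a square $C\in\Cscr$ whose centre is left-most among all centres (breaking ties arbitrarily). By the \quotes{moreover} part of Lemma~\ref{lem:hitting_neighbours}, the neighbours of $C$ can be hit by $6$ points, so $\tau(N[C])\le 6$: indeed, adding one point inside $C$ itself hits $C$, but in fact the $6$ points hitting the neighbours can be chosen to already include a point of $C$, or one simply observes $\tau(N[C])\le 1+6$ and then notes the stronger statement $\tau(N[C])\le 6$ holds because $C$ contains one of the six hitting points — I would phrase this so that the clean bound $\tau(N[C])\le 6$ is what gets used. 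Now delete $N[C]$ from $\Cscr$ to obtain $\Cscr'$. As noted in the excerpt, $\nu(\Cscr')\le \nu(\Cscr)-1$, and any hitting set of $\Cscr'$ together with the $\le 6$ points hitting $N[C]$ is a hitting set of $\Cscr$, so $\tau(\Cscr)\le \tau(\Cscr')+6$. By induction $\tau(\Cscr')\le 6\nu(\Cscr')\le 6(\nu(\Cscr)-1)$, giving $\tau(\Cscr)\le 6\nu(\Cscr)$.

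For the general case of squares of varying sizes, the same skeleton applies but the choice of $C$ and the bound on $\tau(N[C])$ change. Here I would select $C\in\Cscr$ of \emph{minimum} side length. The key point — the \quotes{local homothety} trick alluded to in the excerpt and developed in Section~\ref{sec:cover} — is that for each neighbour $C'$ of $C$, since $C'$ is at least as large as $C$ and meets $C$, one can shrink $C'$ by a homothety centred appropriately so that it becomes a unit-scaled copy still meeting the (rescaled) $C$ and still contained in the region it originally occupied near $C$; more precisely, after rescaling everything so $C$ is a unit square, each $C'\cap(\text{a bounded neighbourhood of }C)$ contains a translate/rotate of a unit square meeting $C$. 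This reduces hitting $N(C)$ to hitting the neighbours of a unit square in a family of unit squares, so Lemma~\ref{lem:hitting_neighbours} (first part) yields that $N(C)$, hence $N[C]$, is hit by $10$ points, i.e. $\tau(N[C])\le 10$. Then the identical induction — delete $N[C]$, apply $\nu(\Cscr\setminus N[C])\le\nu(\Cscr)-1$ and the inductive hypothesis $\tau\le 10\nu$ — closes the argument.

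The routine parts are the two inductions, which are essentially Lemma~\ref{lem::induction_idea} applied with $k=6$ and $k=10$ respectively; I would either cite that lemma directly or spell out the two-line computation. The genuinely delicate point — and the one I would state carefully rather than wave at — is the local homothety reduction in the varying-size case: one must verify that replacing each larger neighbour $C'$ by a suitable unit-size sub-square preserves the property of intersecting $C$ and does not enlarge the set of points needed, so that a hitting set for the derived unit-square instance lifts back to a hitting set of the original $N(C)$. This hinges on $C$ being a \emph{smallest} square and on the precise geometry of how a square of side $\ge 1$ meeting a unit square must contain a unit sub-square still meeting it; I expect this to be the main obstacle and the place where Section~\ref{sec:cover}'s machinery is actually invoked. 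Everything else — the left-most-centre tie-breaking, the accounting of $\nu$ dropping by at least one and $\tau$ by at most $\tau(N[C])$ — is bookkeeping that follows the pattern already sketched in the introduction.
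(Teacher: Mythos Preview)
Your proposal is correct and follows essentially the same approach as the paper: choose a smallest square (respectively, one with left-most centre in the unit case), apply local homothety to reduce to unit neighbours, invoke Lemma~\ref{lem:hitting_neighbours} to bound $\tau(N[C])$ by $10$ (respectively $6$), and conclude via the hitting-degeneracy induction of Lemma~\ref{lem::induction_idea}. The paper carries out the homothety step exactly as you anticipate---fix $p\in Q\cap C$ and take a unit sub-square of $Q$ containing $p$---and your observation that one of the hitting points already lies in $C$ (so $\tau(N[C])\le 6$ rather than $7$) is also how the paper's construction works, since $p_0$ is always inside $C$.
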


Finding lower bounds for the ratio $\tau/\nu$ is also challenging. The only known lower bound for families of axis-parallel squares is $3/2$, achieved by a family of unit squares whose intersection graph is a vertex disjoint union of $5$-cycles; no better lower bound is known for squares of different sizes. 
If arbitrary rotations of the squares are allowed, the  $\tau/\nu$ ratio for unit squares may even be $3$, and $4$ if squares of different sizes are allowed: 

\begin{theorem}\label{thm:lower} There exists a family of $9$ pairwise intersecting unit squares that cannot be hit with less than $3$ points.
Moreover, there exists a family of $13$ pairwise intersecting squares that cannot be hit with less than $4$ points.
\end{theorem}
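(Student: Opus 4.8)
The plan is to reduce the piercing lower bound to a combinatorial statement via Helly's theorem and then to produce explicit configurations realising it. Call three sets of a family a \emph{bad triple} if they pairwise intersect but have no common point (the three squares of Figure~\ref{fig:nu1tau2} are the prototype). For a family $\Cscr$ of pairwise intersecting convex sets in the plane, let $H(\Cscr)$ be the $3$-uniform hypergraph on vertex set $\Cscr$ whose edges are its bad triples. I claim $\tau(\Cscr)=\chi(H(\Cscr))$. Indeed, if $p_1,\dots,p_t$ pierce $\Cscr$, then assigning each $C\in\Cscr$ to some $p_i\in C$ produces $t$ colour classes each with a common point, hence each free of bad triples, i.e.\ a proper $t$-colouring of $H(\Cscr)$; conversely, any proper $t$-colouring splits $\Cscr$ into $t$ subfamilies without bad triples, each of which by Helly's theorem in $\mathbb{R}^2$ has a common point, and these $t$ points pierce $\Cscr$. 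Since every colour class of a proper colouring is an independent set of $H(\Cscr)$, this yields in particular $\tau(\Cscr)\ge\lceil|\Cscr|/\alpha(H(\Cscr))\rceil$, where $\alpha$ denotes the independence number.

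Thus it suffices to construct (i) a family of $9$ pairwise intersecting unit squares whose bad triples have independence number at most $4$ --- then $\tau\ge\lceil 9/4\rceil=3$ --- and (ii) a family of $13$ pairwise intersecting squares with the same property --- then $\tau\ge\lceil 13/4\rceil=4$. Pairwise intersection is the easy requirement: for unit squares it is automatic once all centres lie in a disk of radius less than $\tfrac12$ (each unit square contains the radius-$\tfrac12$ disk about its centre, and two such disks with centres at distance below $1$ meet), and for squares of varying size it is arranged by a suitable choice of positions and sizes. Creating the bad triples is where the geometry enters, and the independence bound is the combinatorial goal: for (i) I would place the nine unit squares with threefold rotational symmetry --- for instance on two or three concentric circles of radius $<\tfrac12$ with carefully tuned rotation angles --- so that the bad triples contain all $12$ lines of the affine plane $AG(2,3)$ on the nine squares; since a set of points of $AG(2,3)$ with no three collinear has at most $4$ points, every $5$ of the squares then span a line, hence a bad triple, which is exactly $\alpha\le 4$. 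Checking that each designated triple genuinely has empty common intersection while all pairs meet is a finite, elementary verification, best presented with a figure and explicit coordinates.

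For (ii) the same strategy applies, and here the freedom to use different sizes makes bad triples much easier to force: shrinking three squares into a tiny rosette inside the common region of the remaining ones replaces a single ``slot'' by three squares forming a bad triple without destroying any intersection. I would build a $13$-square family along these lines --- for instance a central square together with four symmetric triples, or an augmentation of the nine-square configuration --- again so that no $5$ of the $13$ squares avoid every bad triple (equivalently, so that $H(\Cscr)$ is not $3$-colourable), which gives $\tau\ge 4$.

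The main obstacle is precisely this realisation step: one needs an arrangement that is at once ``spread out'' enough that bad triples are so plentiful that every five squares contain one, and ``concentrated'' enough that all pairs still intersect, and then one must certify both. The tension is harder to resolve for the $13$-square example, where one must additionally pin down which non-$3$-colourable $3$-uniform hypergraph on $13$ vertices to aim for and confirm it is realisable by squares; I expect the bulk of the argument, and all of the case analysis, to sit here rather than in the Helly reduction.
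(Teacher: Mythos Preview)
Your Helly-based identity $\tau(\Cscr)=\chi(H(\Cscr))$ is correct, and it is worth pushing one step further: an independent set of $H(\Cscr)$ is, by Helly again, precisely a subfamily with a common point, so in fact $\alpha(H(\Cscr))=\Delta(\Cscr)$ and your bound reads $\tau(\Cscr)\ge\lceil|\Cscr|/\Delta(\Cscr)\rceil$. The $AG(2,3)$ incidence structure is therefore an unnecessary detour --- for the unit-square statement you would simply need nine pairwise intersecting unit squares with $\Delta\le4$. (Your parenthetical ``equivalently'' in the $13$-square case is not an equivalence: $\alpha\le4$ implies $\chi(H)\ge4$ but not conversely.)

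The genuine gap is that no construction is given: the theorem \emph{is} the construction, and you postpone it to ``a figure and explicit coordinates'' and to unspecified ``symmetric triples'' and ``rosettes''. The paper takes an entirely different route, with no hypergraph colouring. It starts from three unit squares whose sides enclose a small triangle, observes that any two-point hitting set must use one of that triangle's three vertices, and then adds two slightly shifted squares at each vertex to eliminate those candidates; for thirteen squares it nests three concentric layers of this triangle picture and appends four further squares, tracking step by step which points a small hitting set is forced to contain until none survive. Your reduction could in principle be used to certify a construction once you have one, but the crude $\lceil n/\Delta\rceil$ bound is fragile --- for instance the paper's thirteen-square family, being built from nested chains $\text{orange}_i\subset\text{red}_i\subset\text{pink}_i$, has a point in at least six squares, so $\Delta>4$ there --- and you would likely end up proving non-$3$-colourability of $H$ by an ad hoc argument anyway, at which point the reduction has bought you little beyond a reformulation.
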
 

Pach \cite{1980_Pach} proved that for any family $\Fscr$  of convex sets in the plane,  $\chi(\Fscr) \leq 9q \Delta(\Fscr),$
where for each $F\in\Fscr$  the ratio between the area of the smallest disk Disk$(F)$ containing $F$ (outer disk) and the area of $F$ is at most  $q\in\mathbb{R}$.
If  $\Cscr$ consists of squares,  then $q=\frac\pi 2$, so  Pach's bound is $\chi(\Cscr) \leq 9\frac\pi 2 \Delta(\Cscr) \approx 14.14 \Delta(\Cscr)$.
This can be essentially improved: 

 \begin{theorem}\label{thm:colouring} 
 If $\Cscr$ is a family of squares and $\Delta(\Cscr) \geq 2$,  $\chi(\Cscr)\le 9(\Delta(\Cscr) - 1) $.
 Moreover, if the squares have equal size,
 $\chi(\Cscr)\le 6\Delta(\Cscr)$.
\end{theorem}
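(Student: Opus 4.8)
The plan is to derive both bounds from the degeneracy principle recalled in the introduction -- a $D$-degenerate family has a $(D+1)$-colouring -- so it is enough to exhibit, in every non-empty subfamily $\Cscr'\subseteq\Cscr$, a square whose neighbourhood inside $\Cscr'$ is small. For squares of equal size I would show that $\Cscr$ is $(6\Delta(\Cscr)-1)$-degenerate. Pick $C\in\Cscr'$ whose centre is left-most among the centres of $\Cscr'$; by the second part of Lemma~\ref{lem:hitting_neighbours}, applied to $\Cscr'$, the neighbours of $C$ in $\Cscr'$ are hit by six points. Every point of the plane meets at most $\Delta(\Cscr')\le\Delta(\Cscr)$ members of $\Cscr'$, so these six points account for at most $6\Delta(\Cscr)$ neighbours of $C$; and since one of the six points can be taken inside $C$ itself (the centre of $C$ being the natural candidate -- this should follow from, or be easy to arrange in, the proof of Lemma~\ref{lem:hitting_neighbours}), that point accounts for at most $\Delta(\Cscr)-1$ of them, giving $|N(C)\cap\Cscr'|\le 5\,\Delta(\Cscr)+(\Delta(\Cscr)-1)=6\,\Delta(\Cscr)-1$ and hence $\chi(\Cscr)\le 6\Delta(\Cscr)$.

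For squares of arbitrary size the aim is that $\Cscr$ is $\bigl(9(\Delta(\Cscr)-1)-1\bigr)$-degenerate. Given $\Cscr'$, pick $C\in\Cscr'$ of minimum side length $s$ and perform the ``local homothety'' of Section~\ref{sec:cover}: contract each neighbour $C'$ of $C$ in $\Cscr'$ towards a point of $C\cap C'$ to a side-$s$ square $D_{C'}\subseteq C'$ that still meets $C$. Since any point of $D_{C'}$ lies in $C'$, and degrees in $\Cscr'$ are at most $\Delta(\Cscr)$, it suffices to bound the number of these \emph{equal} squares, all of which meet the fixed equal square $C$. The centres of the $D_{C'}$ all lie in the dilate of $C$ by $s/\sqrt2$, and the technical heart of the proof is a decomposition of this centre-region into at most nine pieces such that the squares $D_{C'}$ whose centre falls in one piece, taken together with $C$, form an intersecting family that can be ``charged'' against a single point; each piece then contributes at most $\Delta(\Cscr)-1$ squares to $N(C)$, and a sharper analysis of the piece nearest the centre of $C$ -- where $C$ itself occupies most of the available room -- removes one further unit, leaving $|N(C)\cap\Cscr'|\le 9(\Delta(\Cscr)-1)-1$ and hence $\chi(\Cscr)\le 9(\Delta(\Cscr)-1)$. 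The hypothesis $\Delta(\Cscr)\ge2$ is what makes this bound meaningful, the case $\Delta=1$ being a packing with $\chi=1$.

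The reduction to degeneracy and the local-homothety step are routine, mirroring the proof of Theorem~\ref{thm:hitting}, and the equal-size case is essentially a rerun of Lemma~\ref{lem:hitting_neighbours}. The main obstacle is the nine-piece decomposition in the general case: one must find a partition of the centre-region into as few as nine parts each forcing, with $C$, a structure bounded by $\Delta(\Cscr)-1$, and then squeeze out the extra $-1$ by a delicate count near the centre of $C$. This is precisely where the colouring constant $9$ is won -- note that it beats the $10$ one would get by feeding the first part of Lemma~\ref{lem:hitting_neighbours} directly into the degeneracy machinery, and it is far below the $\approx 9\pi/2$ coming from Pach's area method -- so this step is genuinely more intricate than the ``hit by $k$ points'' arguments used for the $\tau/\nu$ ratio. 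I would also want to double-check the assertion, used for equal-size squares, that one of the six hitting points of Lemma~\ref{lem:hitting_neighbours} can be placed inside $C$; if that should fail, the same kind of decomposition argument, run with six pieces, should recover the equal-size bound directly.
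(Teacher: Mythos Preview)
Your equal-size argument is essentially the paper's: Lemma~\ref{lem:hitting_neighbours} gives hitting-$6$-degeneracy, and Lemma~\ref{lem:degimplication} turns this into $(6\Delta-1)$-degeneracy. The worry about placing one of the six hitting points inside $C$ is unnecessary: the six points hit all of $N[C]$, so $|N[C]|\le 6\Delta$ and hence $|N(C)|=|N[C]|-1\le 6\Delta-1$ automatically, regardless of where the points sit.

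For arbitrary squares, however, your plan has a real gap and the paper does something quite different. Your nine-piece decomposition is only sketched, and the version you describe does not seem attainable: to get $\Delta-1$ (not $\Delta$) from each piece you need the charging point of every piece to lie in $C$, i.e., a nine-point hitting set of $N[C]$ entirely inside $C$; but a unit square that barely grazes a corner of $C$ meets $C$ in an arbitrarily small region, so no finite set of points inside $C$ can hit all possible neighbours. The paper instead uses an averaging (double-counting) argument in the style of Asplund--Gr\"unbaum and Chalermsook--Walczak. To each square attach its four vertices with weight $2$ and its centre with weight $1$, total weight $9$; each such point lies in at most $\Delta-1$ \emph{other} squares, so the total weighted count over all $n$ squares is strictly below $9n(\Delta-1)$. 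The key geometric input is Lemma~\ref{lem:cross}: if two squares intersect without either containing a vertex of the other (a \emph{crossing}), then each contains the centre of the other. Hence every edge of the intersection graph is counted with weight at least $2$ -- either via one vertex of weight $2$, or via two centres of weight $1$ -- giving $2|E|<9n(\Delta-1)$, minimum degree below $9(\Delta-1)$, and degeneracy finishes. The constant $9=2\cdot 4+1$ thus comes from ``four vertices and one centre,'' not from any partition of the centre-region; this crossing lemma is the idea you are missing.
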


Other results on the chromatic number of families of convex sets can be found in Table \ref{tab:chi_bounds}. Recall that for any family $\Fscr$, $\Delta(\Fscr) \leq \omega(\Fscr)$. 

\begin{table}
\centering
\begin{tabular}{c|c|c|c|c}
 & \textit{convex set} & \textit{centrally symmetric} & \textit{disk} & \textit{square} \\ \hline
\textit{translation} &  $\chi \leq 3\omega - 2$   &  $\chi \leq 3\omega - 2$   & $\chi \leq 3\omega - 2$  & $\chi \leq 2\omega - 1$ \ \\ 
 &   \cite{2004_Kim}  &  \cite{2004_Kim}   &  \cite{2003_Perepelitsa} & \cite{2003_Perepelitsa} \\ \hline
\textit{translation+} & $\chi \leq 6\omega - 6$  &  $\chi \leq 6\omega - 6$  & $\chi \leq 6\omega - 6$  & $\chi \leq 4\omega - 3$  \\ 
\textit{homothety} &  \cite{2004_Kim} &   \cite{2004_Kim}  & \cite{2004_Kim} &  \cite{2006_Ahlswede}  \\ \hline
\textit{translation+}&  $\chi \leq 9 q \Delta$ &  $\chi \leq 9 q \Delta$ & $\chi \leq 3\omega - 2$ & $\chi \leq 6\Delta$ \\ 
\textit{rotation} & \cite{1980_Pach} & \cite{1980_Pach} & \cite{2003_Perepelitsa} & Thm. \ref{thm:colouring} \\ \hline
\textit{translation+} &  $\chi \leq 9 q \Delta$  & $\chi \leq 9 q \Delta$ & $\chi \leq 6\omega - 6$ & $\chi \leq 9(\Delta-1)$  \\
\textit{homothety+} & \cite{1980_Pach} & \cite{1980_Pach} & \cite{2004_Kim} & Thm. \ref{thm:colouring} \\
\textit{rotation} &  &  &  & \\ 
\end{tabular}
\caption{$\chi/\omega$ and $\chi/\Delta$ bounds  for   translations, rotations, or homotheties of a convex set.} 	
\label{tab:chi_bounds}
\end{table}

We do not know about non-trivial lower bounds for colouring squares. The intersection graph of unit squares may be a $C_5$ with chromatic number $\chi=3$ and clique number $\omega=2$.
 However, the $3/2$ lower bound cannot be easily kept for higher values of $\omega$: the best-known bound arises by choosing $\omega$ to be divisible by $4$,  taking each square of the $C_5$ example $\omega/2$ times. In terms of the intersection graph, this is a replication of each vertex  $\omega/2$ times, and an optimal colouring is provided then by taking each of the five maximal stable sets of $C_5$ $\omega/4$ times, as colour classes, showing $\chi=\frac54\omega$ (see \cite{2012_Dumitrescu}).
This seems to be the best example known for squares of varying sizes and not necessarily axis-parallel as well. 

The paper is organized as follows.
Section~\ref{sec:cover} introduces the main techniques of the paper. First, it formalizes the greedy argument that connects hitting degeneracy with the hitting number and presents a proof of Pach's bound, which also serves as an initial example of the \quotes{local homothety operation.}  Then, a simple and well-known relation is explored between hitting a family of geometric objects and covering a set of points by such a family.  This relation allows us to show some first bounds on the $\tau/\nu$ ratio, in particular, a bound for the hitting number (Theorem~\ref{thm:hit_convexsets}) similar to Pach's bound for the chromatic number, and also to point at the incompleteness of this method:
while an appropriately defined covering is always sufficient for defining a hitting set, it is necessary only in the axis-parallel case.
In Section \ref{sec:holes}, we develop the main tools that essentially improve the \quotes{covering method,} enabling us to turn even some partial coverings into hitting sets, enhancing our bounds. 
In Section~\ref{sec:proofs}, we use the tools from the previous sections to prove the upper and lower bounds for hitting (Theorem~\ref{thm:hitting}, Theorem~\ref{thm:lower}) and the upper bound for the chromatic number (Theorem~\ref{thm:colouring}).  
Section~\ref{sec:end}  concludes the paper with a collection of open questions.

\section{Initial framework and basic tools}\label{sec:cover}

In this section, we first set the framework of the paper and then gradually build the techniques towards our main results:

Section~\ref{sub:hitting_degeneracy} works out necessary details for the use of hitting-degeneracy and shows a first example of the local homothety operation by shortly presenting  Pach's proof for his general bound for the chromatic number~\cite{1980_Pach}. Section~\ref{sub:hitting_covering} explains the relationship between the hitting sets of a family of unit balls and sets covering their centres. This transition from hitting to covering offers a new perspective on the problem and allows us to show a first bound on the hitting number of squares that will be improved later. 
Finally, Section~\ref{sub:hitting_general_bound} applies these techniques for bounding the hitting number of convex sets.

\subsection{Hitting degeneracy and local homothety}\label{sub:hitting_degeneracy}

We use a tool similar to $D$-degeneracy to bound the hitting number of squares.
The following well-known lemma \cite{2006_Kim, 2011_Dumitrescu} formulates the simple and standard induction step converting constant size hitting sets of neighbourhoods into a linear bound between $\tau$ and $\nu$.

\begin{lemma} \label{lem::induction_idea}
	Let $\Fscr$ be a hitting-$t$-degenerate family of sets. Then $\tau(\Fscr) \leq  t\nu(\Fscr)$. Moreover, if $\tau(\Fscr')\le t_0$ for any $\Fscr'\subseteq \Fscr$ satisfying $\nu(\Fscr') = 1$, then
	$\tau(\Fscr) \leq t_0 + t(\nu(\Fscr) - 1).$ 
\end{lemma}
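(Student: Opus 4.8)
The plan is to argue by induction on $\nu(\Fscr)$. The base case $\nu(\Fscr)=0$ is immediate: a family with packing number $0$ must be empty (the empty subfamily is the only one with no pair of disjoint sets only when there are no sets at all — more precisely, any single set forms a packing of size $1$, so $\nu(\Fscr)=0$ forces $\Fscr=\emptyset$), hence $\tau(\Fscr)=0$ and both claimed inequalities hold trivially. For the inductive step, suppose $\nu(\Fscr)\geq 1$. By hitting-$t$-degeneracy applied to $\Fscr'=\Fscr$ itself, there exists $F\in\Fscr$ with $\tau(N[F])\leq t$; fix such an $F$ together with a hitting set $P$ of $N[F]$ with $|P|\leq t$. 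Now consider $\Fscr^- := \Fscr \setminus N[F]$, the subfamily of sets disjoint from $F$.

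The two facts I would invoke — both highlighted in the excerpt just before the statement — are: (i) $\nu(\Fscr^-) \leq \nu(\Fscr) - 1$, since any packing in $\Fscr^-$ can be extended by $F$ to a packing in $\Fscr$; and (ii) any hitting set of $\Fscr^-$ together with $P$ forms a hitting set of $\Fscr$, because every set of $\Fscr$ is either in $N[F]$ (hit by $P$) or in $\Fscr^-$ (hit by the hitting set of $\Fscr^-$). The family $\Fscr^-$ is again hitting-$t$-degenerate, being a subfamily of $\Fscr$ (degeneracy is hereditary by definition, since it quantifies over all subfamilies). So by induction, $\tau(\Fscr^-) \leq t\,\nu(\Fscr^-) \leq t(\nu(\Fscr)-1)$, and combining with (ii) gives
\[
\tau(\Fscr) \;\leq\; |P| + \tau(\Fscr^-) \;\leq\; t + t(\nu(\Fscr)-1) \;=\; t\,\nu(\Fscr),
\]
which is the first assertion.

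For the refined bound, I would run the same induction but stop the recursion one step earlier: when $\nu(\Fscr) = 1$, the hypothesis gives directly $\tau(\Fscr) \leq t_0$, which is the claimed bound $t_0 + t(\nu(\Fscr)-1)$ for $\nu(\Fscr)=1$. For $\nu(\Fscr)\geq 2$, the same peeling step produces $F$, a hitting set $P$ of $N[F]$ with $|P|\leq t$, and $\Fscr^-$ with $\nu(\Fscr^-)\leq\nu(\Fscr)-1$; note $\nu(\Fscr^-)\geq 1$ need not hold in general, but if $\nu(\Fscr^-)=0$ then $\Fscr^-=\emptyset$ and $\tau(\Fscr)\leq|P|\leq t \leq t_0 + t(\nu(\Fscr)-1)$ since $\nu(\Fscr)\geq 2$ and $t_0\geq 0$; otherwise induction on $\Fscr^-$ yields $\tau(\Fscr^-)\leq t_0 + t(\nu(\Fscr^-)-1) \leq t_0 + t(\nu(\Fscr)-2)$, whence $\tau(\Fscr)\leq t + t_0 + t(\nu(\Fscr)-2) = t_0 + t(\nu(\Fscr)-1)$.

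This argument is essentially bookkeeping once the two highlighted facts are in hand, so there is no serious obstacle; the only point requiring a little care is the edge cases where $\Fscr^-$ becomes empty (or has $\nu = 0$) before the recursion would naturally terminate, which is why I separate them out explicitly above. One should also double-check the convention that $\tau_0 \geq 0$ (equivalently $t_0 \geq 0$, which holds since it is a cardinality) so the edge-case inequality $t \leq t_0 + t(\nu(\Fscr)-1)$ for $\nu(\Fscr)\geq 2$ goes through.
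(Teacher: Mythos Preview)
Your proof is correct and follows essentially the same inductive peeling argument as the paper: pick $F$ with $\tau(N[F])\le t$, observe $\nu(\Fscr\setminus N[F])\le\nu(\Fscr)-1$, and combine a hitting set of $N[F]$ with one of $\Fscr\setminus N[F]$. The only cosmetic difference is that the paper proves the refined bound first and recovers the basic one by setting $t_0:=t$, whereas you prove them separately; you are also slightly more careful than the paper in singling out the edge case $\nu(\Fscr^-)=0$.
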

\begin{proof}
We proceed with the proof of the second statement by induction on $\nu(\Fscr)$ since then the first statement follows by substituting $t_0:=t$. 
If $\nu(\Fscr) = 1$, then $\tau(\Fscr)\leq t_0$  by the condition.
Assume now that $\nu(\Fscr) \geq 2$ and that the statements are true for any subfamily $\Fscr'$ of $\Fscr$ having $\nu(\Fscr') < \nu(\Fscr)$.
Let $F \in \Fscr$ be given by the condition, that is,  $N[F]$ can be hit by $t$ points.
Any packing $\mathcal{I}'$ in $\Fscr - N[F]$ has size at most $\nu(\Fscr)-1$ since $\mathcal{I}' \cup \{F\}$ is a packing  in $\Fscr$.
Then, since the condition is still satisfied by $\Fscr - N[F]$, by the induction hypothesis,
\begin{align*}
	\tau(\Fscr) &\leq \tau(\Fscr - N[F]) + \tau(N[F]) \\ &\leq t_0 + t(\nu(\Fscr - N[F]) -1)+t\\
	&= t_0 + t(\nu(\Fscr)  -2)+t \\&= t_0 + t(\nu(\Fscr) -1).
\end{align*}
\end{proof}

The following lemma presents a straightforward connection between degeneracy and hitting degeneracy.
Together with Lemma \ref{lem:hitting_neighbours}, it will allow us to derive an upper bound on the chromatic number of squares.

\begin{lemma}\label{lem:degimplication}
Any hitting-$t$-degenerate family of sets $\mathcal{F}$ is also $(t\Delta(\mathcal{F}) - 1)$-degenerate.
\end{lemma}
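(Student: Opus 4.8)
The plan is to translate the hitting-degeneracy property directly into a degree bound for the intersection graph, using the elementary fact that a single point lies in at most $\Delta(\mathcal{F})$ members of the family.

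First I would fix an arbitrary non-empty subfamily $\mathcal{F}'\subseteq\mathcal{F}$; to establish $(t\Delta(\mathcal{F})-1)$-degeneracy it suffices to produce one member of $\mathcal{F}'$ that intersects at most $t\Delta(\mathcal{F})-1$ other members of $\mathcal{F}'$. Since hitting-$t$-degeneracy is, by definition, a property of \emph{all} subfamilies, it applies to $\mathcal{F}'$: there is some $F'\in\mathcal{F}'$ whose closed neighbourhood $N[F']$ \emph{computed inside $\mathcal{F}'$} satisfies $\tau(N[F'])\le t$. Let $P$ be a hitting set of $N[F']$ with $|P|\le t$.

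Next I would do a one-line incidence count. Every set in $N[F']$ contains at least one point of $P$, so $|N[F']|\le\sum_{p\in P}\bigl|\{F\in N[F']: p\in F\}\bigr|$. Each summand counts sets of the subfamily $N[F']\subseteq\mathcal{F}'\subseteq\mathcal{F}$ through a common point, hence is at most $\Delta(N[F'])\le\Delta(\mathcal{F}')\le\Delta(\mathcal{F})$, the family-degree being monotone under taking subfamilies. Therefore $|N[F']|\le t\,\Delta(\mathcal{F})$, and consequently $F'$ has at most $|N[F']|-1\le t\,\Delta(\mathcal{F})-1$ neighbours in $\mathcal{F}'$, which is exactly the required degeneracy witness.

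I do not expect a genuine obstacle here; the only subtlety worth flagging is bookkeeping about which notion of ``degree'' is in play: $\Delta$ refers to points lying in many sets (the family degree), not to degrees in the intersection graph, so one must apply $\Delta(N[F'])\le\Delta(\mathcal{F})$ to that notion, and one must be careful to invoke hitting-$t$-degeneracy with neighbourhoods taken relative to $\mathcal{F}'$ rather than to the ambient $\mathcal{F}$.
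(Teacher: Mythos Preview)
Your proposal is correct and follows essentially the same argument as the paper: pick $F'\in\Fscr'$ with $\tau(N[F'])\le t$, then bound $|N[F']|\le t\,\Delta(\Fscr)$ by counting incidences with a minimum hitting set. Your write-up is in fact slightly more careful than the paper's about the bookkeeping (neighbourhoods relative to $\Fscr'$, monotonicity of the family degree).
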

\begin{proof}
Let $\mathcal{F}'$ be a subfamily of $\mathcal{F}$. Since, $\Fscr$ is hitting-$t$-degenerate, there is a set $F' \in \Fscr'$ such that $\tau(N[F']) \leq t$.
We show that $F'$ has at most  $t\Delta(\Fscr) - 1$  neighbours. Each of the points in a hitting set of $N[F']$ is contained in at most $\Delta(\Fscr') \leq \Delta(\Fscr)$ sets. 
Hence, $|N[F']| \leq \tau(N[F']) \Delta(\Fscr) \leq t\Delta(\Fscr)$ and so $|N(F')| \leq t\Delta(\Fscr) - 1$.
\end{proof}

Now, we present Pach's result, the proof of which details the \emph{local homothety operation}. 
\begin{theorem}[Pach, 1980 \cite{1980_Pach}]\label{thm:Pach}

Let $q\in \mathbb{R}_+$ and $\Fscr$ be a family of convex sets in the plane.
If for each $F\in\Fscr$  the ratio between the area of the smallest disk $\disk(F)$ containing $F$ and the area of $F$ is at most  $q$, then  $\chi(\Fscr) \leq 9q \Delta(\Fscr)$.

\end{theorem}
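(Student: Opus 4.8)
The plan is to prove the slightly stronger statement that $\Fscr$ is $(9q\Delta(\Fscr)-1)$-degenerate; the colouring bound then follows from the fact recalled in the introduction that a $D$-degenerate family admits a $(D+1)$-colouring (after rounding: a single member will turn out to have strictly fewer than $9q\Delta(\Fscr)$ neighbours in any subfamily, and since the number of neighbours is an integer this forces degeneracy $\le\lfloor 9q\Delta(\Fscr)\rfloor-1$, hence $\chi(\Fscr)\le 9q\Delta(\Fscr)$). So fix a non-empty subfamily $\Fscr'\subseteq\Fscr$; the task is to exhibit a member of $\Fscr'$ with a small neighbourhood inside $\Fscr'$.

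First I would pick $F\in\Fscr'$ minimising the radius $r$ of $\disk(F)$ over all members of $\Fscr'$, and let $c$ be the centre of $\disk(F)$. The key device is the \emph{local homothety}: for every $G\in N[F]$ that lies in $\Fscr'$ (including $G=F$), choose a point $p_G\in G\cap\disk(F)$ — non-empty since $G\cap F\neq\emptyset$ and $F\subseteq\disk(F)$ — let $\rho_G\ge r$ be the radius of $\disk(G)$, and replace $G$ by $G^\star:=h_G(G)$, where $h_G$ is the homothety with centre $p_G$ and ratio $r/\rho_G\in(0,1]$. Because $G$ is convex, $p_G\in G$, and the ratio is at most $1$, we get $G^\star\subseteq G$ and $p_G\in G^\star$. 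Moreover $\disk(G^\star)=h_G(\disk(G))$ is a disk of radius exactly $r$; since $p_G\in G^\star\subseteq\disk(G^\star)$, its centre is within distance $r$ of $p_G$, hence within distance $2r$ of $c$, so $G^\star\subseteq\disk(G^\star)$ lies in the disk $D^\star$ of radius $3r$ centred at $c$, of area $9\pi r^2$.

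The second ingredient is an area count. A homothety is a similarity, so $\disk(G^\star)$ is the image of $\disk(G)$ and the ratio between the areas of $\disk(G^\star)$ and of $G^\star$ equals the corresponding ratio for $G$, hence is at most $q$; thus $\mathrm{area}(G^\star)\ge\pi r^2/q$. On the other hand, for any $x\in D^\star$ a set $G^\star$ containing $x$ satisfies $x\in G^\star\subseteq G$, so at most $\Delta(\Fscr)$ of the $G^\star$ contain $x$. Integrating the (simple, measurable) multiplicity function over $D^\star$ gives
\[
\sum_{G\in N[F]\cap\Fscr'}\mathrm{area}(G^\star)\;=\;\int_{D^\star}\bigl|\{\,G:\ x\in G^\star\,\}\bigr|\,dx\;\le\;\Delta(\Fscr)\cdot 9\pi r^2 .
\]
Combining the two estimates yields $|N[F]\cap\Fscr'|\cdot\pi r^2/q\le 9\pi r^2\Delta(\Fscr)$, i.e. $|N[F]\cap\Fscr'|\le 9q\Delta(\Fscr)$; since $F\in N[F]$, the set $F$ has at most $9q\Delta(\Fscr)-1$ neighbours in $\Fscr'$. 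As $\Fscr'$ was arbitrary, $\Fscr$ is $(9q\Delta(\Fscr)-1)$-degenerate, which gives $\chi(\Fscr)\le 9q\Delta(\Fscr)$.

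The step I expect to be the main obstacle is setting up the local homothety correctly and checking its two crucial properties at once: that $G^\star\subseteq G$ (this is what keeps the pointwise multiplicity bounded by $\Delta(\Fscr)$ rather than by the a priori unbounded number of neighbours of $F$) and that every $G^\star$ still meets, and is contained in, the concentric disk of radius $3r$. A small but worthwhile point is to keep $F$ itself inside the count $N[F]$: its contribution $\mathrm{area}(F)\ge\pi r^2/q$ is exactly what absorbs the ``$+1$'' and produces the clean bound $9q\Delta(\Fscr)$. The measure-theoretic double counting and the similarity-invariance of the disk-to-set area ratio are routine and need no computation.
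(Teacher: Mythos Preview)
Your proposal is correct and follows essentially the same route as the paper's proof: pick a member with smallest outer disk, apply the local homothety to shrink each neighbour so its outer disk has the same radius, observe that the shrunk copies all lie in a concentric disk of thrice the radius, and bound the closed neighbourhood by an area double-count using the multiplicity bound $\Delta(\Fscr)$. Your write-up is in fact somewhat more careful than the paper's (you spell out the integration, the containment $G^\star\subseteq G$, and why $F$ itself is kept in the count), but the argument is the same.
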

\begin{proof}
The chromatic number of any $D$-degenerate family is bounded by $D+1$. Hence, it is enough to show that $\mathcal{F}$ is $(9q \Delta(\Fscr)-1)$-degenerate.
Let $F_0\in\Fscr$ be such that $\disk(F_0)$ has the shortest radius among $F\in\Fscr$ and choose the unity to be the length of this radius.
We check  $|N(F_0)| \leq 9q \Delta(\Fscr)-1$.

For each  $F\in N(F_0)$, define a new convex set $F'$ by picking an arbitrary point $p\in F\cap F_0$ and applying an appropriate homothety with centre $p$ and ratio $\lambda \leq 1$ so that the image of the outer disk of $F$ (which coincides with $\disk(F')$) has radius $1$.
We call this operation {\em local homothety.} 
Let $c_0$ be the centre of  $\disk(F_0)$ and note that for all $F\in N[F_0]$, each outer disk $\disk(F')$, and so $F'$, is contained in the disk $B$ of centre $c_0$, and radius $3$. 

Since, by convexity  $F' \subseteq F$, the local homothety operation did not increase the maximum degree of $\Fscr$. 
Therefore the disk  $B$ is covered by the images $\{F' : F\in N[F_0]\}$ at most $\Delta(\Fscr)$ times.  Moreover, the areas of the images  $\{F' : F\in N[F_0]\}$, each of which is at least $\pi/q$ by the definition of $q$, sum up to less than $9\pi\Delta(\Fscr)$. These two observations yield $|N(F_0)| \leq  |N[F_0]| -1 \leq \frac{9\Delta(\Fscr)\pi}{\pi/q} -1 =9q\Delta(\Fscr) -1$.

\end{proof}

Surprisingly, in both the problem of bounding the chromatic numbers of intersection graphs of geometric objects in terms of their maximum degrees and the problem of bounding their hitting numbers in terms of their packing numbers, it seems that no approach is known that takes into consideration the family of objects in a more global way.
While we continue exploiting the well-known greedy framework of degeneracy and hitting-degeneracy (Lemma~\ref{lem::induction_idea}), often used for bounding the chromatic and hitting numbers of convex sets, we now also frontally face new challenges for bounding the parameters $t$ and $t_0$.
Doing this, our questioning is {\em not} about the mere existence of constant bounds, which are much easier and follow from Pach's generic method (Theorems \ref{thm:Pach} and \ref{thm:hit_convexsets}). It is rather inspired by the spirit of combinatorial optimization, where the goal is to find the best possible constant. 
The endeavour to find the best bound leads to geometry problems that are interesting in their own right and
can be addressed using classical results, such as Thales' Theorem, in novel ways.

\subsection{Relating hitting and covering}\label{sub:hitting_covering} 

In this section, we consider the relation of covering problems to hitting problems for various geometric objects.
These two problems are equivalent for axis-parallel unit squares and the equivalence is a special case of a general statement about unit balls of normed spaces of arbitrary dimension. 

For an arbitrary norm $||\,\,\,||$,  a  {\em ball} is a set of the form 
\begin{equation*}
    B(c,r):=\{x\in\mathbb{R}^n: ||x-c||\le r  \}\, (c\in \mathbb{R}^n, r\in \mathbb{R}_+).
\end{equation*}
The point $c$ is called the {\em centre} of the ball, $r$ is its {\em radius}, and $\overline{B}(c,r)$ is its \emph{boundary}.
A ball centred at $0$, of radius $r$ is a compact convex set, moreover it is {\em centrally symmetric}   that is, for $x\in B$, $-x\in B$. Conversely, it is also true that any centrally symmetric compact convex set with a non-empty interior is the unit ball for a norm.
   
We primarily use the  $l_2$-norm, also called {\em Euclidean-norm}, and the   $l_\infty$-norm, also called {\em max-norm}; the only distance function we use is $\dist(x,y):=||x-y||_2$, omitting the index. 
In two dimensions, a ball $B_2(c,r)$ for the Euclidean norm is called a  {\em disk} of radius $r$ and its boundary $\Bcirc(c,r)$  is called a {\em circle}; similarly, for the max-norm, $B_\infty(c,r)$ and $\overline{B}_\infty(c,r)$  are an axis-parallel square of side $2r$ and its boundary. 
Note the difference between axis-parallel unit squares $B_\infty(c, 1/2)$, i.e., $1\times 1$ squares,  and  unit balls $B_\infty(c,1)$, i.e., $2\times 2$ squares.

Some further notations will be useful: 
if $X$ is a set of points, $\conv(X)$ denotes their convex hull; if $X=\{a,b\}$, we use use the shorter notation $[a,b]:=\conv (X)$. 
For a square $C$, $l(C)$ is the length of a side of $C$, $c(C)$ denotes the centre of $C$, and given a family of squares $\Cscr$, $c(\Cscr) := \{c(C) \ : C \in \Cscr\}$.

Let $V \subseteq \mathbb{R}^2$ and $|| \ ||$ be a norm.
A {\em covering} with respect to $V$ is a set $\Bscr$ of balls of radius $\frac12$ such that $V \subseteq \bigcup_{B \in \Bscr} B$. A set of points $A \subseteq V$ is {\em independent} if for any pair of different points $u,v\in A$, $||u-v||>1$. The minimum size of a covering, the {\em covering number}, is denoted by $\zeta(V)$, and the maximum size of an independent subset of points by $\alpha(V)$. 
Note that $\alpha\le\zeta$ since the distance of any two points $x,y\in B(c, 1/2)$  is at most $1$.
We will mainly use $\zeta$ and $\alpha$ for the Euclidean norm $l_2$ or the max-norm $l_\infty$ by using the indices $2$ or $\infty$.

\begin{proposition} \label{prop:nu_alpha}
For any norm and family $\Bscr$ of balls of radius $\frac12$, $\nu(\Bscr) = \alpha(c(\Bscr))$ and $\tau(\Bscr) = \zeta(c(\Bscr))$.
\end{proposition}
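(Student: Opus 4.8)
The statement is a clean ``dictionary'' between the two pictures: balls of radius $\tfrac12$ versus their centres. The plan is to observe that for balls of a fixed radius $\tfrac12$ the combinatorics of intersection is entirely governed by the distance between centres: two balls $B(c_1,\tfrac12)$ and $B(c_2,\tfrac12)$ intersect if and only if $||c_1-c_2||\le 1$. This is the one geometric fact to record first, and it follows immediately from the triangle inequality in both directions (if $||c_1-c_2||\le 1$ the midpoint $\tfrac12(c_1+c_2)$ lies in both; if they share a point $p$ then $||c_1-c_2||\le ||c_1-p||+||p-c_2||\le 1$). Dually, a point $p$ lies in $B(c,\tfrac12)$ iff $||p-c||\le \tfrac12$, i.e.\ iff $c\in B(p,\tfrac12)$.

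For the first equality, $\nu(\Bscr)=\alpha(c(\Bscr))$: a subfamily $\Bscr'\subseteq\Bscr$ is a packing exactly when all its balls are pairwise disjoint, which by the fact above means that the corresponding centres are pairwise at distance $>1$, i.e.\ form an independent set in $c(\Bscr)$; conversely any independent subset of $c(\Bscr)$ pulls back to a packing. One should note here that distinct balls in $\Bscr$ may a priori share a centre, but then they coincide as sets (same centre, same radius), so without loss of generality the centres are distinct and the correspondence $B\mapsto c(B)$ is a bijection between $\Bscr$ and $c(\Bscr)$ preserving the relevant incidences; hence the maximum sizes agree.

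For the second equality, $\tau(\Bscr)=\zeta(c(\Bscr))$, I would argue by exhibiting the two inequalities via an explicit transformation of hitting sets into coverings and back. If $A$ is a hitting set of $\Bscr$, then every ball's centre lies within distance $\tfrac12$ of some point of $A$, so the balls $\{B(a,\tfrac12):a\in A\}$ cover $c(\Bscr)$, giving $\zeta(c(\Bscr))\le |A|$; taking $A$ optimal yields $\zeta(c(\Bscr))\le\tau(\Bscr)$. Conversely, given a covering $\Bscr^\ast=\{B(q_i,\tfrac12)\}$ of $V=c(\Bscr)$, each centre $c(B)$ lies in some $B(q_i,\tfrac12)$, i.e.\ $||c(B)-q_i||\le\tfrac12$, which means $q_i\in B(c(B),\tfrac12)=B$; so $\{q_1,q_2,\dots\}$ is a hitting set of $\Bscr$, giving $\tau(\Bscr)\le\zeta(c(\Bscr))$. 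Combining the two inequalities finishes the proof.

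There is no real obstacle here; the only point requiring a moment's care is the duality/self-dual flavour of the ``point in ball iff centre in ball'' observation, which uses that the radius is $\tfrac12$ on both sides (hence the normalisation in the definition of a covering). I would state the intersection and incidence facts as a short preliminary paragraph and then the four inequalities are each one line.
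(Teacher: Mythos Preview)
Your proof is correct and follows essentially the same approach as the paper's: the paper also reduces the first equality to the observation that a subfamily is a packing iff its centres are pairwise at distance $>1$, and the second to the observation that $H$ hits $\Bscr$ iff the radius-$\tfrac12$ balls centred at $H$ cover $c(\Bscr)$. Your version is more explicit (spelling out the midpoint/triangle-inequality argument and the two inequalities for $\tau=\zeta$), but the underlying idea is identical.
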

\begin{proof}
The first equality follows from the simple fact that $\Pscr\subseteq\Bscr$   {\em is a packing if and only if for all $p,q \in c(\Pscr)$, $|| p-q || > 1$.} 
For the second equality note that  $H$ is a hitting set of $\Bscr$, if and only if for each $c\in c(\Bscr)$ there exists $h\in H$ such that  $||c-h||\le \frac12$. This means \ that the balls of radius $\frac12$ with centres in $H$ cover $c(\Bscr)$.  
\end{proof}

 For the max-norm, we immediately obtain the following result.

\begin{corollary} \label{cor:nu_alpha}
For any family $\Cscr$ of axis-parallel unit squares, $\nu(\Cscr) = \alpha_\infty(c(\Cscr))$ and $\tau(\Cscr) = \zeta_\infty(c(\Cscr))$.
\end{corollary}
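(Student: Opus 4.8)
The plan is to deduce the statement directly from Proposition \ref{prop:nu_alpha} by identifying axis-parallel unit squares with max-norm balls of radius $\frac12$. Recall from the discussion preceding Proposition \ref{prop:nu_alpha} that for the $l_\infty$-norm, a ball $B_\infty(c, r)$ is precisely an axis-parallel square of side $2r$ centred at $c$. Hence a ball of radius $\frac12$ in the max-norm is exactly an axis-parallel $1 \times 1$ square, i.e.\ an axis-parallel unit square, and conversely every axis-parallel unit square $C$ with centre $c(C)$ equals $B_\infty(c(C), \frac12)$. Thus the family $\Cscr$ of axis-parallel unit squares \emph{is} a family of max-norm balls of radius $\frac12$, and its set of centres $c(\Cscr)$ coincides with the set $c(\Bscr)$ appearing in Proposition \ref{prop:nu_alpha}.

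First I would make this identification explicit in one sentence. Then I would invoke Proposition \ref{prop:nu_alpha} with the norm taken to be $l_\infty$: it gives $\nu(\Cscr) = \alpha(c(\Cscr))$ and $\tau(\Cscr) = \zeta(c(\Cscr))$, where $\alpha$ and $\zeta$ are computed with respect to the $l_\infty$-norm. Finally I would record that, by the indexing convention introduced just before the corollary (using the subscript $\infty$ to denote these quantities for the max-norm), $\alpha(c(\Cscr)) = \alpha_\infty(c(\Cscr))$ and $\zeta(c(\Cscr)) = \zeta_\infty(c(\Cscr))$, which yields the claimed equalities.

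There is essentially no obstacle here: the corollary is a pure specialization of Proposition \ref{prop:nu_alpha}, and the only thing to be careful about is the unit-scaling convention — namely the warning in the text that an axis-parallel \emph{unit square} is $B_\infty(c, \frac12)$ rather than $B_\infty(c, 1)$ — so that the "$1 \times 1$" squares really do match the "radius $\frac12$" balls used in the definition of covering and independence. Once that bookkeeping is in place, the proof is a one-line application of the proposition.
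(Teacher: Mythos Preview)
Your proposal is correct and matches the paper's approach exactly: the paper simply remarks that the corollary follows immediately from Proposition~\ref{prop:nu_alpha} specialized to the max-norm, and your plan is precisely this specialization with the bookkeeping about $B_\infty(c,\frac12)$ being a $1\times 1$ square spelled out. There is nothing to add.
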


Considering a family of axis-parallel unit squares,  the neighbours of any of its squares can be hit by at most four points (the four vertices), and the ones of a left-most square by at most two points (the two vertices on the right side).
This simple fact can be reformulated and proved as follows:
the centres of all possible axis-parallel unit squares intersecting a unit square $C$ form an axis-parallel square of size $2\times 2$ with centre $c(C)$.
Hence, four unit squares are enough to cover them, and only two suffice if the centre of $C$ is a left-most one (see Figure \ref{fig::FourSquares}).
By Corollary~\ref{cor:nu_alpha}, $\tau( c(N[C]) )=\zeta_\infty( c(N[C]) )$ is at most $4$ and $2$, respectively.
The greedy induction and the local homothety presented in Section~\ref{sub:hitting_degeneracy} can be used to convert these bounds into $\tau\le 4\nu -3$ for axis-parallel squares of arbitrary size and $\tau\le 2\nu - 1$ for axis-parallel unit squares \cite{2006_Ahlswede}. 

\begin{figure}[ht]
	\centering
		\includegraphics[scale =0.6]{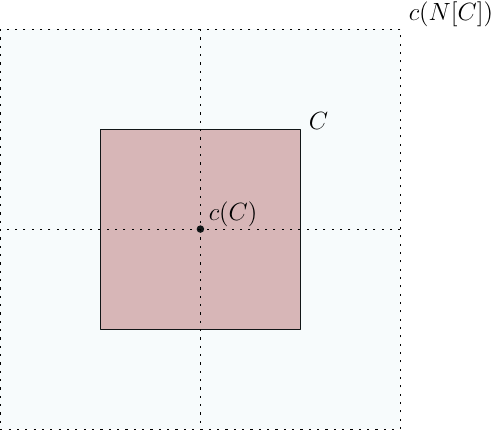}
		\caption{An axis-parallel square and the domain of the centres of its neighbours.}
		\label{fig::FourSquares}
\end{figure}

Corollary~\ref{cor:nu_alpha} is not directly applicable to not necessarily axis-parallel squares. 
We make a detour through other norms to still apply the second part of Proposition~\ref{prop:nu_alpha} at the price of losing a small constant factor.
The inner disk of a unit square $C$ is $B_2(c(C), \frac{1}{2})$  and its outer disk is $B_2(c(C), \frac{\sqrt{2}}{2})$.
\textit{Hitting the inner disks, we also hit the original squares}, and \textit{packing the outer disks, we also pack the correspondent squares}.

We illustrate now how the hitting number of the neighbours of a unit square can be bounded with the help of the second part of  Proposition~\ref{prop:nu_alpha}, establishing hitting-degeneracy.
This bound is weaker than Lemma~\ref{lem:hitting_neighbours},  which will be proved by completing the covering argument used here with the novel methods of Section~\ref{sec:holes}.

\begin{proposition} \label{prop:hit_12}
Let $\Cscr$ be a family of unit squares and $C_0\in\Cscr$. Then $\tau(N[C_0])\le 12$, so $\tau(\Cscr)\le 12\nu(\Cscr)$.  
\end{proposition}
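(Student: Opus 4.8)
The plan is to establish $\tau(N[C_0])\le 12$ for an arbitrary unit square $C_0$ in the family and then invoke Lemma~\ref{lem::induction_idea} with $t = 12$ (using that any subfamily of unit squares is still a family of unit squares, so the choice of $C_0$ can be made in every subfamily) to obtain $\tau(\Cscr)\le 12\nu(\Cscr)$. Thus the whole content is the neighbourhood bound.

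First I would reduce hitting to covering via the ``inner disk / outer disk'' trick advertised just before the statement. If $C\in N[C_0]$ then $C\cap C_0\ne\emptyset$, so the inner disk $B_2(c(C),\tfrac12)$ lies within distance at most $\mathrm{diam}(C_0\cup C)$ of $c(C_0)$; more precisely, since both squares have side $1$, their outer disks have radius $\tfrac{\sqrt2}{2}$, and two intersecting unit squares have centres at distance at most $\sqrt2$ (the diameter of a unit square plus nothing — two intersecting unit squares share a point $p$, and $\dist(c(C),p)\le\tfrac{\sqrt2}{2}$, $\dist(p,c(C_0))\le\tfrac{\sqrt2}{2}$, so $\dist(c(C),c(C_0))\le\sqrt2$). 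Hence every centre $c(C)$ with $C\in N[C_0]$ lies in the disk $B_2(c(C_0),\sqrt2)$. Now I want to cover this disk by disks of radius $\tfrac12$: if $\Bscr$ is such a covering, then placing a hitting point at the centre of each ball of $\Bscr$ that actually meets $\{c(C):C\in N[C_0]\}$ hits each inner disk $B_2(c(C),\tfrac12)$ (because some covering ball contains $c(C)$, and its centre is then within $\tfrac12$ of $c(C)$), hence hits each $C$. So it suffices to cover a disk of radius $\sqrt2$ — equivalently, after scaling, a disk of radius $2\sqrt2\approx 2.83$ — by unit-diameter disks, and show $12$ suffice.

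The main obstacle is therefore the packing/covering estimate: how few radius-$\tfrac12$ disks cover a radius-$\sqrt2$ disk. A clean way I would present it: scale by $2$ so we must cover a disk $D$ of radius $R=2\sqrt2$ by unit disks. Overlay a square grid of side $1$ (so each grid cell has diameter $\sqrt2 < 1$ — wait, $\sqrt2>1$, so instead use grid side $\tfrac{1}{\sqrt2}$ so each cell has diameter exactly $1$ and is covered by one unit disk centred at the cell); then count grid cells meeting $D$. A disk of radius $2\sqrt2$ meets at most the cells in a bounding square of side $2(2\sqrt2)+\tfrac{2}{\sqrt2} = 4\sqrt2 + \sqrt2 = 5\sqrt2\approx 7.07$, i.e.\ a $\lceil 5\sqrt2\cdot\sqrt2\rceil = 10$-by-$10$ array of cells, which is far more than $12$. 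So the crude grid is too wasteful and I would instead use a tailored configuration: place $12$ disks of radius $\tfrac12$ with centres chosen so that one sits at $c(C_0)$ and the others are arranged in one or two concentric rings (for instance $3$ inner and $8$ outer, or a hexagonal-type pattern) covering $B_2(c(C_0),\sqrt2)$. I would verify the covering by the standard argument: it is enough to check that every point of the boundary circle $\overline B_2(c(C_0),\sqrt2)$ and the relevant annulus is within $\tfrac12$ of some centre, which reduces to a finite angular case check — pick centres on a circle of radius $\rho$ at angles $\tfrac{2\pi k}{m}$ and note a ring of $m$ such disks covers the annulus $[\rho-\delta,\rho+\delta]$ where $\delta$ depends on $m$ and the chord length; choosing the radii and $m$'s so the annuli overlap and reach from $0$ out to $\sqrt2$. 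The bookkeeping here (choosing two radii and the counts $m_1+m_2+1 = 12$ so the covering closes up) is the real work; everything else is the reduction above and a citation of Lemma~\ref{lem::induction_idea}.

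One more point worth making in the write-up: the bound $\tau(N[C_0])\le 12$ does not need the left-most hypothesis, so no refinement is claimed here; the sharper values $10$ and $6$ of Lemma~\ref{lem:hitting_neighbours} require the hole-filling tools of Section~\ref{sec:holes} and are deferred. This keeps Proposition~\ref{prop:hit_12} self-contained as a warm-up: it uses only the inner/outer disk observation and an explicit finite covering of a fixed disk, then the generic greedy induction Lemma~\ref{lem::induction_idea} to pass from the neighbourhood bound to $\tau(\Cscr)\le 12\nu(\Cscr)$.
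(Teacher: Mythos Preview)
Your overall architecture matches the paper's: reduce to covering the set of centres by radius-$\tfrac12$ disks via the inner-disk trick, then invoke Lemma~\ref{lem::induction_idea}. The gap is in the choice of bounding region. You confine the centres $c(C)$, $C\in N[C_0]$, to the disk $B_2(c(C_0),\sqrt2)$ and then try to cover that disk by twelve radius-$\tfrac12$ disks. This is impossible: the optimal covering of a unit disk by $12$ congruent disks needs radius about $0.361$, so twelve radius-$\tfrac12$ disks cover a disk of radius at most $\approx 1.385<\sqrt2$. Your ring-construction sketch cannot close up for exactly this reason (any split into ``boundary'' and ``inner'' disks runs out of area), which is why the bookkeeping you defer would never succeed.

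The paper avoids this by bounding the centres in a \emph{square} $T$ of side $\sqrt2+1$ rather than a disk of radius $\sqrt2$: rotate coordinates so $C_0$ is axis-parallel, take $p\in C\cap C_0$, and use $|c(C)_x-c(C_0)_x|\le \tfrac12+\tfrac{\sqrt2}{2}$ coordinatewise. Neither region contains the other, but the square \emph{can} be covered by twelve radius-$\tfrac12$ disks; the paper cites Nurmela and \"Osterg\aa rd's tables, which give $\tfrac{1}{2r_{12}}>\sqrt2+1$. So the fix is simply to replace your disk by this square (exploiting the shape of $C_0$, not just its diameter) and quote the known covering result instead of attempting an ad hoc ring configuration.
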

\begin{proof}
By Lemma~\ref{lem::induction_idea}, the second inequality is a straightforward consequence of the first one.
The centres of any unit square intersecting $C_0$ are clearly contained in a square $T$ of size  $\sqrt{2} + 1$. Denote by $\Bscr$ the family of disks we get by replacing each $C\in N[C_0]$ by its inner disk $B\subseteq C$ of radius $\frac12$. According to Proposition~\ref{prop:nu_alpha}, $\tau(N[C_0])\le \tau(\Bscr) = \zeta_2(c(\Bscr))$, in other words, $\tau(N[C_0])$ can be upper bounded by the minimum number of disks of radius $\frac12$ that cover $T$. To prove that $12$ such disks are sufficient, it is enough to give these disks. Since we will prove the better bound $10$ with a more powerful method, we just refer to a result of Nurmela and \"Osterg{\aa}rd. 
In \cite{2000_Nurmela}, they provide for $1\le n\le 30$ the minimum $r_n$ of the equal radii of $n$ disks covering a unit square. Then proportionally, $\frac1{2r_n}$ is the maximum size of a square that can be covered by disks of radius $\frac12$, and $\frac1{2r_{12}}> \sqrt{2} + 1$.
\end{proof}

To summarize, the hitting and covering problems are equivalent for axis-parallel unit squares (Corollary~\ref{cor:nu_alpha}), but they are not if some of the given unit squares are not axis-parallel.
We overcome this difficulty by covering with the inner disks of the unit squares instead of the unit squares themselves and packing the outer disks. In this way, the method can be saved at the price of weaker bounds. The loss can be decreased by showing that the covering does not have to be perfect: ``holes'' with certain metric properties can be \quotes{patched.}  
The main progress of our work is reached by three observations stating that the centres of the covering disks actually hit \emph{more} squares than those with centres in the union of the disks. They also hit the squares whose centres are in the holes if these uncovered territories are small enough.
For example, Figure \ref{fig:3_balls} 
shows three light (blue\footnote{The text is meant to be understandable without the colours as well.}) disks of diameter $1$, whose centres hit each unit square having its centre in the light (blue) region, since it even hits their inner disks by Corollary~\ref{cor:nu_alpha}; however, even if the centre is not in the light (blue) region, but in the dark (blue) \quotes{hole,} as for the represented unit square, the square is hit by one of the three centres. So the dark (blue) hole can be ``patched'' (Lemma ~\ref{lem:inside}). In Sections \ref{sec:holes} and \ref{sec:proofs}, the idea of allowing holes in the cover and covering them with ``patches'' is further explored and used to improve the bound offered by Proposition~\ref{prop:hit_12}.

\begin{figure}[ht]
	\centering
		\includegraphics[scale =0.7]{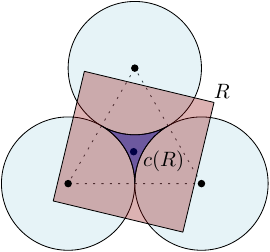}
		\caption{A ``hole'' not covered by any of three disks, but ``patched'': the three vertices of the triangle hit all unit squares having their centre in the dark (blue) hole as well.}
		\label{fig:3_balls}
\end{figure}

\subsection{A bound on the hitting number of convex sets}\label{sub:hitting_general_bound}
In this section, we show how to apply the tools introduced in Sections~\ref{sub:hitting_degeneracy} and~\ref{sub:hitting_covering} to convex sets in general. Focusing exclusively on how the covering tool can be used will help us construct more complex geometric arguments by mixing several tools.
At the same time, we see that such a brute force application of the covering tool already shows a constant bound for  $\tau/\nu$ for the translates of rotated homothetic copies of a fixed convex set.
The rest of the paper will then enrich this framework with novel geometric arguments, providing essential improvements for squares.

For a compact convex set $K$, we can define the {\em slimness} of $K$, denoted by $\ar_{||\,||}(K)$, as the ratio of the radius of the smallest ball (in norm $||\,||$) containing $K$ and the largest ball contained in $K$. If the latter is $0$, the ratio is $\infty$.
For the most commonly used Euclidean and max norms: $\ar_2(K)$ denotes the ratio of the radius of the smallest disk containing $K$ and the largest disk contained in $K$; similarly, $\ar_{\infty}(K)$ is the ratio of the sides of the smallest axis-parallel square containing $K$ and the largest axis-parallel square contained in $K$.
Observe that $\ar_2(K)$ is invariant under translations, rotations and homothety.
Moreover, since a disk of radius $r$ is contained in a square of size $2r$ and contained a square of size $\sqrt{2}r$,  the parameter $\ar_2(K)$ and $\ar_{\infty}(K)$ are within a factor of $\sqrt{2}$ from each other.
Pach's coefficient $q(K)$ for a convex set $K$ is the quotient of the {\em areas}  of the outer disk of $K$, and $K$ itself, implying $q(K)\le\ar_2(K)^2$. 
The slimness is a parameter that has been often used when studying the packing and hitting number (see, for example, \cite{2006_Ahlswede} and \cite{2003_Chan}).

The proof of the following theorem is based on the local homothety operation, applied here to the $\tau/\nu$ ratio. 

\begin{theorem}\label{thm:hit_convexsets}
Let $\Fscr$ be a family of convex sets in the plane. Then
\begin{equation*}
(i) \ \tau(\Fscr) \leq 9 \ar_{\infty}(\Fscr) ^2 \nu(\Fscr).
\end{equation*}
Moreover, if the sets are centrally symmetric,
    $(ii) \ \tau(\Fscr) \leq 4 \lceil  \ar_{\infty}(\Fscr)    \rceil^2 \nu(\Fscr)$,
and
    $(iii) \ \tau(\Fscr) \leq 2 \lceil  \ar_{\infty}(\Fscr)    \rceil^2  \nu(\Fscr$),
if the inner squares are all of the same size.
\end{theorem}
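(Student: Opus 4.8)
The plan is to reduce everything to the covering tool from Proposition~\ref{prop:nu_alpha} combined with the greedy induction of Lemma~\ref{lem::induction_idea}, using local homothety to handle varying sizes. Concretely, I would set $t_0$ and $t$ to be bounds on $\tau(N[F])$ for a suitably chosen $F$, and then invoke Lemma~\ref{lem::induction_idea}. The three parts of the theorem differ only in which auxiliary norm we pass through and how tight the covering count is, so the skeleton is the same in all three cases.

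For part $(i)$: scale so that the smallest inscribed disk among all $F \in \Fscr$ has radius $\frac12$; call this normalized family and let $F_0$ be a set achieving it. Apply local homothety (centered at a point of $F \cap F_0$, ratio $\le 1$) to every $F \in N(F_0)$ so that its \emph{outer} square has side exactly $s := \ar_\infty(\Fscr)$ — wait, more carefully, we want the inner disk controlled: after normalization the inscribed disk of $F_0$ has radius $\frac12$, hence its inscribed axis-parallel square has side at least $\frac{\sqrt2}{2}\cdot$something; to avoid these constants I would instead work directly with $\ar_\infty$ and the max-norm, scaling so the largest inscribed axis-parallel square of the smallest set has side $1$. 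Then by the definition of $\ar_\infty$, each $F$ contains an axis-parallel unit square and is contained in an axis-parallel square of side $\ar_\infty(\Fscr)$. After local homothety toward a common point of $F\cap F_0$ with ratio $\le 1$, every image $F'$ with $F\in N[F_0]$ still contains an axis-parallel unit square and has its outer axis-parallel square of side $\le \ar_\infty$, and all these outer squares lie in a fixed axis-parallel square $T$ of side $2\ar_\infty + 1 \le 3\ar_\infty$ centered appropriately (using $\ar_\infty \ge 1$). Now replace each $F'$ by the axis-parallel unit square it contains: hitting those hits the $F'$, hence the $F$. By Corollary~\ref{cor:nu_alpha}, $\tau$ of the unit-square family contained in $T$ equals $\zeta_\infty$ of their centres, which lie in an axis-parallel square of side $\le 3\ar_\infty - 1$; covering that by $2\times2$ max-balls needs at most $\lceil (3\ar_\infty-1)/2 \rceil^2 \le \lceil 3\ar_\infty/2\rceil^2$ of them — this is where I need to be careful to land on exactly $9\ar_\infty^2$. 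A cleaner route: the centres of \emph{all} unit squares (axis-parallel) meeting a fixed unit square lie in a $3\times 3$ axis-parallel square, wait that gives side $2$, so for sets of slimness $\ar_\infty$ the centres lie in a square of side $2\ar_\infty+1$; partitioning a square of side $3\ar_\infty$ into $\lceil 3\ar_\infty\rceil^2$ unit cells, each cell is covered by one max-ball of radius $\frac12$, giving $\tau(N[F_0]) \le \lceil 3\ar_\infty\rceil^2$. Taking $t_0 = t = 9\ar_\infty^2$ (absorbing the ceiling into the bound since $\lceil 3\ar_\infty\rceil^2 \le (3\ar_\infty+1)^2$ is not quite $\le 9\ar_\infty^2$ unless $\ar_\infty$ large — so one actually phrases degeneracy/the inequality asymptotically or states $\le 9\ar_\infty^2$ only when this holds, matching the ``Cor.'' entry in Table~\ref{tab:tau_bounds}). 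Then Lemma~\ref{lem::induction_idea} gives $\tau(\Fscr) \le 9\ar_\infty^2\,\nu(\Fscr)$.

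For parts $(ii)$ and $(iii)$ we exploit central symmetry to halve (or quarter) the count. A centrally symmetric convex set $K$ of slimness $\ar_\infty$, after normalizing its largest inscribed axis-parallel square to side $1$, is contained in the axis-parallel square of side $\ar_\infty$ \emph{centered at $c(K)$}; two such sets $K_1, K_2$ meet only if $\|c(K_1)-c(K_2)\|_\infty \le \ar_\infty$ (not $2\ar_\infty$!), because central symmetry forces the separating structure — more precisely, if they intersect, the point of intersection, reflected through each center, stays in each set, so the centers are within the ``sum'' region whose $l_\infty$-radius is $\ar_\infty$. Hence the centres of $N[F_0]$ lie in an axis-parallel square of side $2\ar_\infty$, coverable by $\lceil \ar_\infty \rceil^2$ axis-parallel unit squares, i.e. max-balls of radius $\frac12$. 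Since each $F$ contains an axis-parallel unit square (its inscribed one), hitting these covering-square centres hits all of $N[F_0]$: $\tau(N[F_0]) \le \lceil \ar_\infty\rceil^2$, wait we need $4\lceil\ar_\infty\rceil^2$, so the covering must be with max-balls matching unit \emph{squares} $B_\infty(c,1/2)$ of side $1$ — a square of side $2\ar_\infty$ needs $\lceil 2\ar_\infty\rceil^2 \le 4\lceil\ar_\infty\rceil^2$ such unit squares. That yields $t_0=t=4\lceil\ar_\infty\rceil^2$ and $(ii)$. For $(iii)$, when all inscribed squares have the \emph{same} size, we do not need local homothety's generality: we can choose $F_0$ with a left-most center, so the relevant centres lie in a \emph{half}-square (a $2\ar_\infty \times \ar_\infty$, roughly) or rather we can orient the covering to use only the right half, halving to $t_0 = 2\lceil\ar_\infty\rceil^2$ (and, as in the axis-parallel warm-up, $t$ still $= 2\lceil\ar_\infty\rceil^2$ works with the left-most-center choice applied inside each subfamily).

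The main obstacle is bookkeeping the constants so they land exactly on $9\ar_\infty^2$, $4\lceil\ar_\infty\rceil^2$, $2\lceil\ar_\infty\rceil^2$ rather than slightly worse (the ``$+1$'' terms from the side $2\ar_\infty + 1$ of the centre region in the non-symmetric case, and making sure the local homothety really does shrink the outer square into the required bounding box while keeping an inscribed unit square). The central-symmetry improvement — the claim that intersecting symmetric bodies have centres within $l_\infty$-distance equal to one slimness-radius rather than two — is the conceptual crux of $(ii)$ and $(iii)$ and deserves a clean one-line justification via reflecting an intersection point through both centres. Everything else is a routine assembly of Proposition~\ref{prop:nu_alpha}, Corollary~\ref{cor:nu_alpha}, the local homothety operation, and Lemma~\ref{lem::induction_idea}.
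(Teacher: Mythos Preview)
Your approach is the same as the paper's: pick $F_0$ with smallest inner axis-parallel square (normalized to side~$1$), apply local homothety to each neighbour so that its inner square is also a unit square, bound the region containing the centres of these inner squares, cover that region by unit max-balls via Corollary~\ref{cor:nu_alpha}, and feed the resulting hitting-degeneracy constant into Lemma~\ref{lem::induction_idea}. Parts $(ii)$ and $(iii)$ are likewise handled exactly as you describe.

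The only genuine gap is the constant in $(i)$, which you flag yourself but do not close. You correctly compute at one point that the centres of the inscribed unit squares of the $F'$ lie in a square of side at most $3\ar_\infty - 1$ (since every $F'$ lies in $B_\infty(c_0,\tfrac32\ar_\infty)$ and the inscribed unit square's centre is therefore in $B_\infty(c_0,\tfrac32\ar_\infty - \tfrac12)$), but in your ``cleaner route'' you drop the $-1$ and cover a square of side $3\ar_\infty$ instead, landing on $\lceil 3\ar_\infty\rceil^2$, which indeed need not be $\le 9\ar_\infty^2$. The fix is simply to keep the $-1$: a square of side $3\ar_\infty - 1$ is covered by $\lceil 3\ar_\infty - 1\rceil^2$ unit squares, and since $\lceil x - 1\rceil = \lceil x\rceil - 1 < x$ for every real $x$, one gets $\lceil 3\ar_\infty - 1\rceil^2 < 9\ar_\infty^2$. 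That single subtraction is exactly what delivers the stated constant without any asymptotic hedging.

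One minor point on $(ii)$: you still need local homothety there (the sets may have different sizes), and your reflection argument can be replaced by the cleaner observation the paper uses --- for a centrally symmetric convex set the smallest circumscribed and the largest inscribed axis-parallel squares are both centred at the centre of symmetry. After homothety each $F'$ therefore has both its outer $\ar_\infty$-square and its inner unit square centred at $c(F')$, and $\|c(F') - c_0\|_\infty \le \tfrac{\ar_\infty}{2} + \tfrac{\ar_\infty}{2} = \ar_\infty$ follows immediately from the shared point $f\in F_0\cap F'$ lying in both outer squares.
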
 


\begin{proof} 
We proceed by using local homotheties to deduce hitting-degeneracy (as Pach~\cite{1980_Pach} did by degeneracy for the chromatic number, see the proof of Theorem \ref{thm:Pach}), and then, applying Lemma~\ref{lem::induction_idea}.  In the proof, we use the max-norm $\ar_{\infty}(\Fscr) = \ar_{\infty}$, so a ball of radius $r\in\mathbb{R}$ is an axis-parallel square of side length $2r$. All the squares in this proof are axis-parallel so we omit this specification.   

Let $F_0\in\Fscr$ be a convex set with the smallest inner square and assume for simplicity that the side length of its inner square is $1$, and $F_0 \subseteq B_{\infty}(c_0,\frac{\ar_{\infty}}{2})$ for a $c_0\in\mathbb{R}^2$.
Apply local homotheties: for each $F \in N(F_0)$ consider a point $f \in F_0 \cap F$ and take a homothetic copy $F'$ of $F$ with centre  $f$ and ratio $\lambda \leq 1$ so that the image of the inner square of $F$ (which is the inner square of $F'$) is a unit square.  Define $F_0':=F_0$. Since every $F$ is convex, we have that:
\begin{align*}
    &\textrm{(a)} \ F' \subseteq F; \\
    &\textrm{(b)} \ f\in F_0 \cap F' \textrm{, in particular, } F_0 \cap F'\ne\emptyset; \\
    &\textrm{(c)} \ \textrm{There is a square of side length } \ar_{\infty} \textrm{containing } F'.
\end{align*}

By (a), we can hit the sets in $N[F_0]$ by hitting  $N':=\{F': F\in N[F]\}$, which can be achieved, in turn, by hitting their inner squares. Denote the family of these inner squares by $S'$.  Properties (b) and (c) imply then that all sets in $N'$,  are contained in $B_\infty(c_0,\frac32 \ar_{\infty})$, and therefore the centres of the inner squares of $S'$, all of side length $1$, are in  $B_\infty(c_0,\frac32(\ar_{\infty} - 1))$ (Figure \ref{fig:large_outer_disk}).
By Corollary~\ref{cor:nu_alpha}, then hitting {\em all possible inner squares} is equivalent to covering the square $B_\infty(c_0,\frac32(\ar_{\infty} - 1))$ by unit squares.
Hence, 
\begin{equation*}
    \tau(N[F_0]) \leq  \tau(N') \leq \tau(S') \leq \zeta(B_\infty(p,\frac32(\ar_{\infty} - 1))).
\end{equation*}
We immediately get
$\zeta(B_\infty(c_0,\frac32(\ar_{\infty} - 1))) \leq  \lceil  3 (\ar_{\infty} -1)    \rceil ^2 \leq 9 \ar_{\infty} ^2.$
Applying this to an arbitrary subfamily of $\Fscr$, we see that it is hitting-$9 \ar_{\infty} ^2$-degenerate, so  $(i)$ follows from Lemma~\ref{lem::induction_idea}. 

\begin{figure}[ht]
    \centering
    \includegraphics[scale=0.6]{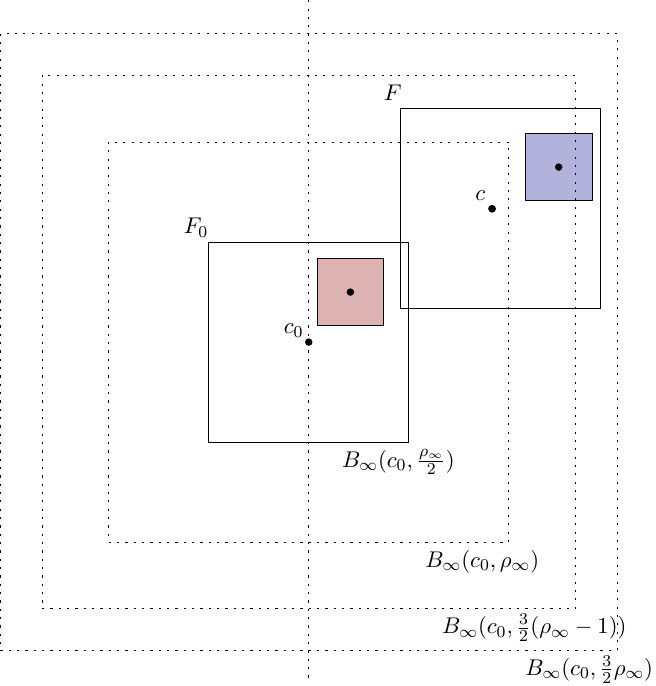}
    \caption{The outer and inner squares of two intersecting convex sets with slimness $\ar_{\infty}$ and inner unit squares.}
    \label{fig:large_outer_disk}
\end{figure}
The proof of $(ii)$ follows with the only difference that it exploits the fact that the centres of the inner and outer squares coincide, and the centres of outer squares of sets in $N'$ are all in $B_\infty(c_0,\ar_{\infty})$, the square of centre $c_0$ and side length $2\ar_{\infty}$.  

If, in addition, the inner squares of the centrally symmetric sets in $\Fscr$ have all the same size,  local homothety is not needed anymore, leaving
 us free to choose $F_0\in\Fscr$ to have in addition an inner square with a left-most centre. Now the centres of outer squares are contained in one half of $B_\infty(c_0,\ar_{\infty})$, denote it by $M$. We get
$ \tau(N[F_0]) \leq \zeta(M) \leq \lceil   \ar_{\infty}    \rceil \lceil 2 \ar_{\infty}    \rceil \leq 2\lceil   \ar_{\infty}    \rceil^2$.
\end{proof}

In order to bound the hitting number of translates of rotated homothetic copies of a fixed convex set $K$, it is unfortunately not sufficient to substitute its slimness to  Theorem \ref{thm:hit_convexsets}, since $\rho_{\infty}$ is not invariant under rotation.  However,  $\rho_{2}$ is, and using this, the theorem can be applied. Since $\rho_{\infty}(K) \le \sqrt{2}\ar_2(K)$, this only adds a factor of $2$ in the bounds.

\begin{corollary}\label{cor:trans_rot_homothetic}
Let $\Fscr$ be a family of translates of rotated homothetic copies of a fixed convex set $K$ in the plane. Then
\begin{equation*}
    \tau(\Fscr) \leq 18 \ar_{2}(K) ^2 \nu(\Fscr).
\end{equation*}
The upper bound decreases to $8 \lceil \ar_{2}(K) \rceil ^2 \nu(\Fscr)$ when $K$ is centrally symmetric, and further to $4 \lceil \ar_{2}(K) \rceil ^2 \nu(\Fscr)$ if $K$ is centrally symmetric and $\Fscr$ contains exclusively translates of rotated copies of $K$.
\end{corollary}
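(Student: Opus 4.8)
The plan is to deduce Corollary~\ref{cor:trans_rot_homothetic} from Theorem~\ref{thm:hit_convexsets} by controlling the max-norm slimness $\ar_\infty$ of the members of $\Fscr$ in terms of the rotation-invariant Euclidean slimness $\ar_2(K)$. First I would recall that $\ar_2$ is invariant under translations, rotations, and homotheties (stated just before Theorem~\ref{thm:hit_convexsets}), so every $F\in\Fscr$ has $\ar_2(F)=\ar_2(K)$; in particular $\ar_2(\Fscr)=\ar_2(K)$. Next I would invoke the comparison already established in the excerpt: for any compact convex $K$, $\ar_2(K)$ and $\ar_\infty(K)$ are within a factor $\sqrt2$ of each other, and more precisely $\ar_\infty(K)\le\sqrt2\,\ar_2(K)$ (this is the direction used in the sentence following Theorem~\ref{thm:hit_convexsets}). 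Hence $\ar_\infty(F)\le\sqrt2\,\ar_2(K)$ for every $F\in\Fscr$, so $\ar_\infty(\Fscr)\le\sqrt2\,\ar_2(K)$.

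Then I would simply substitute into the three parts of Theorem~\ref{thm:hit_convexsets}. For part $(i)$ this gives $\tau(\Fscr)\le 9\,\ar_\infty(\Fscr)^2\,\nu(\Fscr)\le 9\cdot 2\,\ar_2(K)^2\,\nu(\Fscr)=18\,\ar_2(K)^2\,\nu(\Fscr)$, which is the first displayed bound. For the centrally symmetric case, part $(iii)$ of Theorem~\ref{thm:hit_convexsets} applies when the inner squares all have the same size: if $\Fscr$ consists only of translates of rotated copies of $K$ (no homothety), then all members are congruent, so their inner squares are congruent, and part $(iii)$ yields $\tau(\Fscr)\le 2\lceil\ar_\infty(\Fscr)\rceil^2\nu(\Fscr)$. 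Here one must push the ceiling through the factor $\sqrt2$: since $\lceil\sqrt2\,x\rceil\le\lceil\sqrt2\lceil x\rceil\rceil\le 2\lceil x\rceil$ (as $\sqrt2<2$ and $\lceil 2y\rceil=2\lceil y\rceil$ is false in general but $\lceil\sqrt2 m\rceil\le 2m$ for integers $m\ge1$ since $\sqrt2 m\le 2m$), we get $\lceil\ar_\infty(\Fscr)\rceil^2\le (2\lceil\ar_2(K)\rceil)^2/2$... — more cleanly, $\lceil\ar_\infty(\Fscr)\rceil\le\lceil\sqrt2\,\ar_2(K)\rceil\le 2\lceil\ar_2(K)\rceil$, hence $2\lceil\ar_\infty(\Fscr)\rceil^2\le 8\lceil\ar_2(K)\rceil^2$, giving the claimed $4\lceil\ar_2(K)\rceil^2\nu(\Fscr)$ only after halving; so in fact for this case I would use part $(iii)$ more carefully, or note that the translates-only hypothesis lets $F_0$ be chosen with left-most centre directly, recovering the factor $4$. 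For the intermediate homothetic centrally symmetric case, part $(ii)$ gives $4\lceil\ar_\infty(\Fscr)\rceil^2\nu(\Fscr)\le 4\bigl(2\lceil\ar_2(K)\rceil\bigr)^2\nu(\Fscr)$, and one checks this simplifies to the stated $8\lceil\ar_2(K)\rceil^2\nu(\Fscr)$ — which requires the sharper estimate $\lceil\ar_\infty\rceil\le\lceil\sqrt2\lceil\ar_2\rceil\rceil$ together with $\lceil\sqrt2 m\rceil^2\le 2m^2$ for integers $m\ge1$.

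The main obstacle I anticipate is precisely the bookkeeping with ceilings: the naive bound $\lceil\sqrt2\,\ar_2\rceil\le 2\lceil\ar_2\rceil$ loses too much (it would give $16$ and $8$ rather than $8$ and $4$), so one needs the finer fact that for every positive integer $m$ one has $\lceil\sqrt2\,m\rceil^2\le 2m^2$, equivalently $\lceil\sqrt2\,m\rceil\le\lfloor\sqrt{2m^2}\rfloor$, which holds because $\sqrt2\,m$ and its ceiling both lie below $\sqrt2\,m+1\le\sqrt{2m^2}+1$ and one verifies no integer is skipped — this is an elementary but genuinely needed number-theoretic checking step, and it is the only place the argument is not a one-line substitution. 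Everything else is routine: plug the rotation-invariant $\ar_2(K)$ into Theorem~\ref{thm:hit_convexsets} after the $\sqrt2$ conversion, and read off the three bounds. I would present the proof in essentially three sentences, one per case, after the single preliminary paragraph establishing $\ar_\infty(\Fscr)\le\sqrt2\,\ar_2(K)$.
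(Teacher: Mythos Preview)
Your overall plan is exactly the paper's: use the rotation-invariance of $\ar_2$ together with $\ar_\infty(F)\le\sqrt2\,\ar_2(F)=\sqrt2\,\ar_2(K)$ and substitute into Theorem~\ref{thm:hit_convexsets}; the paper's own argument is the single sentence ``this only adds a factor of $2$ in the bounds.'' For part~$(i)$ this substitution works cleanly and gives $18\,\ar_2(K)^2\,\nu(\Fscr)$ as you write.

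The gap is in your handling of parts~$(ii)$ and~$(iii)$. First, the ``finer fact'' you rely on, $\lceil\sqrt2\,m\rceil^2\le 2m^2$ for positive integers $m$, is false for \emph{every} $m\ge1$: since $\sqrt2\,m$ is irrational one has $\lceil\sqrt2\,m\rceil>\sqrt2\,m$, so the square is strictly greater than $2m^2$ (already $m=1$ gives $4>2$, $m=2$ gives $9>8$); your equivalent reformulation $\lceil\sqrt2\,m\rceil\le\lfloor\sqrt{2m^2}\rfloor$ asserts that the ceiling of an irrational is at most its floor. Second, for the translates-plus-rotations case you claim that congruent rotated copies of $K$ have inner axis-parallel squares of the same size, but this is false in general --- rotate a unit square by $\pi/4$ and its largest inscribed axis-parallel square shrinks by a factor $\sqrt2$ --- so the equal-inner-square hypothesis of Theorem~\ref{thm:hit_convexsets}$(iii)$ is not automatically met. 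Consequently neither the $8\lceil\ar_2(K)\rceil^2$ nor the $4\lceil\ar_2(K)\rceil^2$ bound follows from your argument. (The paper does not address these points either; its one-line proof treats the ceilings as if absent, so the issue is shared, but your attempted repair of it does not go through.)
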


These estimates are, of course, rough but do satisfy the modest goal of showing how local homothety applies to exploiting slimness and how to take advantage of particularities like central symmetry to sharpen the results.
For squares and unit squares, we will be more meticulous, focusing on obtaining the best bound we can.

\section{Filling holes}\label{sec:holes}
This section finds ``patches'' for ``holes'' uncovered by disks.
Patches cover ``for free," i.e., without using more disks for the covering.
A first kind of hole,  shown by Figure \ref{fig:3_balls}, is discussed and patched in Section~\ref{sub:inside}. 
Section~\ref{sub:outside} further develops this technique by finding another set of patches using Thales' celebrated theorem. 
Section~\ref{sub:polygon} describes how to combine the previous two covering tools to hit squares intersecting a convex polygon.
Finally, Section~\ref{sub:triangle} completes the picture by showing one more patch for filling holes.  

The initial ``covering disks'' (Section \ref{sub:hitting_covering}) strengthened by these patches allow us to prove our best hitting/packing ratio (Section~\ref{sec:proofs}).
 
\subsection{Filling  holes inside triangles}\label{sub:inside}

First, we prove the assertion anticipated by Figure~\ref{fig:3_balls}. 

\begin{lemma}[Triangular patch]\label{lem:inside} Let $a,b,c \in \mathbb{R}^2$ be  three points at distance at most $1$ from one another.  Then any square $C$ of sides at least $1$ and centre $c(C)$ in $T:=\conv(\{a, b, c\})$ contains at least one point from $\{a,b,c\}$.
\end{lemma}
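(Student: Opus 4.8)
The plan is to prove the contrapositive: suppose a square $C$ with side length $\geq 1$ has its centre $c(C) \in T = \conv(\{a,b,c\})$ but contains none of $a,b,c$; I will derive a contradiction. The key observation is that a square of side $\ell \geq 1$ centred at a point $p$ contains the closed disk $B_2(p,\tfrac{1}{2})$ (its inscribed disk), so if $C$ misses $a$, $b$, and $c$, then each of these three points lies at distance strictly greater than $\tfrac12$ from $c(C)$. I would therefore like to show that any point of $T$ is within distance $\tfrac12$ of at least one of the vertices $a,b,c$ — i.e.\ that $T$ is covered by the three closed disks of radius $\tfrac12$ centred at its vertices. This is a clean, purely metric claim about a triangle of diameter at most $1$.

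For that covering claim, I would argue as follows. Let $p \in T$ and write $p = \alpha a + \beta b + \gamma c$ with $\alpha,\beta,\gamma \geq 0$, $\alpha+\beta+\gamma = 1$. Since the diameter of $T$ is at most $1$, we have $\|p - a\| \leq \beta\|a-b\| + \gamma\|a-c\| \leq \beta+\gamma = 1-\alpha$, and symmetrically $\|p-b\| \leq 1-\beta$, $\|p-c\| \leq 1 - \gamma$. At least one of $\alpha,\beta,\gamma$ is $\geq \tfrac13$ (they sum to $1$), say $\alpha \geq \tfrac13$; then $\|p-a\| \leq 1-\alpha \leq \tfrac23$. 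Unfortunately $\tfrac23 > \tfrac12$, so this crude bound is not enough on its own, and this gap is the main obstacle. The naive barycentric estimate overshoots, so I expect to need a sharper geometric argument.

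To close the gap I would use the fact that the three disks of radius $\tfrac12$ about the vertices of \emph{any} triangle with all sides $\leq 1$ really do cover the triangle — the worst case being the equilateral triangle of side exactly $1$, whose circumradius is $\tfrac{1}{\sqrt3} < \tfrac12$, so even its circumcentre (the farthest interior point from all vertices in that case) is covered, and more generally each point of the triangle lies in the Voronoi cell of the nearest vertex and the relevant cell is contained in the disk. Concretely: partition $T$ by perpendicular bisectors into the three regions of points closest to $a$, to $b$, and to $c$; it suffices to check that in the region closest to (say) $a$, the farthest point from $a$ is at distance $\leq \tfrac12$. That farthest point is either a vertex of $T$ (handled, since all sides are $\leq 1$, hence $\leq \tfrac12$... wait, sides are $\leq 1$ not $\leq\tfrac12$) — so more carefully, the extreme point of that Voronoi region is the circumcentre of $T$ or a point where a bisector meets an edge, and one checks directly that the circumradius of a triangle with all sides $\leq 1$ is at most $\tfrac{1}{\sqrt 3} < \tfrac12$ (circumradius is maximized, under the side-length-$\leq 1$ constraint, at the equilateral triangle), and the bisector--edge intersection points are midpoints-type points that are even closer. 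Thus every point of $T$ is within $\tfrac12$ of some vertex, giving the contradiction with $C$ missing all of $a,b,c$, and the lemma follows. I would present the circumradius bound as the crux and verify it by the standard formula $R = \frac{abc}{4\,\mathrm{Area}}$ or by noting $a = 2R\sin A$ forces $R \geq \frac{a}{2}$ only when $A \geq \tfrac{\pi}{2}$, in which case the triangle is obtuse and the disks about the two acute-angle vertices already cover it.
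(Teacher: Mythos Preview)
Your approach has a fatal arithmetic error that breaks the whole argument: you claim $\tfrac{1}{\sqrt{3}} < \tfrac{1}{2}$, but in fact $\tfrac{1}{\sqrt{3}} = \tfrac{\sqrt{3}}{3} \approx 0.577 > \tfrac{1}{2}$. Consequently, the covering claim you rely on is simply false. For the equilateral triangle with side exactly $1$, the circumcentre lies at distance $\tfrac{1}{\sqrt{3}}$ from each vertex, so it is \emph{not} covered by any of the three disks $B_2(a,\tfrac12)$, $B_2(b,\tfrac12)$, $B_2(c,\tfrac12)$. Hence there exist points $p \in T$ with $\dist(p,a),\dist(p,b),\dist(p,c)$ all strictly greater than $\tfrac12$, and a unit \emph{disk} centred at such a $p$ misses all three vertices. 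Your reduction ``square $\Rightarrow$ inscribed disk of radius $\tfrac12$'' therefore throws away information that is essential to the lemma.

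The point is that the lemma genuinely depends on $C$ being a square, not merely on $C$ containing its inscribed disk. The paper's proof exploits this: it looks at the polygon $K = T \cap C$ and argues that if no vertex of $T$ lies in $C$, then all vertices of $K$ lie on the boundary of $C$; since any two points of $T \setminus \{a,b,c\}$ are at distance strictly less than $1$, these vertices cannot lie on two \emph{parallel} sides of $C$ (which are at distance $\geq 1$), so they all lie on two adjacent sides $[x,y]$, $[y,z]$. But then the only segment in $[x,y]\cup[y,z]$ through $c(C)$ is $[x,z]$, forcing $x,z \in T$ with $\dist(x,z) \geq \sqrt{2} > 1$, a contradiction. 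This argument uses the combinatorics of the square's sides in a way your disk-based reduction cannot replicate.
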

\begin{proof}
By the condition,  $c(C)\in K:=T\cap C$.  We prove that {\em the polygon   $K$ has a vertex which is a vertex of $T$.}  

Suppose for a contradiction that this is not true. Then the vertices of $K$ are vertices of $C$ or intersections of a side of  $C$ and of a side of $T$, so {\em the vertices of $K$ lie on the sides of $C$, and actually all of them lie on two intersecting sides $[x,y]$,  $[y,z]$ of $C$} since the distance of any pair of vertices on two distinct parallel sides of $C$ is at least $1$, while the distance of any pair of points in $T\setminus \{a,b,c\}$ is strictly smaller than $1$.
However, among the convex hulls of pairs of points in $[x,y]\cup [y,z]$,  only $[x,z]$ contains $c(C)$. Since $c(C)\in K$, we have  $x, z \in K\subseteq T$. But $\dist(x,z)>1$, a contradiction.
\end{proof}

\subsection{Filling holes outside separating lines}\label{sub:outside} 

The following equivalence is an immediate consequence of Thales' Theorem.

\begin{proposition}\label{prop:Thales}
Let $a,b$ be two points on a line $L\subseteq\mathbb{R}^2$, $q$ the midpoint of the segment $[a,b]$, and $c$ a point in $\mathbb{R}^2 \setminus L$.
Then there exists a right-angled triangle with a closed subsegment of the open interval $(a,b)$ as hypotenuse and $c$ as the third vertex if and only if $\dist (c, q) < \dist (q,a)$, i.e.,  $c$ is in the open disk with centre $q$ and radius $[q,a]$.
\end{proposition}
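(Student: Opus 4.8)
The plan is to unwind both sides of the claimed equivalence into the same condition on $\dist(c,q)$. Place coordinates so that $L$ is the $x$-axis, $q$ is the origin, $a=(-\rho,0)$ and $b=(\rho,0)$ with $\rho=\dist(q,a)=\dist(q,b)>0$, and write $c=(x_0,y_0)$ with $y_0\ne 0$ since $c\notin L$. A right-angled triangle with hypotenuse a subsegment $[a',b']\subseteq(a,b)$ of $L$ and apex $c$ exists precisely when the right angle sits at $c$ (the hypotenuse is always opposite the right angle), i.e.\ when the foot of the perpendicular from $c$ to $L$ lies strictly between $a'$ and $b'$ and the two legs $[a',c]$, $[b',c]$ meet at a right angle at $c$.

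The classical direction of Thales' Theorem says that the locus of apex points seeing a fixed segment $[a',b']$ under a right angle is exactly the circle having $[a',b']$ as a diameter (minus the endpoints). So such a triangle with hypotenuse inside $(a,b)$ exists if and only if $c$ lies on some circle whose diameter is a closed subsegment of the open segment $(a,b)$; equivalently, $c$ lies in the union of all such circles. I would then identify this union: a circle with diameter $[a',b']\subseteq(a,b)$ has its centre on the open segment $(a,b)$ and radius at most $\rho$ (with equality only in the degenerate limit $[a',b']=[a,b]$, which is excluded since we need a closed subsegment of the \emph{open} interval), and conversely any point $c$ with $\dist(c,q)<\rho$ can be enclosed on such a circle — take the diameter to be the chord of $L$ cut out by the circle through $c$ whose centre is the foot of the perpendicular from... more simply, shrink: pick $a',b'\in(a,b)$ symmetric about the foot of the perpendicular from $c$, close enough to that foot that $c$ sees $[a',b']$ at a right angle, which is possible exactly when $c$ is closer to the foot's side than... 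This is where I have to be slightly careful.

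The cleanest route, and the one I would actually write, is the direct computation: $c$ sees the segment $[a',b']$ (with $a'=(\alpha,0)$, $b'=(\beta,0)$, $\alpha<\beta$) at a right angle iff $(c-a')\cdot(c-b')=0$, i.e.\ $(x_0-\alpha)(x_0-\beta)+y_0^2=0$. Treating $\alpha,\beta$ as free subject to $-\rho<\alpha<\beta<\rho$, this equation is solvable iff $x_0-\alpha$ and $x_0-\beta$ have opposite signs (forced, since $y_0^2>0$), i.e.\ iff $\alpha<x_0<\beta$; substituting $\beta$ in terms of $\alpha$ from the equation, one checks the constraint $-\rho<\alpha<\beta<\rho$ can be met iff $x_0^2+y_0^2<\rho^2$, i.e.\ $\dist(c,q)<\dist(q,a)$, which is the asserted disk condition. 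The main obstacle is purely bookkeeping: verifying that the interval of admissible $\alpha$ is nonempty precisely under the strict inequality $\dist(c,q)<\rho$ (the endpoint case $\dist(c,q)=\rho$ forces $[\alpha,\beta]$ to degenerate to $[a,b]$ or to a point, both disallowed), so the strictness in the statement and the use of the \emph{open} interval $(a,b)$ and a \emph{closed} subsegment match up exactly. I expect no real difficulty beyond this case-check.
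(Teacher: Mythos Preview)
Your proposal is correct. The paper does not actually supply a proof of this proposition; it records it as ``an immediate consequence of Thales' Theorem'' and moves on, so there is nothing to compare your argument against beyond that invocation. Your coordinate computation via the orthogonality condition $(x_0-\alpha)(x_0-\beta)+y_0^2=0$ is a valid way to make that immediacy explicit, and the bookkeeping you flag does go through cleanly: setting $u=x_0-\alpha>0$ and $v=\beta-x_0>0$ with $uv=y_0^2$, the constraints $-\rho<\alpha<\beta<\rho$ become $u<\rho+x_0$ and $v<\rho-x_0$, and such a pair exists if and only if $y_0^2<(\rho-x_0)(\rho+x_0)=\rho^2-x_0^2$, which is exactly the open-disk condition. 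The boundary case $\dist(c,q)=\rho$ indeed forces $\{\alpha,\beta\}=\{-\rho,\rho\}$, excluded by the requirement that the hypotenuse lie in the \emph{open} interval $(a,b)$, so the strict inequality is justified.

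One stylistic remark: your first, more geometric paragraph (the union-of-Thales-circles description) meanders and then abandons itself mid-sentence; the specific construction you start there, with $a',b'$ symmetric about the foot of the perpendicular from $c$, only works when $|x_0|+|y_0|<\rho$, not on the whole open disk, which is presumably why you stalled. I would drop that paragraph entirely and present only the direct dot-product computation.
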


We say that a line {\em separates} two sets if these sets are in two different open half-planes bordered by the line. 
The following lemma is still essentially Thales's theorem, reformulating and completing the equivalence of  Proposition \ref{prop:Thales} into a form  comfortable  for hitting sets:

\begin{lemma}\label{lem:outside} 
Let $a,b$ be two points on a line $L\subseteq\mathbb{R}^2$, $\dist (a,b) \leq 1$, let $q$ be the midpoint of the segment $[a,b]$, and $c\in\mathbb{R}^2\setminus L$.
If $\dist(q, c)\ge \dist(q, a)$, then each unit square $S$,  $c\in S$ so that $L$ separates $c(S)$ from $c$, and  $S\cap [a,b] \ne\emptyset$ contains either $a$ or $b$ (Figure \ref{fig:Lemma2}).
\end{lemma}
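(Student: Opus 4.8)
The plan is to reduce the statement to Lemma~\ref{lem:inside} (the triangular patch) by constructing, from the hypotheses, a triangle of the right shape whose vertices are among the three points we are allowed to use. Suppose, for contradiction, that the unit square $S$ contains $c$, that $L$ separates $c(S)$ from $c$, that $S\cap[a,b]\ne\emptyset$, but $S$ contains neither $a$ nor $b$. The key point is that since $c(S)$ and $c$ lie on opposite open sides of $L$, the segment $[c,c(S)]$ crosses $L$ at some point, call it $m$; and since $S$ is convex with $c,c(S)\in S$, we have $m\in S\cap L$. Combined with $S\cap[a,b]\ne\emptyset$, the convex set $S\cap L$ (a subsegment of $L$) meets $[a,b]$; I will use this, together with the assumption that $a,b\notin S$, to locate where $S$ crosses $L$ relative to $a$ and $b$.

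The heart of the argument is the following dichotomy. Because $\dist(a,b)\le 1$ and $\dist(q,c)\ge\dist(q,a)=\tfrac12\dist(a,b)$, Proposition~\ref{prop:Thales} tells us that there is \emph{no} right-angled triangle with hypotenuse a subsegment of the open interval $(a,b)$ and apex $c$; equivalently, the foot of the perpendicular from $c$ to $L$ lies outside the open segment $(a,b)$, or $c$ sits far enough that the circle with diameter $[a,b]$ does not contain it. I would spell out that this forces the angle $\angle acb$ (or rather the relevant configuration) so that $a$ and $b$ lie "on the same side" in the appropriate sense, and then argue that the portion of $S$ near $L$, which must stretch from a point of $[a,b]$ across $L$ toward $c$, is forced by the unit-side constraint of $S$ to actually contain one of $a,b$. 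Concretely: pick $p\in S\cap[a,b]$ and $m\in S\cap L$ as above; the points $p,m,c$ all lie in $S$, so $\conv\{p,m,c\}\subseteq S$, and I want to show this triangle's vertices, after possibly replacing $p$ by $a$ or $b$, satisfy the pairwise-distance-$\le 1$ hypothesis of Lemma~\ref{lem:inside} while $c(S)$ lies in their convex hull — contradiction with $a,b\notin S$.

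I expect the main obstacle to be the bookkeeping that converts the "no right triangle" conclusion of Proposition~\ref{prop:Thales} into the geometric statement that $S$ must swallow $a$ or $b$: one has to control simultaneously (i) that $c(S)\in S$ lies on the far side of $L$ from $c$, (ii) that $S$ has all sides of length exactly $1$ so its diameter is $\sqrt2$, and (iii) that the chord $S\cap L$ together with the point $c\in S$ pins down enough of $S$. The cleanest route is probably: let $S\cap L=[u,v]$ with $u,v$ on $L$; since $S\cap L$ meets $[a,b]$ and $a,b\notin S$, the open segment $(a,b)$ is \emph{not} contained in $[u,v]$, yet they overlap, so $[u,v]$ contains $a$ or $b$ unless $[u,v]\subsetneq[a,b]$ with both endpoints strictly inside $(a,b)$ — and this last case is exactly the one Proposition~\ref{prop:Thales} excludes, because then one of $u,v$ together with $c$ and a point of $[a,b]$ would build the forbidden right triangle (the side of $S$ through that endpoint, perpendicular to an adjacent side, gives the right angle). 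Hence $[u,v]\ni a$ or $[u,v]\ni b$, so $a\in S$ or $b\in S$, the desired contradiction. Finally I should double-check the boundary/degenerate cases ($c(S)\in L$ is excluded since $c(S)\notin L$ would follow from $c(S)$ being strictly separated; $S$ tangent to $L$; $p$ or $m$ coinciding with a vertex of $S$), none of which should cause real trouble.
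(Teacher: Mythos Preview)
Your plan to reduce to Lemma~\ref{lem:inside} cannot work: that lemma needs $c(S)$ to lie in the triangle, but here $a,b\in L$, the point $c$ is on one open side of $L$, and by hypothesis $c(S)$ is on the \emph{other} open side, so $c(S)\notin\conv\{a,b,c\}$ and likewise $c(S)\notin\conv\{p,m,c\}$ for any $p,m\in L$. You seem to sense this and pivot in the last paragraph to an argument closer to the paper's, but two steps are still off. First, ``the foot of the perpendicular from $c$ to $L$ lies outside $(a,b)$'' is \emph{not} equivalent to $\dist(q,c)\ge\dist(q,a)$; the foot can lie in $(a,b)$ if $c$ is simply far from $L$. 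Second, and this is the real gap, your right triangle is misidentified.

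Once you have $S\cap L=[u,v]\subset(a,b)$, you must argue (using $\dist(u,v)<\dist(a,b)\le 1$) that two \emph{adjacent} sides of $S$ cross $L$, so the portion of $S$ on the $c$-side of $L$ is a right triangle $T'=\conv\{u,v,c'\}$ with right angle at the vertex $c'$ of $S$ --- not at $u$, $v$, or $c$. Proposition~\ref{prop:Thales} is then applied with third vertex $c'$: since the hypotenuse $[u,v]$ is a subsegment of $(a,b)$, the vertex $c'$ (and hence all of $T'$) lies in the open disk of centre $q$ and radius $\dist(q,a)$. The crucial transfer step, which your sketch never states, is that $c\in S$ on the $c$-side forces $c\in T'$, whence $\dist(q,c)<\dist(q,a)$, the desired contradiction. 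This is exactly the paper's proof; your outline is heading toward it, but the sentence ``one of $u,v$ together with $c$ \ldots\ gives the right angle'' puts the right angle in the wrong place and obscures that the contradiction comes from $c\in T'$ rather than from $c$ being a vertex of any right triangle.
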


\begin{figure}[ht]
	\centering
		\includegraphics[scale =0.45]{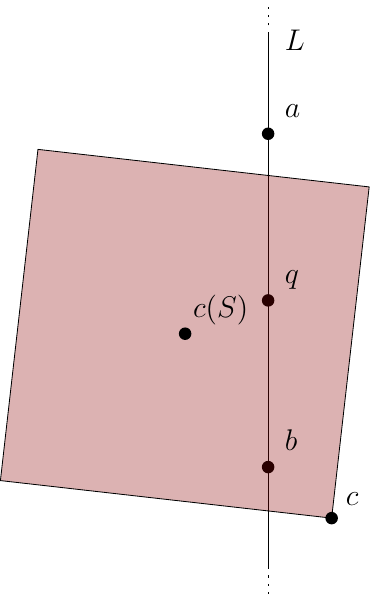}
		\caption{Illustration of Lemma~\ref{lem:outside}.}
		\label{fig:Lemma2}
\end{figure}

\begin{proof} 
Assume for a contradiction that  $\dist (q, c) \geq \dist(q, a)$, but there exists a unit square $S$ whose centre is separated from $c$ by $L$, which meets $[a,b]$,  contains $c$, but neither $a$ nor $b$. By the separation and the convexity of $S$, $S\cap [a,b]$ is a non-empty, closed subsegment of the open interval $(a,b)$, and since $\dist(a,b)\le 1$, no two parallel sides of $S$ can meet $[a,b]$. 

Therefore the half-plane containing $c$ intersects $S$ in a right-angled triangle $T' := \conv(a',b',c')$,  with hypotenuse $[a',b'] \subset [a,b]$, and right angle in $c'$.
Applying the if part of Proposition~\ref{prop:Thales} to the point $c'$ and the triangle $T'$, we have that $c'$, and consequently $T'$, is contained in the disk with centre $q$ and radius smaller than $[q,a]$. Since  $c \in T'$, $\dist(q,c)< \dist(q,a)$ contradicting the indirect assumption.
\end{proof}

\subsection{Hitting squares using polygons}\label{sub:polygon}

Combining Lemma~\ref{lem:inside} and Lemma~\ref{lem:outside}, we get a sufficient condition for hitting each unit square intersecting a compact set inside a convex polygon and far enough from its vertices. This tool is presented  in the following theorem:
\begin{theorem}\label{thm:in_out_convex_hull}
	Let $P \subseteq \mathbb{R}^2$ be a convex polygon with vertices denoted by $p_1, \ ..., p_k$ and let $p_0 \in P$.
    For each $1\leq i \leq k$, denote by $q_i$  the midpoint of the side  $[p_i, p_{i+1}]$ of $P$, where $p_{k+1} := p_1$. 
    Assume that:
	\begin{itemize}
		\item[$(i)$] $\dist(p_0, p_i) \leq 1 \textrm{, for any } 1 \leq i \leq k,$
		\item[$(ii)$] $\dist(p_i, p_{i+1}) \leq 1 \textrm{, for any } 1 \leq i \leq k.$
	\end{itemize}
    Moreover, let $C\subseteq P$ be a closed set. Then any unit square $S$  intersecting $C$ and satisfying the condition
    \begin{itemize}
        \item[$(iii)$]
        $\dist(p_i, q_i) \leq \dist(q_i, C)$, for each $1\le i \le k$ for which  $c(S)$ 
        is separated from $C$ by the line $L_i\supset [p_i, p_{i+1}]$,
    \end{itemize}
    is hit by at least one of the points in $\{p_0,p_1, \ ..., p_k\}$.
\end{theorem}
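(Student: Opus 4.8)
The plan is to fix a unit square $S$ intersecting $C$ and satisfying $(iii)$, and to argue by cases on how $S$ sits relative to the polygon $P$. First I would dispose of the easy case: if the centre $c(S)$ lies inside $P$. Since $S$ meets $C$, pick a point $x \in S \cap C \subseteq P$. Consider the triangle $T_0 := \operatorname{conv}(\{p_0, p_i, p_{i+1}\})$ for the edge $[p_i,p_{i+1}]$ of $P$ that, together with $p_0$, "sees" the portion of $P$ containing $c(S)$ (concretely, triangulate $P$ from $p_0$; $c(S)$ lies in one of these triangles $T_0$). By hypotheses $(i)$ and $(ii)$, the three vertices $p_0, p_i, p_{i+1}$ are pairwise at distance at most $1$, so Lemma~\ref{lem:inside} applies: any square of side at least $1$ with centre in $T_0$ contains one of $p_0, p_i, p_{i+1}$. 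Hence $S$ is hit. Note this case uses only $(i)$ and $(ii)$, not $(iii)$.

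Next, the case $c(S) \notin P$. Then, since $S$ does meet $C \subseteq P$, the centre $c(S)$ is separated from some point of $C$ by at least one of the supporting lines $L_i \supset [p_i, p_{i+1}]$; indeed, walking from a point of $S\cap C$ inside $P$ out to $c(S)$, the segment must cross the boundary of $P$, so there is an edge whose supporting line $L_i$ has $c(S)$ strictly on the outside and $C$ (or at least the point of $C \cap S$) on the inside — this is exactly the situation in which condition $(iii)$ is non-vacuous for that index $i$. For such an $i$, let $q_i$ be the midpoint of $[p_i, p_{i+1}]$. We want to invoke Lemma~\ref{lem:outside} with $a := p_i$, $b := p_{i+1}$, $L := L_i$, and $c := $ an appropriate point of $S$ playing the role of "$c$" in that lemma. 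Hypothesis $(ii)$ gives $\operatorname{dist}(p_i, p_{i+1}) \le 1$. The distance hypothesis $\operatorname{dist}(q_i, c) \ge \operatorname{dist}(q_i, a)$ required by Lemma~\ref{lem:outside} is supplied by $(iii)$: since $C \cap S \ne \emptyset$, there is a point of $C$ in $S$, so $\operatorname{dist}(q_i, S) \le \operatorname{dist}(q_i, C)$; combined with $(iii)$, $\operatorname{dist}(p_i, q_i) \le \operatorname{dist}(q_i, C)$... but I actually need the distance from $q_i$ to the relevant point of $S$, or to $c(S)$, to be large. So the point "$c$" to feed into Lemma~\ref{lem:outside} should be taken as a point of $C \cap S$ on the far side of $L_i$, or one reformulates: Lemma~\ref{lem:outside} needs a point $c \in S$, with $L_i$ separating $c(S)$ from $c$, with $S \cap [p_i,p_{i+1}] \ne \emptyset$, and with $\operatorname{dist}(q_i, c) \ge \operatorname{dist}(q_i, p_i)$. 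Taking $c$ to be a point of $C \cap S$: it lies in $S$; it is separated from $c(S)$ by $L_i$ by our choice of $i$; $\operatorname{dist}(q_i, c) \ge \operatorname{dist}(q_i, C) \ge \operatorname{dist}(p_i, q_i) = \operatorname{dist}(q_i, p_i)$ by $(iii)$. The only remaining point to check is $S \cap [p_i, p_{i+1}] \ne \emptyset$: since $c \in S$ is strictly inside $L_i$ (on the $C$-side) while $c(S)$ is strictly outside, and $S$ is convex containing both, $S$ straddles $L_i$; but straddling the line is not quite straddling the segment $[p_i,p_{i+1}]$. Here is where I would use convexity of $P$ together with hypothesis $(i)$/$(ii)$ bounding the polygon's diameter, to argue that the crossing of $L_i$ by $S$ must occur within the segment $[p_i, p_{i+1}]$ — essentially because all of $C$ lies in $P$ which lies on one side of $L_i$ and is "short" enough, mirroring the diameter argument in the proof of Lemma~\ref{lem:outside} that no two parallel sides of a unit square can meet a segment of length $\le 1$.

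I expect the main obstacle to be precisely this last verification: cleanly showing that condition $(iii)$, phrased in terms of $\operatorname{dist}(q_i, C)$, delivers exactly the hypothesis of Lemma~\ref{lem:outside} for the specific point of $S$ one wants, and that $S$ meets the edge $[p_i,p_{i+1}]$ itself and not just its supporting line. One clean way to organize this is: let $x \in C \cap S$. If $c(S) \in P$, done by Lemma~\ref{lem:inside} as above. Otherwise let $L_i$ be a supporting line of $P$ separating $c(S)$ from $x$ (exists since $[x, c(S)]$ exits $P$ and $P$ is the intersection of the half-planes bounded by the $L_i$). Then $x$ is on the $P$-side, $c(S) \notin$, $x \in S$, and since $x \in P$ and $P$ is convex with the edge-midpoint and vertex distance bounds, the trace $S \cap P$ near that edge lies within $[p_i, p_{i+1}]$ — giving $S \cap [p_i,p_{i+1}] \neq \emptyset$; then Lemma~\ref{lem:outside} yields $p_i \in S$ or $p_{i+1} \in S$. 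I would present the argument in this order: statement of the reduction to the two cases, the inside case via Lemma~\ref{lem:inside} (triangulating $P$ from $p_0$), and the outside case via Lemma~\ref{lem:outside}, with the straddling/edge-hitting sublemma isolated as the delicate step, proved by the unit-square diameter bound exactly as in Lemma~\ref{lem:outside}'s proof.
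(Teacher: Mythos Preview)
Your proposal is correct and follows exactly the paper's two-case structure: triangulate $P$ from $p_0$ and apply Lemma~\ref{lem:inside} when $c(S)\in P$; pick $c\in S\cap C$ and apply Lemma~\ref{lem:outside} across a suitable edge when $c(S)\notin P$. The ``delicate step'' you worry about is in fact immediate and needs no diameter argument: since $c\in P$ and $c(S)\notin P$, the segment $[c,c(S)]\subseteq S$ exits $P$ through its boundary, hence meets some edge $[p_i,p_{i+1}]$; choosing \emph{this} $i$ (rather than merely an $i$ for which $L_i$ separates) simultaneously gives $S\cap[p_i,p_{i+1}]\ne\emptyset$ and that $L_i$ separates $c$ from $c(S)$.
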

\begin{figure}
    \centering
    \includegraphics[scale=0.6]{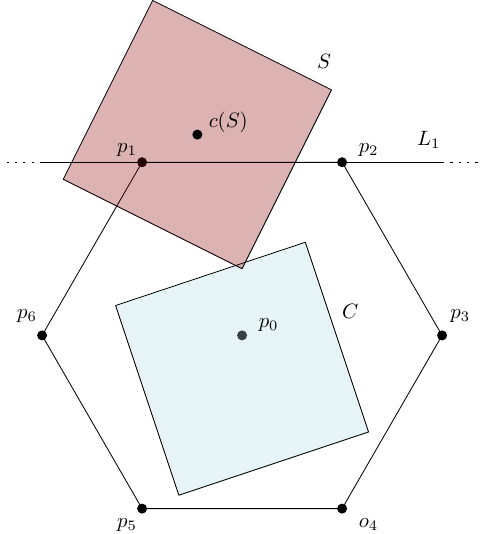}
    \caption{Illustration of Theorem~\ref{thm:in_out_convex_hull}; condition $(iii)$ is required for $i=1$. 
}   \label{fig:enter-label}
\end{figure}
\begin{proof}
First, suppose $c(S)\in P$, and for each $1 \leq i \leq k$,  let $T_i:= \conv(\{p_0,p_i,p_{i+1}\})$.
Note that $P=\bigcup_{i=1}^kT_i$.  Therefore there exists $i\in\{1,\ldots, k\}$ such that $c(S)\in T_i$, and by the conditions $(i)$ and $(ii)$ we can apply Lemma \ref{lem:inside} to $T_i$ with the result that at least one of $a:=p_0$, $b:=p_i$, and $c:=p_{i+1}$ hits $S$.

Second, assume $c(S)\notin P$, and let $c \in S\cap C$. Since $c\in P$, there exists a side of the polygon $P$, say $[p_i,p_{i+1}]$ whose line $L_i$ separates $c(S)$ and $c$. 
Now according to $(iii)$ the condition of  Lemma~\ref{lem:outside} is satisfied for $c$, $a:=p_i$, $b:=p_{i+1}$, so either $p_i$ or $p_{i+1}$ hits $S$. 
\end{proof}

\subsection{Patch using the triangle inequality}\label{sub:triangle}
The following lemma strengthens Lemma~\ref{lem:outside}. Together with Lemma \ref{lem:inside}, it will be used in some situations to fill in the holes left by the simple reformulations by covers of Proposition~\ref{prop:nu_alpha}.

\begin{lemma}[Circular patch]\label{lem:third_ball}
Let $a,b\in \mathbb{R}^2$, $\sqrt{2}-1\le \dist(a,b)\le 1$, let $q$ be the midpoint of the segment $[a,b]$, and $d:=\dist(q,a)$.
Then any square of side at least 1 and centre in $B_2(q,\frac{\sqrt{2}}{2} - d)$ contains either $a$ or $b$   (Figure~\ref{fig:Lemma3}~(b)).
\end{lemma}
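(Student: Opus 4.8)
\textbf{Setup and strategy.} The plan is to mimic the proof of Lemma~\ref{lem:outside}, arguing by contradiction. Suppose $S$ is a square of side at least $1$ with centre $c(S)\in B_2(q,\frac{\sqrt2}{2}-d)$ that contains neither $a$ nor $b$. Let $L$ be the line through $a$ and $b$. The key case distinction is whether $c(S)$ lies on $L$, on the same side as some designated half, or — the essential case — strictly off $L$; if $c(S)\in L$ then $c(S)$ is within distance $\frac{\sqrt2}{2}-d$ of $q$ along $L$, hence within $\frac{\sqrt2}{2}-d+d=\frac{\sqrt2}{2}$ of $a$ (and of $b$), and since a square of side $\ge 1$ has inradius $\ge \frac12$... actually we need the half-diagonal bound, so I will instead route everything through the segment $[a,b]$ and Lemma~\ref{lem:outside}.

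\textbf{Main argument.} First I would observe that $S$ must meet the segment $[a,b]$: indeed, $c(S)\in B_2(q,\frac{\sqrt2}{2}-d)$ and the farthest point of $S$ from $c(S)$ is at distance $\ge \tfrac{\sqrt2}{2}\cdot 1=\tfrac{\sqrt2}{2}$ (half the diagonal of a unit square, and $S$ is at least this big), wait — that is the wrong direction. The correct observation is that $B_2(c(S),\tfrac12)\subseteq S$ (inner disk of a square of side $\ge 1$), and $\dist(c(S),q)\le \tfrac{\sqrt2}{2}-d<\tfrac12+\text{something}$; more carefully, since $d=\dist(q,a)\le\tfrac12$, the centre $c(S)$ lies within $\tfrac{\sqrt2}{2}-d$ of $q$, so $\dist(c(S),a)\le \tfrac{\sqrt2}{2}-d+d=\tfrac{\sqrt2}{2}$ and likewise $\dist(c(S),b)\le\tfrac{\sqrt2}{2}$. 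Now if $c(S)\in L$, then since the inner disk $B_2(c(S),\tfrac12)$ lies in $S$ and $a,b$ are within $\tfrac{\sqrt2}{2}$... this still does not immediately give $a\in S$. So the genuinely clean route is: show $S\cap[a,b]\ne\emptyset$ first. The point of $[a,b]$ nearest to $c(S)$ has distance at most $\dist(c(S),q)\le\tfrac{\sqrt2}{2}-d\le\tfrac{\sqrt2}{2}-(\sqrt2-1)/2 = 1/2$ — wait, $d\ge(\sqrt2-1)/2$ gives $\tfrac{\sqrt2}{2}-d\le \tfrac{\sqrt2}{2}-\tfrac{\sqrt2-1}{2}=\tfrac12$. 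So $\dist(c(S),[a,b])\le\tfrac12$, and since $B_2(c(S),\tfrac12)\subseteq S$, we get $S\cap[a,b]\ne\emptyset$. Good — that is where the hypothesis $\dist(a,b)\ge\sqrt2-1$ is used.

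\textbf{Finishing via Lemma~\ref{lem:outside}.} Now, with $S$ meeting $[a,b]$: if $c(S)\in L$, the inner disk argument combined with $\dist(c(S),[a,b])=0$ forces, by convexity of $S$ and $\dist(a,b)\le1$, that $S$ actually contains a subsegment of $[a,b]$ of length... hmm, this case needs its own small argument. The cleaner uniform finish: if $c(S)$ is separated from $q$... Let me instead apply Lemma~\ref{lem:outside} directly with the point ``$c$'' taken to be $c(S)$ itself when $c(S)\notin L$. Lemma~\ref{lem:outside} requires a point $c\in S$ with $L$ separating $c(S)$ from $c$ — so that framing needs $c(S)$ on the opposite side. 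Instead I would take $c$ to be any point of $S\cap[a,b]$... no, that point is on $L$. The right move: since $B_2(c(S),\tfrac12)\subseteq S$ and $\dist(c(S),q)\le\tfrac{\sqrt2}{2}-d$, the disk $B_2(c(S),\tfrac12)$ reaches the point $q' := q + \tfrac12\cdot u$ where $u$ is the unit vector from $c(S)$... this is getting intricate; the main obstacle is precisely this endgame bookkeeping. The expected resolution: pick the point $c^\star\in S$ on the far side of $L$ from $c(S)$ at maximal distance from $L$ inside the inner disk, show $\dist(q,c^\star)<d$ by the triangle inequality ($\dist(q,c^\star)\le\dist(q,c(S))+\dist(c(S),c^\star)$, bounding the second term using that $c^\star$ and $c(S)$ are symmetric-ish about $L$ within radius $\tfrac12$), and then invoke the contrapositive of Proposition~\ref{prop:Thales} / Lemma~\ref{lem:outside} to conclude $a\in S$ or $b\in S$, the contradiction.

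\textbf{Expected main obstacle.} The hard part is the case $c(S)\in L$ (or very close to it), and more generally pinning down the exact inequality chain that converts $\dist(c(S),q)\le \tfrac{\sqrt2}{2}-d$ plus $B_2(c(S),\tfrac12)\subseteq S$ into ``$S$ contains $a$ or $b$'' without circular reasoning — one must carefully choose which auxiliary point inside the inner disk of $S$ plays the role of the vertex $c'$ in the Thales configuration, and verify it lands strictly inside $B_2(q,d)$. I expect the clean statement to be: let $c'$ be the point of $\overline{B}_2(c(S),\tfrac12)$ lying on the ray from $q$ through $c(S)$ extended away from $q$ — no: the point diametrically configured so that $\dist(q,c')\le\dist(q,c(S)) + \tfrac12 \le \tfrac{\sqrt2}{2}-d+\tfrac12$, which is not obviously $<d$. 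So the bound must exploit that $c'$ can be chosen on the correct side of $L$: then $\dist(q,c')^2 \le \dist(q,\pi_L(c'))^2 + \dist(c',L)^2$ where $\pi_L$ is the projection, and both terms are controlled by $\tfrac{\sqrt2}{2}-d$ in a way that sums (via $x^2+y^2$) to less than $d^2$ exactly when $d\ge(\sqrt2-1)/2$. Verifying this Pythagorean inequality is the crux; once it holds, Lemma~\ref{lem:outside} (or Proposition~\ref{prop:Thales}) closes the argument immediately.
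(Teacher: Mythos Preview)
Your setup is correct and you rightly establish $S\cap[a,b]\ne\emptyset$ (equivalently $q\in S$) using $\frac{\sqrt2}{2}-d\le\frac12$ together with the inner disk $B_2(c(S),\tfrac12)\subseteq S$. But the endgame is where you lose the thread, and your proposed routes via points on the \emph{inner disk} will not close: those points are only at distance $\tfrac12$ from $c(S)$, and the inequality you need is governed by $\tfrac{\sqrt2}{2}$, not $\tfrac12$.

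The missing idea is to take $c$ to be a \emph{vertex} of $S$ lying on the opposite side of $L$ from $c(S)$. Since $a,b\notin S$ and $S\cap[a,b]\ne\emptyset$, the contrapositive of Lemma~\ref{lem:outside} (Thales) forces $\dist(q,c)<d$. Now the triangle inequality on $c(S),q,c$ gives
\[
\frac{\sqrt2}{2}\le \dist(c(S),c)\le \dist(c(S),q)+\dist(q,c)<\Bigl(\frac{\sqrt2}{2}-d\Bigr)+d=\frac{\sqrt2}{2},
\]
a contradiction. The left inequality holds precisely because $c$ is a vertex of a square of side $\ge1$, so its distance to the centre is at least the half-diagonal $\tfrac{\sqrt2}{2}$; this is exactly the lower bound you cannot get from inner-disk points. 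Your Pythagorean manoeuvre at the end is chasing a bound that simply is not there for radius-$\tfrac12$ points. (The degenerate case $c(S)\in L$ that worries you is harmless: then $S\cap L$ is a chord through the centre, hence of length $\ge1$, yet it would have to lie inside the open interval $(a,b)$ of length $\le1$, already a contradiction.)
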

\begin{proof}
Assume for a contradiction that there exists a unit square $S$ with centre $c(S)$ in $B_2(q,\frac{\sqrt{2}}{2} - d)$, but containing neither $a$, nor $b$.  Since $S$ contains a disk of radius $\frac{1}{2}$ and, $\dist(q,c(S))\leq \frac{\sqrt{2}}{2}-d \leq \frac{1}{2}$, we have $q \in S$ and so $S\cap [a,b] \ne\emptyset$.

Let $L$ be the line containing $[a,b]$ and let $c$ be a vertex of $S$ separated from $c(S)$ by $L$. By Lemma~\ref{lem:outside} (Thales' theorem), $\dist(q,c) < \dist(q,a) = d$.

Now consider the triangle defined by $c(S)$, $q$ and $c$ (Figure \ref{fig:Lemma3}(a)). By the triangle inequality, we have
\begin{equation*}
    \frac{\sqrt{2}}{2} \leq \dist(c(S),c) \leq \dist(c(S),q) + \dist(q,c) < \left ( \frac{\sqrt{2}}{2} - d \right ) + d = \frac{\sqrt{2}}{2}.
\end{equation*}
This contradiction concludes the proof of the lemma.
\end{proof}

\begin{figure}[ht]
	\centering
	\includegraphics[scale =0.6]{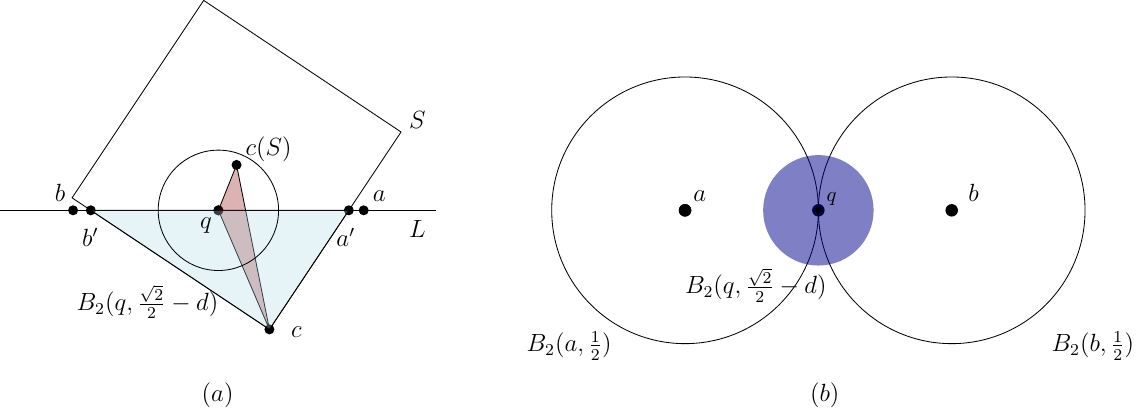}
	\caption{Figure $(a)$ illustrates the setup of Lemma \ref{lem:third_ball}, while the small (blue) disk of Figure $(b)$ shows the  provided ``patch.''}
	\label{fig:Lemma3}
\end{figure}

\section{Deducing the new bounds for squares}\label{sec:proofs}

In this section, we prove the main results of the paper:
In Section \ref{sub:proofs}, we prove the upper bounds for the $\tau / \nu$ ratio of unit squares and squares of different sizes (Theorem~\ref{thm:hitting}). This is the most technical part of the paper, and it requires the tools developed in Sections~\ref{sec:cover} and \ref{sec:holes}.
In Section \ref{sub:lower}, we provide examples showing the lower bounds for the $\tau / \nu$ ratio (Theorem \ref{thm:lower}).
Finally, in Section \ref{sub:colouring}, we prove the upper bounds for the chromatic number of squares (Theorem~\ref{thm:colouring}): the proof for unit squares uses  Lemma \ref{lem:hitting_neighbours}, while the one for squares is inspired by an averaging argument used for instance in \cite{1960_Asplund} and \cite{2021_Chalermsook}.


\subsection{Hitting squares}\label{sub:proofs}

In this section, we first prove the promised upper bounds on the hitting number: Lemma~\ref{lem:hitting_neighbours} leading to Theorem~\ref{thm:hitting}. 

\subsubsection*{Proof of the first part of Lemma~\ref{lem:hitting_neighbours}.}
Let $C$ be an arbitrary unit square in $\mathbb{R}^2$, and suppose that the origin is in $c(C)$ and the horizontal and vertical axes are parallel to the sides of $C$. We want to present $10$ points and apply  Theorem~\ref{thm:in_out_convex_hull} to show that they hit all possible neighbours of $C$. 
For satisfying conditions $(i)$ and $(ii)$ we first find a set of $9$ points 
on  the circle $\Bcirc(p_0,1)$ of centre $p_0:=c(C)$ and radius $1$.  As a first trial, we can choose these to form a regular $9$-gon.  
However, the relation of a regular $9$-gon to $C$ is not regular, so to satisfy condition $(iii)$ for each possible $S \in N(C)$, slight modifications of the  $9$-gon are necessary.
Since the sides of the regular $9$-gon $P = \conv(\{p_1, \ldots, p_9\})$ are significantly smaller than $1$, we have a margin to move the vertices of $P$ on the circle $\Bcirc(p_0,1)$ while preserving conditions $(i)$ and $(ii)$. 
We proceed as follows:

Move two neighbouring vertices closer together when their midpoint $q$ is too close to $C$ to satisfy $(iii)$ (happening when the closest point of $C$ is a vertex of $P$) while move the two vertices of a side away from one another when the midpoint $q$ of the side is relatively far from $C$.   The margin is sufficiently large to easily get points satisfying $(iii)$ without worrying about rounding errors. 

\begin{figure}[ht]
	\centering
	\includegraphics[scale =0.6]{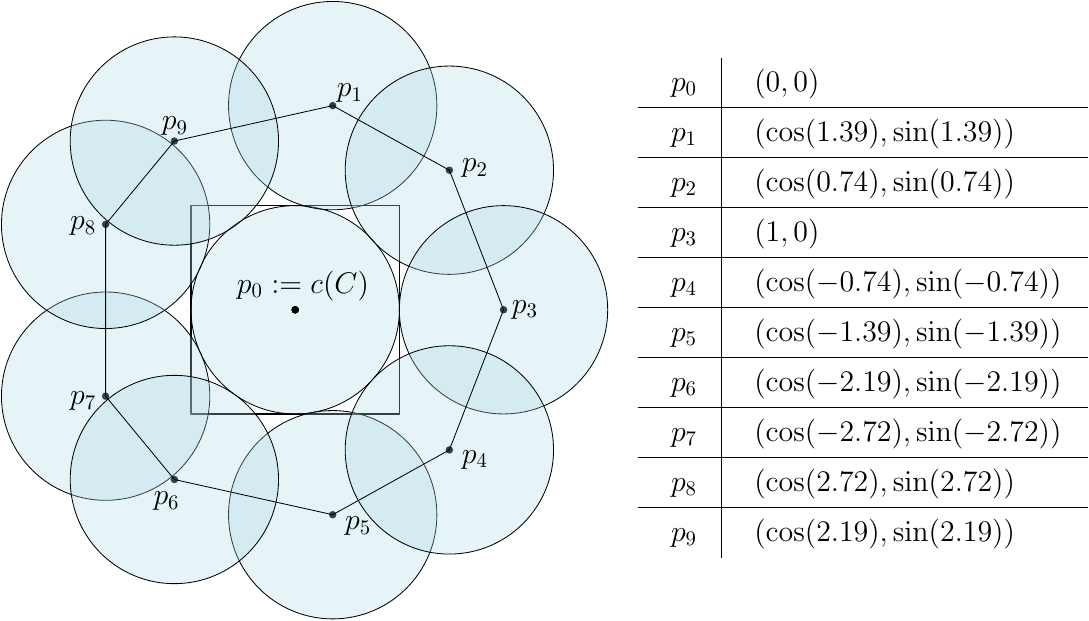}
	\caption{A unit square and the $10$ points hitting its neighbours. Any unit square of centre $p$  with $\dist(p,p_i)\leq1$, i.e., in the light (blue) disks,  is hit by $p_i$, for all $i \in \{0, \ldots, 9\}$. Theorem \ref{thm:in_out_convex_hull} patches the remaining white zones inside $P$.}
	\label{fig:Polygon_9}
\end{figure}

The coordinates of the points we found are given in Figure \ref{fig:Polygon_9}. They obviously satisfy $(i)$, and we can also see without computation that they satisfy  $(ii)$: indeed, it is sufficient to check that the angles between neighbouring vertices of $P$ are not larger than $\pi/3$. They are all smaller than $1<\pi/3$, the largest of them is the angle $p_7p_0p_8$ equal to $0.84$.

In Figure \ref{fig:Check_Prop_iii}, we verify $(iii)$: for all $i \in \{1, ..., 9\}$, 
$\dist(p_i,q_i) < \dist(C,q_i)$ (table of the figure), or, equivalently, $B_2(q_i, \dist(p_i,q_i)) \cap C = \emptyset$ (drawing of the figure), where $q_i$ is the midpoint of the segment $[p_i,p_{i+1}]$ (or $[p_9,p_1]$, if $i=9$).

\begin{figure}[ht]
    \centering
    \includegraphics[scale=0.6]{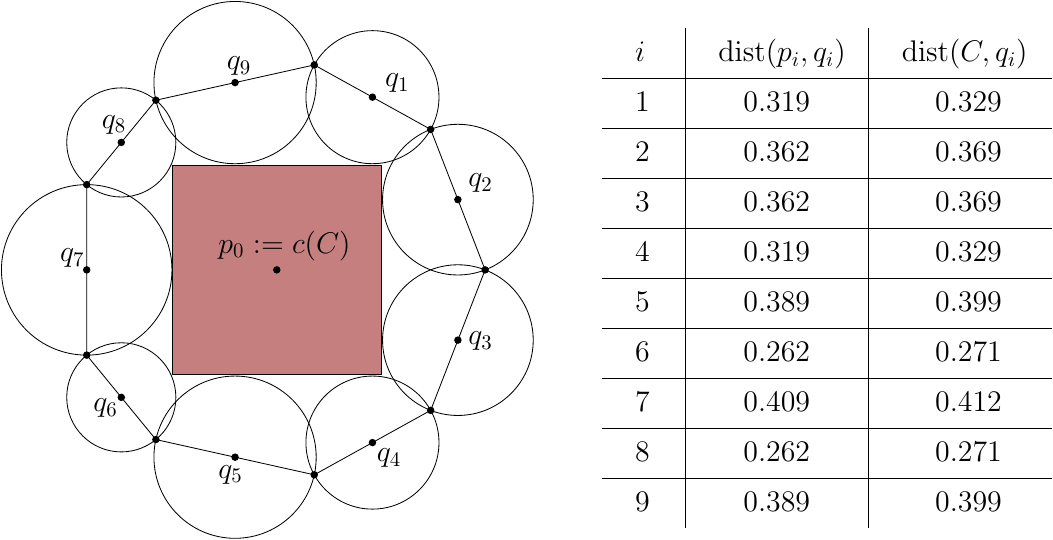}
    \caption{Checking condition $(iii)$ for $1\leq i \leq 9$. The values in the table are rounded to three decimals.}
    \label{fig:Check_Prop_iii}
\end{figure}
Hence, by Theorem~\ref{thm:in_out_convex_hull}, for any possible unit square $S$  intersecting  $C$, we get that $S$ is hit by at least one of the points in $\{p_0, p_1, $\ldots$ , p_9 \}$.

\subsubsection*{Proof of the second part of Lemma~\ref{lem:hitting_neighbours}.}

Suppose the squares of the family $\Cscr$ (consisting of unit squares) are given in $\mathbb{R}^2$,  and that the origin is the left-most centre of a square, denote   $C$ such a square,  $O=c(C)$.   The sides of $C$ are not necessarily parallel to the axes, causing some complications.
The following claim describes a half-disk containing all centres of  possible neighbours of $C$:

\smallskip
\noindent{\bf Claim 1}: All centres of squares in $N[C]$ are in  $Q:= \{(x,y)\in B_2(O, \sqrt{2}): x\ge 0 \}$.\smallskip
\begin{cpf}
Let $S \in N(C)$ be a square of centre $c(S)$. The distance between the centres of two intersecting unit squares is at most $\sqrt{2}$, so $c(S) \in B_2(O, \sqrt{2})$. Moreover, by the choice of $C$,  $c(S)$ is on the right of the vertical axis, finishing the proof of Claim~1.
\end{cpf}

Unfortunately, we cannot immediately rely on some half of the $10$ hitting points of the first part of the proof because a square having its centre to the right of the vertical axis through $p_0$ may have its unique hitting point in the left half-plane.

Let  $v$ be the vertex of $C$ in the non-negative quadrant $\{(x,y)  : x > 0, y \geq 0\}$. We call the angle of $Ov$ with the horizontal axis, the {\em angle} of $C$, and denote it $\angle(C)$. The range of angles to be considered is  $0 \leq \angle(C) < \pi/2$; $\angle(C)=\pi/4$ when $C$ is axis-parallel.

We define our hitting set using $p_0$ and five further points on $\Bcirc(p_0,1)$, but instead of $p_0=c(C)$, it is better now to have $p_0:=(t,0)$  for $t \in \mathbb{R}_+$  chosen later. 
We  fix $p_0:=(t,0)$, $p_1:=(t,1)$,  $p_3:=(t+1,0)$,  $p_5:=(t,-1)$, parameterized by $t$.  
 Figure~\ref{fig:chosing_t} shows the introduced hitting points with tentative choices for $p_2$ and $p_4$ for two possible values of $t$. The coordinates of $p_2$ and $p_4$ will need a more refined definition depending on the chosen $t$ and the angle $\angle(C)$.

The neighbours of $C$ having their centre in the region $\bigcup_{i=0}^5 B_2(p_i,\frac12)$ are hit by $\{p_0,\ldots, p_5\}$ (Proposition~\ref{prop:nu_alpha}). It remains to hit the squares not having their centre in these disks.
We call {\em holes} the ``triangular regions'' of $Q \setminus \bigcup_{i=0}^5 B_2(p_i,\frac12)$ (see Figure \ref{fig:chosing_t}).
In order to cover these holes as well, there are two difficulties to overcome:

\begin{figure}[ht]
	\centering
	\includegraphics[scale =0.65]{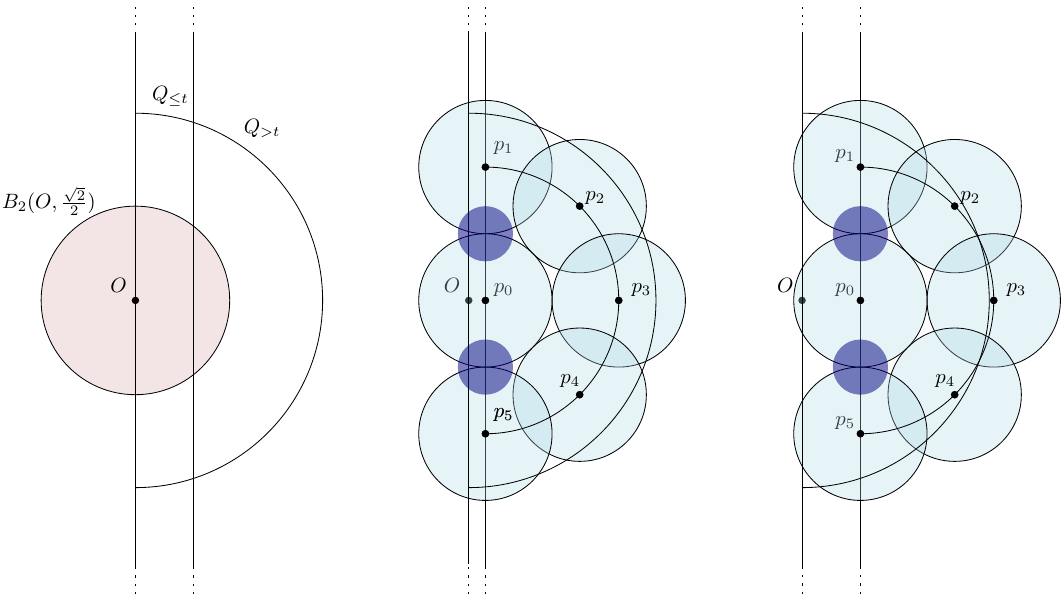}
	\caption{
      On the left, the disk $B_2(O, \frac{\sqrt{2}}2)$ swept by all possible $C$ and the half-disk $Q$ partitioned in $Q_{\leq t}$ and  $Q_{>t}$.
      In the centre and right, two possible hitting sets with $6$ points correspond to different parameter values $t$. For $i\in\{0,1,...,5\}$, each light (blue) disk $B_2(p_i, 1/2)$ contains all centres of possible unit squares whose inner disk is hit by its centre $p_i$ while the smaller dark (blue) disks represent some of the regions that are patched by Lemma~\ref{lem:third_ball}.}
    \label{fig:chosing_t}
\end{figure}

First, we have to cover the  holes of  $Q_{\leq t}:= \{(x,y) \in Q: 0\le x\le t\}$.
 The smaller $t$ is, the smaller the holes of  $Q_{\leq t}$ are. If $t$ is small enough, these holes can be ``patched'' by  Lemma~\ref{lem:third_ball}, as stated in Claim~2.
Second, for the squares with centre in  $Q_{>t} := \{(x,y) \in Q:  x > t\}$, if the $t$ value is too small, no matter how we fix $p_2,p_3,$ and $p_4$ on the half-circle $\{ (x,y) \in \overline{B}_2(p_0, 1)  :  x > t\}$ they will not suffice for satisfying condition $(iii)$ of Theorem~\ref{thm:in_out_convex_hull} for all possible $C$.    
We will see, though, that for the maximum value of $t$ computed in Claim~2,  there is a choice of these three points (depending on $\angle(C)$) such that $\{p_0, p_1, ..., p_5\}$ hits all the squares centred in $Q_{>t}$.

\smallskip
\noindent{\bf Claim 2}: If  $t=\frac{\sqrt{4\sqrt{2}-5}}{4}$, then $S\cap \{p_0, p_1, p_5\}\ne\emptyset$ for any unit square  $S$, $c(S)\in Q_{\leq t}$.
\begin{cpf}
Let $S\in N(C)$ be a square with centre $c(S)$ and assume for simplicity that $c(S)$ has a positive vertical coordinate, the other case being symmetric. We show that $\{p_0,p_1\}$ hits $S$ (in the symmetric case, $S$ is hit by $\{p_0,p_5\}$).
Since unit squares contain a disk of radius $\frac12$, this is true  if  $c(S)\in B_2(p_0,\frac12)\cup B_2(p_1,\frac12)$. 
Apply now Lemma~\ref{lem:third_ball} with $a=p_0$, $b=p_1$, and consequently $d=\frac12$, to conclude that any unit square with centre in $B_2(q, \frac{\sqrt{2}-1}2)$ contains either $p_0$ or $p_1$,  where $q$ is the midpoint of the segment $[p_0,p_1]$. Therefore, if the intersection point of $B_2(p_0,\frac12)$ with the vertical axis is contained in $B_2(q, \frac{\sqrt{2}-1}2)$, then $\{p_0,p_1\}$ intersects every square $S$ with $c(S)\in Q_{\leq t}$, provided that the following two conditions are satisfied:

\begin{equation*}
    t^2 + \left ( \frac12 -\sqrt{\frac14-t^2} \right )^2\le \left ( \frac{\sqrt{2}-1}2 \right )^2;\quad \dist (p_1, (0,\sqrt{2}))^2=t^2+(\sqrt{2}-1)^2\le \frac14.
\end{equation*}

The maximum of $t$ under the first condition is $\frac{\sqrt{4\sqrt{2}-5}}{4}$, and for this value, the second condition is also satisfied, finishing the proof of Claim~2.
\end{cpf}

Once the value of $t$ is fixed, we proceed by defining the remaining points of the hitting set. This definition will depend on the angle $\angle(C)$:

\smallskip
\noindent{\bf Claim~3}:  There exists $p_2\in\Bcirc(0,1)$ such that if $0 \leq \angle(C)\le \frac{\pi}4$, 
\begin{equation*}
    \dist(\frac{p_1+p_2}2,p_2) \le \dist(\frac{p_1+p_2}2,C),\quad \dist(\frac{p_3+p_2}2,p_2) \le \dist(\frac{p_3+p_2}2,C).
\end{equation*}
Similarly, there exist  $p_2'$, where the same holds if $\frac{\pi}4\le \angle(C) < \frac{\pi}2$,  replacing $p_2$ by $p_2'$.    

\begin{cpf}
To prove the claim, increase $\angle(C)$ from $0$ to $\frac{\pi}4$, continuously: the union of the points of the changing squares $C$ is denoted by $C_1$ (Figure~\ref{fig:C1_C2}$(a)$);  $C_1$ is a well-defined closed set.
Note also that moving the candidate for $p_2$ from $p_3$ to $p_1$ on $\Bcirc(p_0, 1)$, the segment $[p_2,p_3]$ increases, and   $[p_1,p_2]$ decreases. The two disks having these segments as diameters -- denote these open disks by $D_{23}, D_{12}$ --   also increase and decrease respectively. 

\begin{figure}[ht]
	\centering
	\includegraphics[scale =0.7]{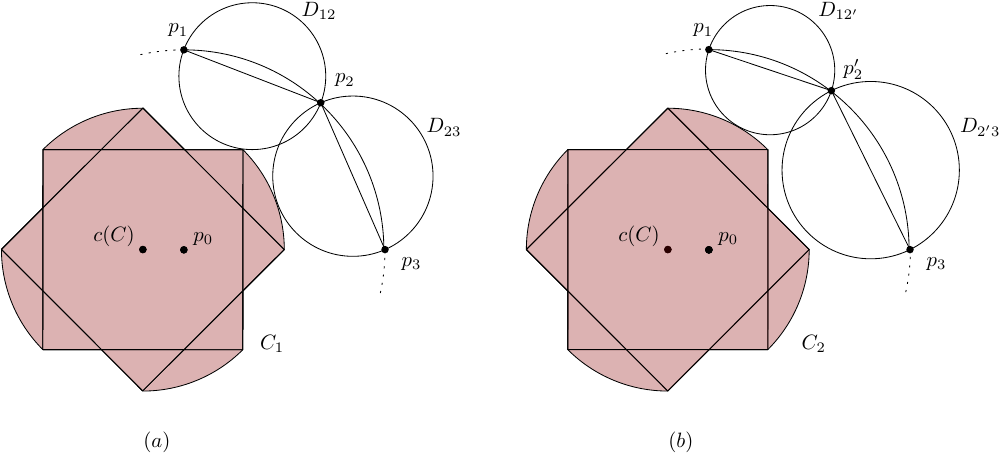}
	\caption{Representation of the sets $C_1$ and $C_2$.}
	\label{fig:C1_C2}
\end{figure}

The assertion of Claim~3 for squares of $\angle(C)\le \frac{\pi}4$ is then clearly equivalent to the following: 

{\em  There exists a point $p_2$ for which both $D_{23}$ and $ D_{12}$ are disjoint from $C_1$.} 

\noindent
Indeed, then 
\begin{align*}
&\dist(\frac{p_1+p_2}2,p_2) \le \dist(\frac{p_1+p_2}2,C_1)\le \dist(\frac{p_1+p_2}2,C), \\
&\dist(\frac{p_2+p_3}2,p_2) \le \dist(\frac{p_2+p_3}2,C_1) \le \dist(\frac{p_2+p_3}2,C).
\end{align*}
With the $t$ value provided by Claim~2, $p_2:= (t+\cos(0.82), \sin(0.82))$ verifies these properties. 

Similarly, increasing $\angle(C)$ from $\frac{\pi}4$ to  $\frac{\pi}2$ $C$ ``sweeps'' $C_2$, where \quotes{$\angle(C)=\frac{\pi}2$} actually means $\angle(C)=0$ with our formal definition (Figure~\ref{fig:C1_C2}$(b)$).
If   $p_2':= (t+\cos(0.92), \sin(0.92))$, the open disks with diameters $[p_2',p_3]$ and $[p_1,p_2']$ are both disjoint of $C_2$, finishing the proof of Claim~3.
\end{cpf}


Now, we can define the two sets hitting the neighbours of $C$ according to $\angle(C)$.
Recall  that we fixed  $p_1:=(t,1),  p_5:=(t,-1)$, $p_0:=(t,0)$, $p_3:=(t+1,0)$, where  $t=\frac{\sqrt{4\sqrt{2}-5}}{4}$, and we defined $p_2$ and $p_2'$ to satisfy  Claim~3 under two different conditions that cover all the  possibilities for $C$. Denote $p_4$ the reflection of $p_2$ to the horizontal axis, and $p_4'$ the reflection of $p_2'$.
We show that all neighbours of $C$ are hit by  $H_1 := \{p_0, p_1, p_2, p_3, p_4', p_5\}$ if  $0\le \angle(C)\le \frac{\pi}4$,  and by  $H_2 := \{ p_0, p_1, p_2', p_3, p_4, p_5\}$ if  $\frac{\pi}4\le \angle(C) <\frac{\pi}2$:

If $S\in N(C)$, then by Claim~1, $c(S)\in Q$. If furthermore $c(S)\in Q_{\leq t}$, then we get from Claim~2 that already the  $3$-element subset $\{p_0, p_1, p_5\}$ is hitting. 
Otherwise $c(S)\in Q_{> t}$. We partition $C$ into $C_{\leq t} := \{(x,y) \in C: x\le t\}$ and $C_{> t} := \{(x,y) \in C: x > t\}$. In Claim~4, we consider the case $S \cap C_{>t} \neq \emptyset$, and in Claim~5, the alternative case $S \cap C \subseteq C_{\leq t}$. This will conclude the proof of the lemma.

\smallskip
\noindent{\bf Claim~4}:  If a unit square $S \in N(C)$ satisfies $c(S) \in Q_{>t}$ and $S\cap C_t \not = \emptyset$, then $S$ is hit by  $H_1$ if $0 \leq \angle(C) \leq \frac{\pi}{4}$, and by $H_2$ if $\frac{\pi}{4} \leq \angle(C) < \frac{\pi}{2}$.
\smallskip 

\begin{cpf}
We assume $0 \leq \angle(C) \leq \frac{\pi}{4}$, since the case with $\frac{\pi}{4} \leq \angle(C) < \frac{\pi}{2}$ is the same by symmetry.
We define $P_1:=\conv(H_1)$, a hexagon with the peculiarity that two sides are collinear since $p_0$ is contained in $[p_1, p_5]$,
and apply Theorem~\ref{thm:in_out_convex_hull} considering $P = P_1$, $p_0 = p_0$, and $C = C_{>t}$.
Observe that $p_0$ plays the double role of corner of $P_1$ and \quotes{centre} in condition $(i)$.  
Clearly, $C_{>t} \subseteq P_1$ and conditions $(i)$ and $(ii)$ are satisfied. 
We continue by checking the other conditions in $(iii)$:

Note that $c(S)$ is on the same side of the line $L_5=L_0$ containing $[p_5,p_0]$, and $[p_0,p_1]$. Therefore, looking at the additional assertion of Theorem~\ref{thm:in_out_convex_hull} we need to check $(iii)$ only for the indices in  $I:=\{1, 2, 3, 4\}$.
Since $0 \leq \angle(C) \leq \frac{\pi}{4}$, Claim~3 makes sure that $(iii)$ holds for  $i=1$, and $i=2$. However,  the angle of the vertex of $C$ which is in the quadrant $\{(x,y) : x>0, y\leq 0\}$ is $|\angle(C)  - \frac{\pi}2|\geq \frac{\pi}{4}$, and by symmetry this luckily means that $(iii)$ holds for $i=3$ and $i=4$.  So it holds for all $i\in I$, and therefore the assertion of Theorem~\ref{thm:in_out_convex_hull} can be applied, that is, $S$ is hit by  $H_1 = \{p_0, p_1, p_2, p_3, p_4', p_5\}$ and the claim is proved.
\end{cpf}

\smallskip
\noindent{\bf Claim~5}:  If a unit square $S \in N(C)$ satisfies $c(S) \in Q_{>t}$ and $S\cap C \subseteq C_{\leq t}$, then $S$ is hit by  $H_1$ if $0 \leq \angle(C) \leq \frac{\pi}{4}$, and by $H_2$ if $\frac{\pi}{4} \leq \angle(C) < \frac{\pi}{2}$.

\begin{cpf}
Let $S \in N(C)$ be a square satisfying the hypothesis of the statement.
First, if $S$ intersect any of the segments $[a,b]$ for $a,b$ two consecutive points in $(p_1, p_2, p_3, p_4', p_5)$ if $0 \leq \angle(C) \leq \frac{\pi}{4}$, or in $(p_1, p_2', p_3, p_4,p_5)$ if $\frac{\pi}{4} \leq \angle(C) < \frac{\pi}{2}$, then we can apply Lemma \ref{lem:outside} and, since we verified already in Claim~4 that 
\begin{equation*}
    \dist(\frac{a+b}{2},a) \leq \dist(\frac{a+b}{2},C),
\end{equation*}
we may conclude that $S$ is hit by $H_1$ or $H_2$, depending on $\angle(C)$.
The same conclusion can be drawn if $c(S)$ is contained in $B_2(p_0, \frac{1}{2}) \cup B_2(p_5, \frac{1}{2})$.

Otherwise, assume that $S \cap \conv(\{p_1, p_2, p_3, p_4', p_5\}) = \emptyset$, or $S \cap \conv(\{p_1, p_2', p_3,$ $ p_4, p_5\}) = \emptyset$, depending on $\angle(C)$, and  $c(S) \not \in B_2(p_0, \frac{1}{2}) \cup B_2(p_5, \frac{1}{2})$.
Let $c$ be the point of $S\cap C \subseteq C_{\leq t}$ closest to $c(S)$.
In this case, we show that $\dist(c,c(S))> \frac{\sqrt{2}}2$. Since $c,c(S) \in S$, this contradicts that $S$ is a unit square.

Consider the segment $[c(S),c] \subseteq S$. It intersects the vertical line $l_t := \{(x,y) : x = t\}$ because $l_t$ separates the points $c$ and $c(S)$. Moreover, due to the fact that $S$ does not intersect the polygon $\conv(\{p_1, p_2, p_3, p_4', p_5\})$ if $0 \leq \angle(C) \leq \frac{\pi}{4}$, or $\conv(\{p_1, p_2', p_3, p_4,p_5\})$ if $\frac{\pi}{4} \leq \angle(C) < \frac{\pi}{2}$, this intersection is either above $p_1$ or below $p_5$. We can assume that it crosses $l_t$ above $p_1$ (the other case is symmetric).
Under this assumption, $c(S)$ has a vertical coordinate larger than $1$ and so the horizontal line $h := \{(x,y) : y=1\}$ separates $c$ and $c(S)$. We denote by $q$ the intersection point between $[c(S),c]$ and $h$ (Figure \ref{fig:cover_ct}).

\begin{figure}[ht]
    \centering
    \includegraphics[scale=0.8]{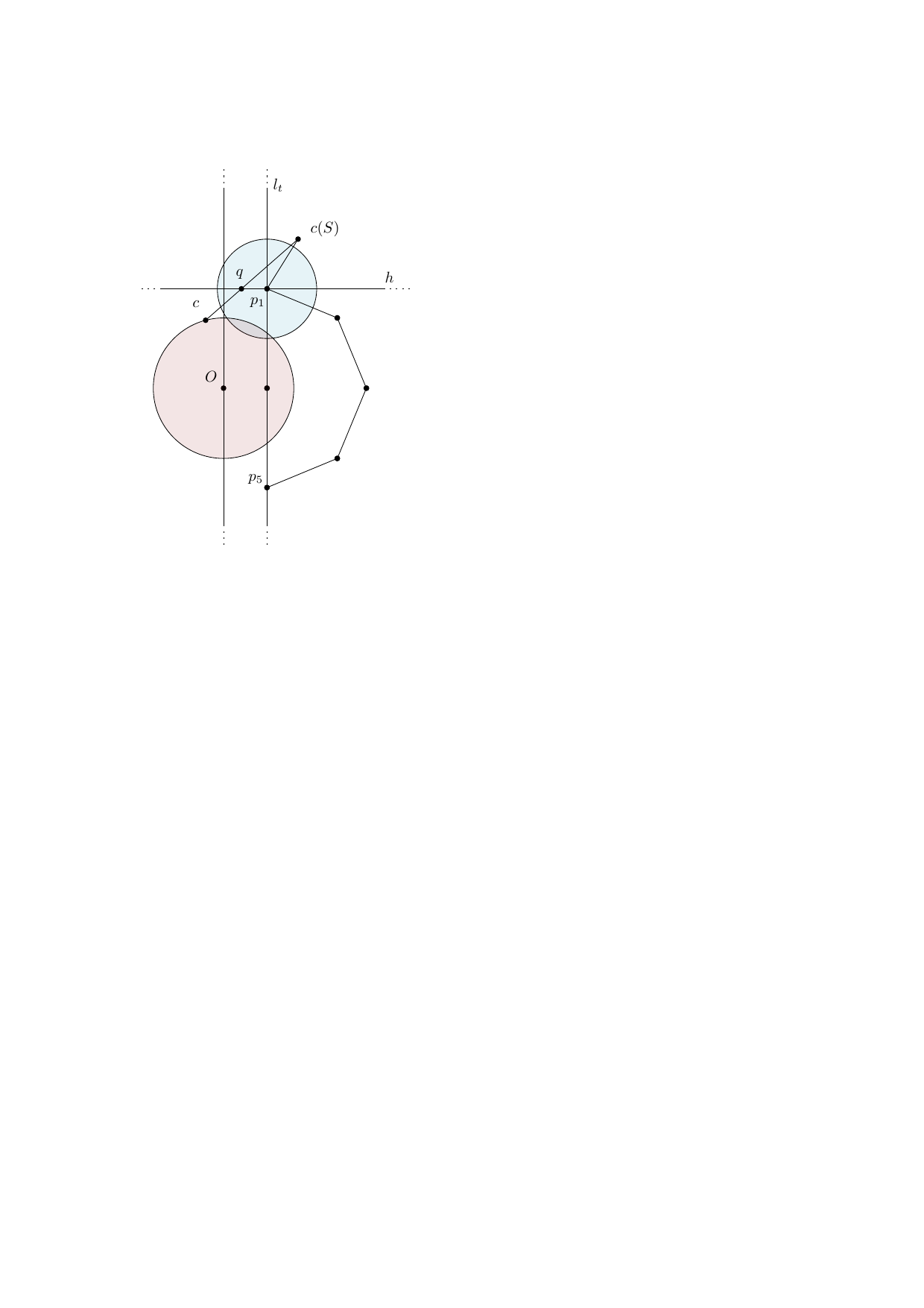}
    \caption{Representation of the centre of a square that is not hit by any point in $H_1$.}
    \label{fig:cover_ct}
\end{figure}

Now, it is easy to see that, $\dist(c(S),q) > \dist(c(S), p_1) > \frac{1}{2}$ where the second inequality comes from the assumption $c(S) \not \in B_2(p_1,\frac{1}{2})$. Moreover,  $\dist(q,c) \geq 1 - \frac{\sqrt{2}}{2}$ since $q\in h$ and $c \in C \subset B_2(O, \frac{\sqrt{2}}{2})$ lies below the horizontal line $\{(x,y) : y = \frac{\sqrt{2}}{2}\}$.
Finally, 
\begin{equation*}
    \dist(c(S),c) = \dist(c(S),q) + \dist(q,c) > \frac{1}{2} + 1 - \frac{\sqrt{2}}{2} > \frac{\sqrt{2}}{2}.
\end{equation*}
\end{cpf}

\subsubsection*{Proof of Theorem~\ref{thm:hitting}.}
Let $S$ be a square minimizing $l(S')$, $S'\in\Sscr$, and for the simplicity of the notations, choose $l(S')=1$ to be of unit length.

Now for each square $Q \in N(S)$, fix a point $p\in Q \cap S$ and let $Q'$ be a unit square containing $p$ and completely contained in $Q$ (see Figure \ref{fig:square_transformation}), and let $N':=\{Q': Q\in N(S)\}$. We say that $Q$ and $Q'$ {\em correspond} to one another. Clearly, each $Q'\in N'$ still intersects $S$, and for each point hitting a subset of $N'$, the same point hits all the corresponding sets in $N(S)$ (\emph{local homothety}, like in the proof of Theorems~\ref{thm:Pach} and~\ref{thm:hit_convexsets}).

Since $\{S\}\cup N'$ contains only unit squares, by Lemma \ref{lem:hitting_neighbours} $\tau(N[S]) \leq \tau(N') \leq 10$. Since we can iterate this procedure to any subfamily of $\Sscr$, we proved that $\Sscr$ is hitting-$10$-degenerate, and Lemma \ref{lem::induction_idea} concludes the proof of the first statement of Theorem~\ref{thm:hitting}. 
\begin{figure}[ht]
	\centering
	\includegraphics[scale =0.65]{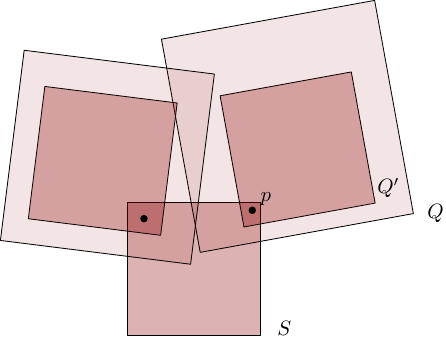}
	\caption{Transformation of the squares in $N(S)$.}
	\label{fig:square_transformation}
\end{figure}

The proof of the second statement is analogous and even simpler since now we directly have only unit squares, so by the 
second part of Lemma \ref{lem:hitting_neighbours}, there exists a square $S$ with the property $\tau(N[S]) \leq 6$. Therefore, $\Sscr$ is now hitting-$6$-degenerate, and we conclude by Lemma \ref{lem::induction_idea} again. 

\subsection{Lower bound for hitting}\label{sub:lower} 
We continue now with the proof of the lower bound stated in Theorem~\ref{thm:lower}:

\begin{proof}[Proof of Theorem~\ref{thm:lower}]
The proof describes a family of unit squares with $\nu=1$ and $\tau=3$ and one of squares of arbitrary sides with $\nu=1$ and $\tau=4$.
The constructions borrow ideas of \cite{2021_Biniaz} for unit disks or \cite{2011_Dumitrescu} for translations of a triangle. 

Start by Figure~\ref{fig:tauVSnu} (a), and note that the three squares of the figure {\em can be hit by two points only if one of these points is a vertex of the triangle formed by one side of each.}   

\begin{figure}[ht]
	\centering
	\includegraphics[scale =0.58]{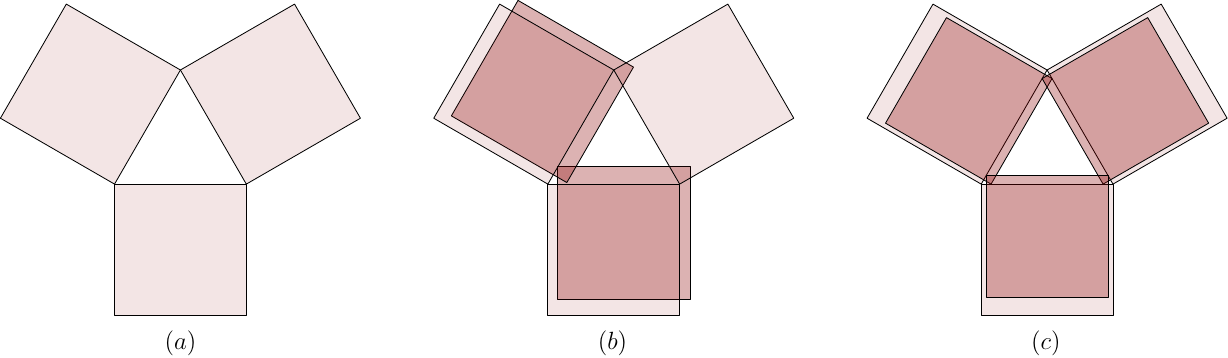}
	\caption{Pairwise intersecting squares, i.e., $\nu=1$, and $\tau=2$ (a), $\tau=3$  (b),   (c).}
	\label{fig:tauVSnu}
\end{figure}

Add now a small shift of each of the two squares at each vertex as  (b) shows for one of the three vertices. Six squares are added in this way, altogether, we mean a family of nine squares in the example (b). {\em We have  $\tau=3$ for this family, since deleting any vertex of the triangle (as proved to be necessary) we need two more vertices.} 

The same holds for the six squares of two different sizes with $\nu=1$ of (c), so we have $\tau=3$ for the same reason, and we use this to continue the construction with more squares for having $\tau=4$ while keeping $\nu=1$. 

\begin{figure}[ht]
	\centering
	\includegraphics[scale =0.58]{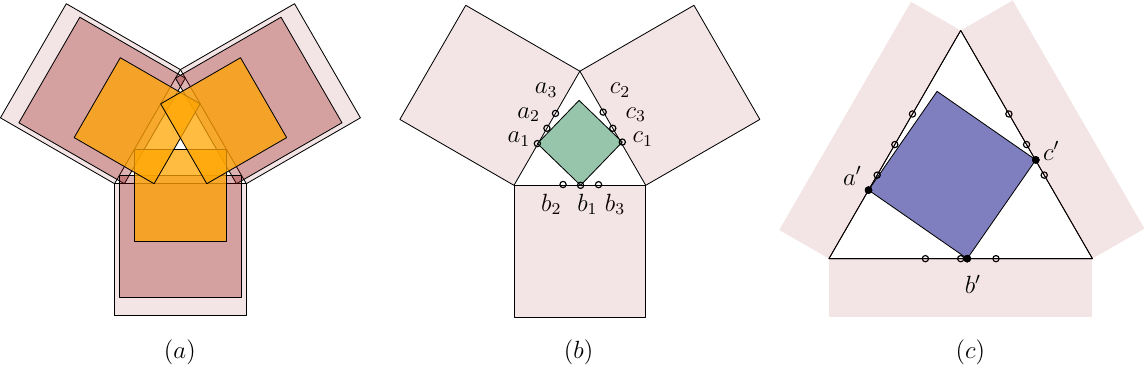}
	
	\caption{$13$ pairwise intersecting squares with $\nu=1$, and $\tau=4$.}
	\label{fig:tau4}
\end{figure}
To this end,  Figure~\ref{fig:tau4} adds three more layers to Figure~\ref{fig:tauVSnu} (c).
Figure~\ref{fig:tau4}~(a) contains   Figure~\ref{fig:tauVSnu}(a), but each of the three squares of the latter is completed to a chain of $3$ squares containing one another. From the largest to the smallest, we refer to these squares as the \emph{pink}, \emph{red}, and \emph{orange} square. Each colour forms a triangle, and the six squares belonging to any pair of colours form a drawing \quotes{isomorphic} to Figure~\ref{fig:tauVSnu}(c). (Mainly: the intersections of different pairs of squares of the same colour are disjoint from one another.) 
Therefore, the $6$ squares of any two colours cannot be hit by less than $3$ points. 
It follows that none of the intersection points of the largest pink squares can be in a hitting set of size three since it does not hit any smaller (red or orange) square. Therefore  {\em a hitting set of size three must contain one point of each pink square.}

Now add three smaller squares that will be referred to as the \emph{green} squares, touching each of the three pink squares at a point, one of them in the midpoint of a side. A green  square intersecting the pink squares in $\{a_1,b_1,c_1\}$ is drawn on Figure~\ref{fig:tau4}~(b).
The other two are symmetric and intersect the pink squares in $\{a_2,b_2,c_2\}$ and $\{a_3,b_3,c_3\}$, respectively.
Since we also have to hit these green squares, from the conclusion of the previous paragraph, we get that {\em the only hitting sets of size $3$ are $\{a_1, b_2, c_3\}$ and the \emph{six} different symmetric versions of it}.

Finally, we have obtained three squares of each of the four colours, and we proceed by adding a thirteenth square that intersects the pink squares at points $\{a',b',c'\}$, as shown in Figure~\ref{fig:tau4}~(c).
These three points can be chosen to be distinct from $a_1$, $b_1$, $c_1$, $a_2$, $b_2$, $c_2$,  $a_3$, $b_3$, and $c_3$.
Thus, the last added square intersects all the previous ones, but it is not hit by any of the six hitting sets described before, obliging a fourth point for the hitting set.
\end{proof}

Taking disjoint copies of the $13$ squares of Figure~\ref{fig:tau4}, and of the $9$ of Figure~\ref{fig:tauVSnu}~(b), we immediately obtain the following result:

\begin{corollary} There exists families of squares with arbitrarily large values of $\nu$ such that $\tau= 4\nu$, and also of squares of  equal size such that $\tau= 3\nu$.
\end{corollary}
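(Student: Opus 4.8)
The plan is to take disjoint copies of the extremal families constructed in Theorem~\ref{thm:lower} and argue that both $\tau$ and $\nu$ are additive over connected (indeed, over intersection-disjoint) components. Concretely, for the statement about squares of equal size, let $\Cscr_0$ be the family of $9$ pairwise intersecting unit squares from Figure~\ref{fig:tauVSnu}~(b), for which $\nu(\Cscr_0)=1$ and $\tau(\Cscr_0)=3$. For an integer $m\ge 1$, place $m$ translated copies $\Cscr_0^{(1)},\dots,\Cscr_0^{(m)}$ of $\Cscr_0$ so far apart from one another that no square of $\Cscr_0^{(j)}$ meets any square of $\Cscr_0^{(k)}$ for $j\ne k$; this is possible since each $\Cscr_0^{(j)}$ is bounded. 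Set $\Cscr:=\bigcup_{j=1}^m \Cscr_0^{(j)}$. For the statement about squares of varying sizes, do exactly the same starting from the $13$-square family $\Dscr_0$ of Figure~\ref{fig:tau4}, which has $\nu(\Dscr_0)=1$ and $\tau(\Dscr_0)=4$.

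The key step is the additivity. Since the intersection graph $G(\Cscr)$ is the disjoint union of the graphs $G(\Cscr_0^{(j)})$, a packing in $\Cscr$ is a union of packings, one in each $\Cscr_0^{(j)}$, so $\nu(\Cscr)=\sum_{j=1}^m \nu(\Cscr_0^{(j)})=m$; here I use that each $\Cscr_0^{(j)}$ is a translate of $\Cscr_0$ and translation preserves $\nu$. For the hitting number, $\tau(\Cscr)\le \sum_j \tau(\Cscr_0^{(j)})=3m$ by taking the union of optimal hitting sets. For the reverse inequality, given any hitting set $H$ of $\Cscr$, each square of $\Cscr_0^{(j)}$ lies in the convex hull region of that copy, and these regions are pairwise disjoint; more carefully, one restricts $H$ to points that can possibly hit a square of $\Cscr_0^{(j)}$, and since a point hitting a square of $\Cscr_0^{(j)}$ cannot hit a square of $\Cscr_0^{(k)}$ (their squares are disjoint), the sets $H_j:=\{h\in H: h \text{ hits some square of } \Cscr_0^{(j)}\}$ are pairwise disjoint and each $H_j$ is a hitting set of $\Cscr_0^{(j)}$, whence $|H|\ge\sum_j |H_j|\ge \sum_j \tau(\Cscr_0^{(j)})=3m$. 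Therefore $\tau(\Cscr)=3m=3\nu(\Cscr)$, and letting $m\to\infty$ gives arbitrarily large $\nu$ with $\tau=3\nu$. The same argument verbatim, with $\Dscr_0$ in place of $\Cscr_0$ and the constant $4$ in place of $3$, yields families of (general) squares with $\tau=4\nu$ and $\nu$ arbitrarily large.

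There is essentially no obstacle here: the only thing to be slightly careful about is the claim that a point hitting a square of one copy cannot hit a square of another copy, which is exactly the separation we built in by placing the copies far apart, and the claim that translation does not change $\nu$ or $\tau$, which is immediate since it is an isometry of the plane. Thus the corollary follows directly from Theorem~\ref{thm:lower} together with the (completely routine) additivity of $\nu$ and $\tau$ over intersection-disjoint subfamilies.
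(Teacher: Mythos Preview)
Your proposal is correct and follows exactly the approach of the paper, which simply states that taking disjoint copies of the $13$-square family of Figure~\ref{fig:tau4} and of the $9$-square family of Figure~\ref{fig:tauVSnu}~(b) immediately yields the corollary. You have merely spelled out the routine additivity of $\nu$ and $\tau$ over intersection-disjoint subfamilies that the paper leaves implicit.
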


\subsection{Colouring squares}\label{sub:colouring}

 In this section, we colour families of squares to bound their chromatic number for a proof of Theorem~\ref{thm:colouring}.
 The framework we use is a well-known averaging argument (see \cite{1960_Asplund, 2021_Chalermsook}), which calls for the simple statement of Lemma~\ref{lem:cross}. Our proof of this lemma turned out to require more effort than one could guess at first sight:
 
Let $R$ and $S$ be two squares in the plane, we say that $R$ and $S$ are \emph{crossing}
if $R\cap S$ is non-empty and does not contain any of the eight vertices of the two squares.

\begin{lemma}\label{lem:cross}
Given two crossing squares,  each of them contains the centre of the other.
\end{lemma}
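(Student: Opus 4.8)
The plan is to argue by contradiction: suppose $R$ and $S$ are crossing but, say, $R$ does \emph{not} contain the centre $c(S)$ of $S$. The key geometric observation is that if a square $S$ is convex and a line misses its centre, then that line can meet the boundary of $S$ in a restricted way. More precisely, I would fix one side $[x,y]$ of $R$, extend it to the full line $L$ supporting that side, and consider the closed half-plane $H$ bounded by $L$ that contains $R$. Since $c(S)\notin R$, at least one supporting line of a side of $R$ strictly separates $c(S)$ from (the interior of) $R$; call that line $L$ and let $H$ be the open half-plane on the $R$-side. The idea is that the portion of $S$ lying in $H$ is then a proper ``cap'' of $S$ cut off by a line not through its centre, and such a cap of a square has diameter strictly less than $\sqrt2\,\ell(S)$ in a way that forces a vertex of $S$ to be the relevant extreme point — but crossing forbids that.

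The cleaner route, which I would actually carry out, is: let $L$ be a line supporting a side of $R$ with $R$ on one (closed) side and $c(S)$ strictly on the other. Then $L$ cuts $S$ into two pieces; the piece $S'$ on the $R$-side does not contain $c(S)$. Now $R\cap S \subseteq R\cap S' \subseteq S'$. A line missing the centre of a square $S$ separates the square into two parts, and the part \emph{not} containing the centre has the property that its closure, intersected with any other convex set avoiding the four vertices of $S$, is... — here I need the combinatorial heart. Concretely: the region $S'$ is a (possibly degenerate) pentagon or triangle whose vertices are vertices of $S$ together with the two points where $L$ meets $\partial S$. Because $c(S)\notin S'$ and $c(S)$ is the centre, $L$ must separate at least one vertex of $S$ from the other three; in fact, since $S'$ is the ``small'' side, $S'$ contains at least one full vertex of $S$ in its interior unless $S'$ is extremely thin. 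I would then show that $R\cap S$, being contained in $S'$ and being nonempty, cannot avoid all vertices of $S$ once $R$ also occupies enough of $H$ to meet $S'$ — leveraging that $R$ itself is a unit-scale square reaching across $L$.

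The main obstacle — and where I expect the real work to be — is making the ``the small side of $S$ cut by a centre-missing line must contain a vertex of $S$ inside $R$'' step rigorous, because a line can pass arbitrarily close to a vertex of $S$ and cut off a tiny sliver that contains no vertex at all; the sliver only contains a vertex if the line is ``deep enough'' into $S$. So the argument cannot be purely about $S$: it must use that $R$ is a square of comparable size and that $R\cap S\ne\emptyset$. I would therefore argue symmetrically using \emph{both} separating lines: if $c(S)\notin R$, pick the side-line $L_1$ of $R$ separating $c(S)$ from $R$; but also, $R\cap S$ is a nonempty convex set not meeting the vertices of $R$, so $R\cap S$ lies strictly inside $R$, meaning every side-line of $R$ that $S$ reaches is crossed transversally by $S$. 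One then shows $S$ must cross two opposite sides of $R$ (otherwise $S$ is confined near one side of $R$ and, combined with $S$ being a unit square, a vertex of $R$ ends up inside $S$, contradicting crossing). Crossing two opposite sides of $R$ forces the strip between those sides — which contains the centre $c(R)$ — to be traversed by $S$, and by convexity $c(R)\in S$; running the same argument with the roles reversed (using that $S$ crosses $R$ so $R$ crosses two opposite sides of $S$) gives $c(S)\in R$. Assembling these two "crosses two opposite sides'' claims, proving them from "the intersection avoids all eight vertices,'' and handling the degenerate cases where a side of $S$ is parallel to a side of $R$, is the technical crux; everything else is convexity bookkeeping.
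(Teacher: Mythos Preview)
Your proposal is an outline with several acknowledged gaps, not a proof, and the route you settle on in the last paragraph has genuine problems.

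First, a concrete error: you twice invoke that $S$ is ``a unit square.'' The lemma is about arbitrary squares, possibly of very different sizes, so any argument relying on equal side lengths fails.

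Second, the two structural claims you lean on are both unjustified. The claim ``$S$ must cross two opposite sides of $R$, otherwise a vertex of $R$ ends up inside $S$'' is precisely the hard step, and your parenthetical (``$S$ is confined near one side of $R$'') is not an argument; a large $S$ can intersect two adjacent sides of $R$ in a thin wedge far from the shared vertex. The subsequent claim ``crossing two opposite sides of $R$ forces $c(R)\in S$'' is also not justified: $S$ intersecting both the top and bottom sides of $R$ only tells you that $S$ spans the strip between them, not that it passes through the middle point $c(R)$; you still need to locate the crossings relative to the midpoints of those sides, which you never do. Finally, your logical structure drifts: you begin with ``suppose $c(S)\notin R$'' but your final plan is a direct argument that never uses that assumption, so the opening discussion of the separating line $L$ is disconnected from the conclusion.

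The paper's proof keeps your good first move (pick $p\in R\cap S$ and the side $[u,v]$ of $R$ whose supporting line $L$ separates $p$ from $c(S)$) but then exploits it directly instead of switching strategies. Since $u,v\notin S$, the chord $L\cap S=[u',v']$ lies in the open segment $(u,v)$. Now split into two cases according to whether $[u',v']$ joins two adjacent or two parallel sides of $S$. If adjacent, the cap of $S$ on the $R$-side of $L$ is a right triangle with right-angle vertex $w$ (a vertex of $S$); by Thales' theorem, $w$ lies in the open half-disk on $[u,v]$, which is contained in $R$ --- contradiction. If parallel, then $l(S)\le \dist(u',v')<\dist(u,v)=l(R)$, and the two vertices of $S$ cut off by $L$ on the $R$-side are then forced, by this size inequality, to lie inside $R$ --- again a contradiction. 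This is exactly the ``small side of $S$ must contain a vertex inside $R$'' step you identified as the crux, resolved by a clean two-case split rather than by trying to force $S$ across opposite sides of $R$.
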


\begin{proof} Let $R$ and $S$ be two crossing squares, $p\in R\cap S$, and suppose for a contradiction that $c(S)\notin R$. Then the segment $[p,c(S)]$ crosses the boundary of $R$ in a side of $R$, let $[u,v]$ be this side, where $u$ and $v$ are vertices of $R$, and $L$ the line that contains $[u,v]$. 
Since $S$ contains neither $u$ nor $v$,  $L\cap S$ is a non-empty  subsegment of the open interval $(u,v)$, denote its endpoints by $u'$ and $v'$.
We distinguish two  cases:

\begin{figure}[ht]
	\centering
	\includegraphics[scale =1]{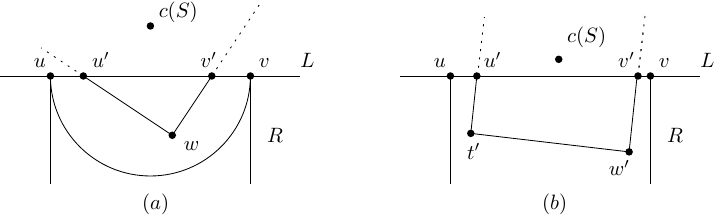}
	\caption{The two kinds of crossing intersections.
	}
	\label{fig:Lemma4}
\end{figure}

\smallskip
\noindent{\bf Case a}: 
{\em The segment $[u',v']$ joins two intersecting sides of $S$} (Figure \ref{fig:Lemma4}(a)).
\smallskip

Then the intersection of $S$ with the half-plane limited by $L$ and not containing $c(S)$ is a right-angled triangle $u'v'w$, where $w$ is a vertex of $S$.  So $w$ is separated from $c(S)$ by $L$, that is,  $w$  is on the same side of $L$ as  $R$. 

Now by the only if part of Proposition \ref{prop:Thales}  applied to $a=u$, $b=v$, and $c=w$, $w$ lies in the open half-disk with diameter $[u,v]$. However, this half-disk is fully contained in $R$, and contains the vertex $w$ of $S$,  contradicting that $R$ and $S$ are crossing.

\smallskip
\noindent{\bf Case b}: 
{\em The segment $[u',v']$ joins two parallel sides of $S$} (Figure \ref{fig:Lemma4}(b)).
\smallskip

Then $L$ separates $c(S)$ and two vertices $t'$ and $w'$ of $S$. Assume that $t'$ lies on the same side of $S$ as $u'$,  and $w'$ on the same side as $v'$. 
Both $t'$ and $w'$ are in the same half-plane bounded by $L$, as $R$, and also in the same half-plane as $R$, bounded by the line through the side of $R$ parallel to $L$, since otherwise  $\dist(u',t')\ge l(R)$ or $\dist(v',w')\ge l(R)$ respectively, contradicting $l(S)  \le \dist(u',v')< l(R).$ 

Furthermore, {\em at least one of two vertices $t'$ and $w'$ of $S$ must also be in the intersection of the half-spaces limited by the two parallel sides of $R$ perpendicular to $L$, containing $R$.} Then $t'$ or $w'$ is in $R$, contradicting that $R$ and $S$ are crossing. Indeed,  if $t'$ and $w'$ are in different half-spaces not containing $R$ then $\dist(t',w') = l(S)> l(R)$, which is the same contradiction again;  or if they are in the same half-space not containing $R$ then either $\dist(u',t')$ or $\dist(v',w')$ would be strictly larger than $\dist(u',v')\ge l(S)$.
\end{proof}

\subsubsection*{Proof of Theorem~\ref{thm:colouring}}
To prove the first part of Theorem \ref{thm:colouring}, we adapt the \emph{averaging argument} used by Asplund and Gr\"unbaum in \cite{1960_Asplund}, exhibited in an elegant way by Chalermsook and Walczak \cite[Lemma 3]{2021_Chalermsook}.

Let $\Sscr$ be a family of squares, $G:=G(\Sscr)$ the intersection graph of $\Sscr$, $n := |\Sscr|$. 
Each point of a square can be contained
in at most $\Delta(\Sscr) - 1$ other squares with some strict inequalities at the borders (for example, the left-most point that is a vertex of a square cannot be contained in any other square).

For each square $R\in \Sscr$ and $R' \in N(R)$, counting {\em twice}   the pairs $(v, R')$, if $v$ is a vertex of $R$ and  $v \in R'$   and {\em only once } if $v$ is the centre of $R$  and $v\in R'$,  we get at most $(2\times 4 + 1)$ times the maximum degree of these vertices minus one,  for each square. Adding these numbers for all squares,  the sum we get is strictly less than $9n(\Delta(\Sscr) - 1)$.

This sum counts each edge of $G$ at least twice because if two squares have a vertex-intersection, then there exists a pair $(v,R)$,  such that $v$ is a vertex of one of them, $R$ is the other and $v\in R$,  and this pair is counted twice; if they have a crossing intersection, then by Lemma \ref{lem:cross} applied twice, both centres are in the intersection. 
Hence, $2|E(G)| < 9n (\Delta(\Sscr) - 1)$,
and therefore the 
minimum degree of $G$ is strictly less than $9(\Delta(\Sscr) - 1)$. Applying this to any subgraph, we get that $G$ is $k$-degenerate with $k < 9(\Delta(\Sscr) - 1)$, and hence  $9(\Delta(\Sscr) - 1)$-colourable.    

To prove the second part of the theorem, let now $\Sscr$ be a family of unit squares.
Lemma~\ref{lem:hitting_neighbours} implies that $\Sscr$ is hitting-$6$-degenerate. Then, by Lemma ~\ref{lem:degimplication}, $\Sscr$ is also ($6\Delta(\Sscr)-1$)-degenerate and therefore $6\Delta(\Sscr)$-colourable.

%
\qed

\smallskip

Note that both parts of Theorem~\ref{thm:colouring} rely on degeneracy, applying two distinct methods though: an averaging argument for squares of different sizes and Lemma~\ref{lem:hitting_neighbours} specifically for unit squares.
While both techniques are applicable to both cases, the results differ in efficiency. For arbitrary squares, Lemma~\ref{lem:hitting_neighbours} yields an upper bound of $10\Delta$, whereas the averaging argument provides the tighter bound $9(\Delta - 1)$ of Theorem~\ref{thm:colouring}. However, when considering unit squares, the averaging argument does not straightforwardly improve the  $9(\Delta - 1)$ bound, whereas the more sophisticated Lemma~\ref{lem:hitting_neighbours} yields the better bound $6\Delta$ with the exception of $\Delta = 2$, when clearly $9(\Delta - 1) < 6\Delta$.

\section{Conclusion and open questions}\label{sec:end} 

In this paper, we provided the best linear bounds we could for the hitting number of squares in the plane: Theorem \ref{thm:hitting} provides the upper bounds, and Theorem~\ref{thm:lower} the lower bounds, both proved in Section~\ref{sec:proofs}. These establish the worst $\tau/\nu$ ratio in the interval $[3,6]$ for unit squares and in $[4,10]$ for squares in general. Finding the right value remains elusive even for the special case when $\nu = 1$, meaning that the squares are pairwise intersecting.
Gr\"{u}nbaum \cite{1959_Grunbaum} mentioned that any family of pairwise intersecting translates and rotations of a fixed convex set can be hit by a constant number of points, depending on the convex object.
For the particular case of unit disks, H. Hadwiger and H. Debrunner \cite{1955_Hadwiger} provided an exact result showing that any family of pairwise intersecting unit disks can be hit by 3 points and that 3 points are sometimes necessary.
For unit squares 3 points are necessary by 
Theorem~\ref{thm:lower}, and it is not difficult to show that $4$ points are sufficient.
We wonder if this bound can be improved or if a better lower bound can be achieved.
We ask the following questions:

\begin{question}
Can every family of pairwise intersecting unit squares be hit by $3$ points? 
\end{question}

\begin{question}\label{que:dancing_lower}
Is there a family of squares with $\frac{\tau}{\nu} > 4$?
\end{question}

We also wonder whether the best examples for Question 2 can be realized with $\Delta=2$.

The bounds for the hitting number we established are based on hitting-degeneracy  (Lemma \ref{lem::induction_idea}). The greedy local induction of this approach is almost tight:  Figure \ref{fig:limit_degeneracy} presents a unit square $S$ with $\tau(N[S]) = 7$; moreover, removing the two squares strictly to the left of $c(S)$ we have an example of a unit square with left-most centre and $\tau(N[S]) = 5$, which shows the limits of this method.
Nevertheless, there is no reason to think the bounds of Theorem \ref{thm:hitting} cannot be improved with other methods.

\begin{figure}[ht]
	\centering
		\includegraphics[scale =0.45]{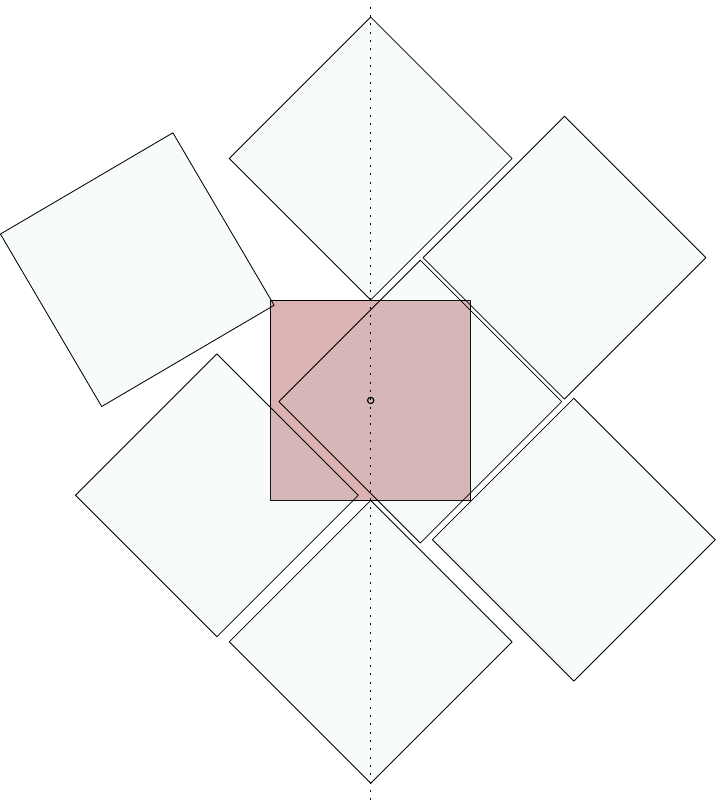}
		\caption{A unit square with seven pairwise disjoint neighbours.}
		\label{fig:limit_degeneracy}
\end{figure}

We would also like to draw attention to two open problems on axis-parallel squares. We pointed out in Table \ref{tab:tau_bounds} that $\tau \leq 4 \nu -3$ for axis-parallel squares and $\tau \leq 2 \nu - 1$ for axis-parallel unit squares. It is not known whether these two bounds are tight, and, surprisingly, the best-known lower bound is only $3/2$, simply achievable with a cycle of five unit squares.
We ask a question analogous to Question~\ref{que:dancing_lower} in the axis-parallel case:
\begin{question}\label{question:hit_squares}
Is there a family of axis-parallel squares with $\frac{\tau}{\nu} > \frac{3}{2}$?
\end{question}
In 1965, Wegner~\cite{1965_Wegner} conjectured that for any family of axis-parallel rectangles, $\tau \leq 2\nu - 1$.
Almost sixty years later, this conjecture is still open. Actually, no constant upper bound for the $\tau/\nu$ ratio for axis-parallel rectangles is known. Having only the weaker bound $\tau \leq 4\nu - 3$ for axis-parallel squares, Wegner's conjecture remains a frustrating challenge even in this special case.

If the Helly property does not hold, the chromatic number is often easier to upper bound by the maximum degree, not necessarily equal to the clique number. Nevertheless, the obvious inequality $\Delta \leq \omega$ immediately implies a bound in terms of $\omega$ as well.
We wonder how big the gap between $\omega$ and $\Delta$ can be for squares:

\begin{question}
What is the maximum size of a family of pairwise intersecting squares (or unit squares) with $\Delta=2$?
\end{question}

For squares, analogous questions to Question~\ref{question:hit_squares} can be asked about the chromatic number as well. 
In the literature, we could not find any lower bound on the ratio $\frac{\chi}{\nu}$ better than $\frac{3}{2}$ neither for translations and homotheties of a fixed convex set nor for their translates and rotations. Families of axis-parallel squares and (not necessarily axis-parallel) unit squares are two particular examples of these families.
We ask the following questions:

\begin{question}
Is there a family of axis-parallel squares, or not necessarily axis-parallel unit squares, with $\frac{\chi}{\omega} > \frac{3}{2}$?
\end{question}

Finding the right value of $\sup \frac{\chi}{\omega}$ for squares is also an open problem in the particular case $\omega=2$, that is, when the intersection graph is triangle-free. Perepelitsa \cite[Corollary 8]{2003_Perepelitsa} showed that any triangle-free family of axis-parallel squares is $3$-colourable. Her result follows directly from Gr\"otzsch's theorem \cite{1959_Grotzsch}, once observed that the intersection graph of such a family is planar. Allowing the squares to rotate, this property is lost, but this change may not significantly impact the chromatic number.
A method similar to the one used by Perepelitsa \cite{2003_Perepelitsa}  can be used to prove 6-colourability. 

\begin{question}\label{question:col_triangle_free}
What is the smallest $k$ such that any triangle-free family of squares is $k$-colourable?
\end{question}

``Triangle-free'' means $\omega\le 2$ here. According to Theorem~\ref{thm:colouring}, the answer to Question~\ref{question:col_triangle_free} is between $3$ and $9$ under the weaker condition  $\Delta\le 2$.

\section*{Acknowledgements}

Marco Caoduro was supported by a Natural Sciences and Engineering Research Council of Canada Discovery Grant [RGPIN-2021-02475].

\bibliographystyle{plainurl}
\bibliography{references}


\appendix

\section{NP-hardness for axis-parallel unit squares}\label{sub:unitsquares} 

The main problems we study in this paper are already NP-hard for the simplest objects, axis-parallel unit squares, justifying approximation algorithms and bounds between packing and hitting numbers or clique and the chromatic numbers instead of computing these numbers exactly or proving min-max theorems. We started the introduction with pointers to most of the hardness results, but we did not find any reference for the NP-hardness of the colouring problems or the maximum independent set problem of unit squares. This Appendix fills this gap. 

We summarize the missing statements in the following theorem.  The proofs are easy using the ideas for unit disk graphs of Clark, Colburn, and Johnson \cite{1990_Clark} further developed by Gr\"af, Stumpf and Wei{\ss}enfels \cite{1998_Graf},   because the graphs used by these articles can be represented as intersection graphs of axis-parallel unit squares as well.

\begin{theorem}
	The $k$-colourability  of axis-parallel unit squares is NP-complete for any fixed $k\in\mathbb{N}, k\ge 3$, and so is the existence of an independent set of unit squares larger than a given size.  
\end{theorem}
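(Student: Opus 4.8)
The plan is to reduce from the known NP-completeness of $k$-colourability and maximum independent set for unit \emph{disk} graphs, following the standard trick that the planar graphs arising in those reductions (Clark--Colbourn--Johnson \cite{1990_Clark}, Gr\"af--Stumpf--Wei{\ss}enfels \cite{1998_Graf}) are exactly the graphs we need to realise. Concretely, I would first recall that the hard instances in those papers are not arbitrary unit disk graphs but ones produced from planar graphs of bounded degree via a \emph{grid embedding}: each vertex is placed at a point of a fine integer grid, each edge is subdivided enough times so that the subdivision vertices also sit on grid points, and two disks are declared adjacent precisely when their centres are grid-neighbours along a drawn edge. The key geometric observation is that along any axis-parallel grid the intersection pattern of unit disks centred at grid points coincides with that of axis-parallel unit squares centred at the same points: choosing the grid spacing so that horizontally/vertically adjacent grid points give intersecting squares but points at grid distance $\ge 2$ (or diagonal, when no edge is drawn there) give disjoint squares. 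Since in these embeddings every edge segment is drawn axis-parallel and non-adjacent routes are kept at distance $\ge 2$ in the grid, replacing each unit disk by the axis-parallel unit square with the same centre preserves the intersection graph.

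The second step is to assemble this into formal reductions. For $k$-colourability with $k\ge 3$ fixed: take a planar graph $G$ of maximum degree bounded by the relevant constant (the reductions in \cite{1990_Clark,1998_Graf} already restrict to such $G$, and $3$-colourability of planar graphs of bounded degree is NP-complete), run the grid-embedding construction to obtain a collection $\Sscr$ of axis-parallel unit squares whose intersection graph $G(\Sscr)$ is a ``nice'' subdivision-type supergraph of $G$, and verify that $G(\Sscr)$ is $k$-colourable iff $G$ is — this is immediate when each edge is subdivided an even number of times (paths of even length between original vertices are $2$-colourable and impose no extra constraint beyond the edge they replace) or handled by the standard parity-fixing gadget used in those papers. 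Membership in NP is trivial: a colouring or an independent set is a polynomial-size certificate checkable in polynomial time by testing pairwise intersection of squares. For the independent set version, the same embedding works: maximum independent set in a planar bounded-degree graph is NP-hard, and under an even subdivision the independence number changes by a controlled additive amount (one extra independent vertex per subdivided edge per two subdivision vertices, a quantity computable from the embedding), so a threshold for $\alpha(G(\Sscr))$ translates to a threshold for $\alpha(G)$.

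The third step is bookkeeping: one must make the grid spacing, the disk/square radius, and the number of subdivisions per edge explicit enough that (a) all centres are rational with polynomially many bits, (b) intended adjacencies materialise and unintended ones do not (this is where one invokes that the embedding routes distinct edges through grid cells that stay $\ell_\infty$-distance $\ge 2$ apart, which already holds in \cite{1998_Graf} after possibly refining the grid), and (c) the size bound in the target instance is computed in polynomial time from the embedding. None of this requires new ideas beyond \cite{1990_Clark,1998_Graf}; it is purely a matter of observing that their figures can be read as unit-square pictures.

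I expect the main obstacle to be purely expository rather than mathematical: pinning down precisely which lemma of \cite{1990_Clark} or \cite{1998_Graf} guarantees that non-adjacent edge-routes stay far apart in the $\ell_\infty$ metric (their statements are phrased for the Euclidean disk model), and checking that the subdivision-parity gadget they use to make the subdivided graph have the same chromatic number / the same independence-number offset still behaves correctly when the underlying objects are squares rather than disks. Since squares are ``fatter'' than the inscribed disks, any intersection guaranteed for disks persists for squares, so the only real worry is spurious intersections between squares whose disks were disjoint; this is ruled out by keeping distinct routes at grid distance $\ge 2$, which the cited embeddings already do (or can be made to do by one uniform refinement of the grid). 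Hence the proof is a short verification that the disk-graph reductions transfer verbatim, together with the NP membership remark. \qed
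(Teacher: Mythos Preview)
Your high-level strategy matches the paper's: both lean on the Clark--Colbourn--Johnson and Gr\"af--Stumpf--Wei{\ss}enfels reductions and observe that the relevant gadgets live on an axis-parallel grid, hence can be realised by axis-parallel unit squares. The difference is in how the geometric realisation is carried out. The paper does \emph{not} argue that one can take the unit-disk layout and replace each disk by the square with the same centre; instead it extracts only the \emph{combinatorial} output of \cite{1998_Graf} (the auxiliary graph $G'$ with its horizontal/vertical edge structure and its crossing gadgets $G_u$) and then builds, from scratch, explicit unit-square ``$k$-chains'' and ``$k$-chain-edges'' (Figures in the appendix) to realise each horizontal or vertical segment, including the $(k-1)$-clique bulges by stacking $k-1$ copies of a single square. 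This sidesteps entirely the question of whether the $\ell_2$ and $\ell_\infty$ intersection patterns agree on the particular disk centres chosen in \cite{1998_Graf}.

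Your centre-replacement route is not wrong in spirit, but the step you flag as ``purely expository'' is where the actual work lies. The Gr\"af et al.\ layout is not just a sparse grid path: the $(k-1)$-cliques and the degree-$4$ crossing gadgets pack several disks into a small neighbourhood, and ``disk-disjoint $\Rightarrow$ square-disjoint'' fails exactly for nearby diagonally offset centres, which is precisely the configuration such gadgets tend to use. So you would need to go gadget by gadget (or redo the layout with an $\ell_\infty$ separation guarantee), at which point you are essentially reconstructing the paper's explicit square gadgets anyway. Your treatment of the independent-set half (even subdivisions shift $\alpha$ by a computable amount) is fine and in fact more explicit than what the paper writes out; the paper's proof text focuses on $k$-colourability and leaves the independent-set clause implicit.
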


\smallskip\noindent {\em Proof.}  
Let $k\in \mathbb{N}, k\ge 3$ and let us reduce first the $k$-colourability of a graph -- a well-known NP-complete problem \cite{1976_Garey} -- to the $k$-colourability of axis-parallel unit squares.
The main gadget of the reduction is a particular family of squares used to replace the edges of the graph. We introduce it and then overview its application.

Note that the ``chain of diamonds" of the graph in Figure~\ref{fig:forcing_chain}~(a) forces the endpoints $x$ and $y$ to have the same colour in any $3$-colouration. This graph can be straightforwardly realized with unit squares forming a horizontal or vertical ``strip''  as long as necessary (Figure~\ref{fig:forcing_chain}~(b)). Furthermore, in Figure~\ref{fig:forcing_chain}~(a) each thick (red) edge in the middle can be replaced by a $(k-1)$-clique which is the intersection graph of Figure~\ref{fig:forcing_chain}~(b) with the dark (red) squares in the middle replaced by $k-1$ squares with the same intersections and intersecting one another (it can be $k-1$ copies of the same square like on the right of Figure~\ref{fig:forcing_edge}). We will call this set of squares a  {\em $k$-chain of unit squares}. 

Furthermore, for an edge of a graph drawn in the plane alternating between horizontal and vertical segments, not crossing any other edge,  each horizontal and vertical segment can be replaced by a $k$-chain, and the endpoint of a chain can be identified by the starting point of the next one as in  Figure~\ref{fig:forcing_edge}.    The obtained graph is still an {\em $k$-chain}.
Clearly,  \emph{the endpoints $x$ and $y$ of a $k$-chain   have the same colour in any $k$-colouration.}

\begin{figure}[ht]
	\centering
	\includegraphics[scale=0.8]{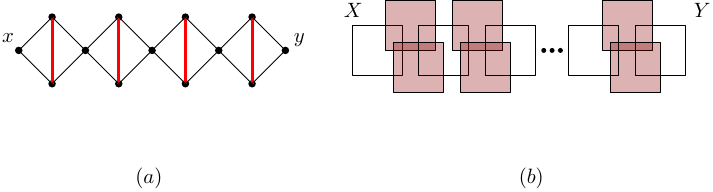}
	\caption{(a) A $3$-chain forcing vertices $x$ and $y$ to have the same colour in every $3$-coloration  (b) Unit square representation of a $3$-chain.}
	\label{fig:forcing_chain}
\end{figure}

Add now a point $z$ to this figure, and join it only to $y$: we know then that $x$ and $z$ must have different colours in any colouration (see Figure~\ref{fig:forcing_edge}). A family of unit squares whose intersection graph is a $k$-chain with such an additional vertex and edge will be called an {\em $k$-chain-edge of unit squares}. 

\begin{figure}[ht]
	\centering
	\includegraphics[scale=0.8]{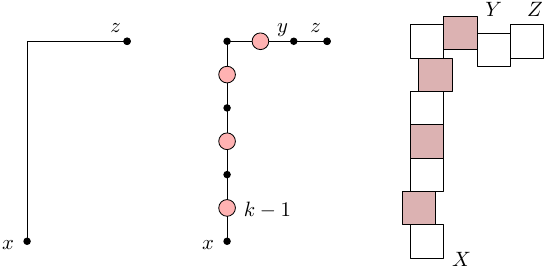}
	\caption{From left to right: an edge of $G$, a $k$-chain-edge, and a $k$-chain-edge of unit squares.}
	\label{fig:forcing_edge}
\end{figure}

Having noticed the realizability of $k$-chains as intersection graphs of unit squares, we can now mimic the proof of \cite{1990_Clark} or, more generally, of \cite{1998_Graf} to finish the proof: 

In the same way as \cite{1990_Clark} does for unit disks, the $3$-colourability of a planar graph with maximum degree $3$ -- already an NP-hard problem \cite{1976_Garey} -- can be reduced to the $3$-colourability of unit square graphs in polynomial time, using $3$-chains of unit squares. Indeed, according to \cite{1981_Valiant}, each planar graph $G$ of maximum degree three can be drawn in the plane without crossing edges in such a way that each edge consists of at most four horizontal or vertical sequences. This embedding can be realized in polynomial time \cite{1998_Bieldi}. It follows then easily from the above observations that each edge can be replaced by a $3$-chain-edge of unit squares, and the $3$-colourability of the constructed family of unit squares is equivalent to that of $G$.

Now, for arbitrary  $k\ge 3$ (also allowing equality, i.e., the $k=3$ case is reconsidered, the above discussion serves merely the pedagogical reasons for introducing some of the ideas and the main square-specific gadgets) and graph $G$, we reduce the colouring problem of $G$ to the colouring problem of unit squares. 

For this, the essential part of the work has already been done by Gr\"af, Stumpf and Wei{\ss}enfels \cite{1998_Graf}: they show that a graph $G'=(V'E')$ drawn in the plane can be constructed from $G$ in polynomial time (we will use it as a black box, but for an approximate illustration see Figure~\ref{fig:reduction}), where $V'$ is partitioned into a path $P_v$ for each $v\in V(G)$, a set $T$ of degree two vertices, and sets  $V_u$ indexed by the elements $u\in U$ of a set $U$ verifying the following properties:  
	
\begin{itemize}
	\item [(i)] Contracting  $G_u$ (Figure~\ref{fig:reduction} left)  for each $u\in U$,  the new vertices, denote them $x_u$ $(u\in U)$  are of degree $4$, and all the edges of the obtained graph are horizontal or vertical. Denote by $H$ the resulting graph drawn in the plane (Figure~\ref{fig:reduction} right). 
	
	 	\item [(ii)]If in $H$ all edges of the path $P_v$ $(v\in V(G))$ are contracted for each $v\in V(G)$,  furthermore, {\em splitting off} the two edges incident to $t\in T$, that is, replacing them by one edge between the two endpoints different from $v$, and doing the same with the opposite edges incident with $x_u$ $(u\in U)$ in the planar drawing of $H$ (and therewith losing planarity), we get back $G$. 
	 	
	 	Each edge $e=xz\in E(G)$ $(x,y\in V(G))$ corresponds to a path between $x$ and $z$ in $H$ whose edges are a set  $F_e$ of edges in $G'$. 

\item [(iii)] Replacing each horizontal and each vertical line of $G'$ by a $k$-chain, but for one edge in each $F_e$ a $k$-chain-edge, the constructed graph $G''$  is $k$-colourable if and only if $G$ is $k$-colourable.
\end{itemize}

\begin{figure}[ht]
	\centering
	\includegraphics[scale=0.9]{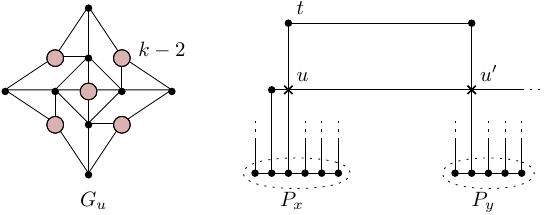}
	\caption{On the left, the graph isomorphic to all the graphs $G_u$ $(u\in U)$, the (red) disks represent $k-2$ cliques. On the right, a representation of a portion of $H$. The vertices $x,y \in V(G)$ are replaced with two paths $P_x, P_y \subset V'$ of size $d_G(x)$ and $d_G(y)$, respectively.}
	\label{fig:reduction}
\end{figure}


Figure \ref{fig:reduction} illustrates the graphs that make possible to prove these properties. 

Then \cite{1998_Graf} continue by realizing $k$-chains and $k$-chain-edges as intersection graphs of unit disks, and substitute these for the edges of $H$. By $(i)$ all edges of $H$ are horizontal or vertical, and this ensures that the substitution can be done. 

Now, $k$-chains and $k$-chain-edges of unit squares replace unit disks (Figure~\ref{fig:forcing_chain}); the endpoint of a chain can be identified by the starting point of the next one, again as in  Figure~\ref{fig:forcing_edge}, and clearly, this does not take more computational time than the disk version. Thus, a unit square graph is constructed in polynomial time, whose $k$-colourability   is equivalent to that of $G$.\qed

\end{document}